\definecolor{darkgreen}{rgb}{0,0.5,0}
\newtheorem{theorem}{Theorem}[section]
\newtheorem{lemma}[theorem]{Lemma}
\newtheorem{meta-theorem}[theorem]{Meta-Theorem}
\newtheorem{remark}[theorem]{Remark}
\newtheorem{proposition}[theorem]{Proposition}
\newtheorem{observation}[theorem]{Observation}
\newtheorem{definition}[theorem]{Definition}
\newcommand{\eps}{\varepsilon}
\let\E\undefined
\DeclareMathOperator{\E}{\mathbb{E}}
\newcommand{\LCS}{\textnormal{LCS}}
\newcommand{\Count}{\textsf{count}}
\newcommand{\bias}{\textsf{bias}}
\newcommand{\adv}{\textsf{adv}}
\newcommand{\Var}{\textnormal{Var}}
\newcommand{\Decode}{\textnormal{Dec}}
\newcommand{\mC}{\mathcal{C}}
\newcommand{\textin}{\textnormal{in}}
\newcommand{\textout}{\textnormal{out}}
\newcommand{\exclude}[1]{}
\newcommand{\FullOrShort}{short}
  \newcommand{\fullOnly}[1]{#1}	
  \newcommand{\shortOnly}[1]{}
    \newcommand{\fullOnly}[1]{}
	\newcommand{\shortOnly}[1]{#1}
  \newcommand{\STOConly}[1]{}
\newcommand{\noSTOC}[1]{#1}
\begin{document}

\date{}

\title{Optimally Resilient Codes for List-Decoding\\ from Insertions and Deletions}


\author{Venkatesan Guruswami\footnote{Supported in part by NSF grant CCF-1814603.}
\\Carnegie Mellon University\\ \texttt{venkatg@cs.cmu.edu} \and Bernhard Haeupler\footnote{Supported in part by NSF grants CCF-1527110, CCF-1618280, CCF-1814603, CCF-1910588, NSF CAREER award CCF-1750808 and a Sloan Research Fellowship.}
\\Carnegie Mellon University\\ \texttt{haeupler@cs.cmu.edu} \and Amirbehshad Shahrasbi\footnotemark[2]\\Carnegie Mellon University\\ \texttt{shahrasbi@cs.cmu.edu}}

\maketitle
\thispagestyle{empty}

\begin{abstract}
We give a complete answer to the following basic question: ``What is the maximal fraction of deletions or insertions tolerable by $q$-ary list-decodable codes with non-vanishing information rate?''

\smallskip
This question has been open even for binary codes, including the restriction to the binary insertion-only setting, where the best-known result was that a $\gamma \leq 0.707$ fraction of insertions is tolerable by some binary code family. 

\smallskip
For any desired $\eps > 0$, we construct a family of binary codes of positive rate which can be efficiently list-decoded from any combination of $\gamma$ fraction of insertions and $\delta$ fraction of deletions as long as $ \gamma + 2\delta \leq 1 -\eps$. On the other hand, for any $\gamma, \delta$ with $\gamma + 2\delta = 1$ list-decoding is impossible.
Our result thus precisely characterizes the feasibility region of binary list-decodable codes for insertions and deletions.

\smallskip
We further generalize our result to codes over any finite alphabet of size $q$. Surprisingly, our work reveals that the feasibility region for $q>2$ is \emph{not} the natural generalization of the binary bound above. 
 We provide tight upper and lower bounds that precisely pin down the feasibility region, which turns out to have a $(q-1)$-piece-wise linear boundary whose $q$ corner-points lie on a quadratic curve.

 \smallskip
 The main technical work in our results is proving the existence of code families of sufficiently large \emph{size} with good list-decoding properties for any combination of $\delta,\gamma$ within the claimed feasibility region. We achieve this via an intricate analysis of codes introduced by [Bukh, Ma; SIAM J. Discrete Math; 2014]. Finally, we give a simple yet powerful concatenation scheme for list-decodable insertion-deletion codes which transforms any such (non-efficient) code family (with vanishing information rate) into an efficiently decodable code family with constant rate.

\end{abstract}
	
\newpage
\tableofcontents
\thispagestyle{empty}

\newpage
\setcounter{page}{1}

\section{Introduction}

Error correcting codes have the ability to efficiently correct large fractions of errors while maintaining a large communication rate. The fundamental trade-offs between these two conflicting desiderata have been intensely studied in information and coding theory.
Algorithmic coding theory has further studied what trade-offs can be achieved \emph{efficiently}, i.e., with polynomial time encoding and decoding procedures.

\smallskip

This paper studies insdel codes, i.e., error correcting codes with a large minimum edit distance, which can correct synchronization errors such as insertions and deletions. While codes for Hamming errors and the Hamming metric are quite well understood, insdel codes have largely resisted such progress but have attracted a lot of attention recently~\cite{brakensiek2016efficient,bukh2017improved,guruswami2016efficiently,guruswami2017deletion,haeupler2017synchronization,haeupler2017synchronization3,haeupler2017synchronization2,haeupler2019near,cheng2018synchronization,liu2019list,liu2019explicit,haeupler2018optimal,cheng2019block,cheng2018deterministic}. 
A striking example of a basic question that is open in the context of synchronization errors is the determination of the maximal fraction of deletions or insertions a unique- or list-decodable binary code with non-vanishing rate can tolerate.
%
%
That is, we do not even know at what fraction of errors the rate/distance tradeoff for insdel codes hits zero rate. These basic and intriguing questions are open even if one just asks about the existence of codes, irrespective of computational considerations, and even when restricted to the insertion-only setting.

\smallskip

In this paper we fully answer these questions for list-decodable binary codes and more generally for codes over any alphabet of a fixed size $q$. Our results are efficient and work for any combination of insertions and deletions from which list decoding is information-theoretically feasible at all.

\subsection{Prior Results and Related Works}

The study of codes for insertions and deletions has a long history and goes back to studies of Levenshtein\cite{Levenshtein65} in the 60s. We refer to the surveys by Sloan~\cite{sloane2002single}, Mercier et al.~\cite{mercier2010survey} and Mitzenmacher~\cite{mitzenmacher2009survey} for a more extensive background, and focus here on works related to the main thrust of this paper, namely the maximal tolerable fraction of worst-cast deletions or insertions for unique- and list-decodable code families with non-vanishing rate. We stress that our focus is on \emph{worst-case} patterns of insdel errors subject to bounds on the fraction of insertions and the fraction of deletions allowed. There is also a rich body of work on tackling random insdel errors, which is not the focus of this work.

\smallskip 

\noindent \textbf{Unique Decoding.}
Let us first review the situation for unique decoding, where the decoder must determine the original transmitted codeword.
For unique decoding of binary codes, the maximal tolerable fraction of deletions is easily seen to be at most $\frac{1}{2}$ because otherwise either all zeros or all ones in a transmitted codeword can be deleted. (For $q$-ary codes, this fraction becomes $1-1/q$.) On the other hand, for a long time the best (existential) possibility results for unique-decodable binary codes stemmed from analyzing random binary codes. 

In the Hamming setting, random codes often achieve the best known parameters and trade-offs, and a lot of effort then goes into finding efficient constructions and decoding algorithms for codes that attempt to come close to the random constructions. However, the edit distance is combinatorially intricate and even analyzing the expected edit distance of two random strings, which is the first step in analyzing random codes, is highly non-trivial.

 %
 %
 Lueker~\cite{lueker2009improved}, improving upon earlier results by Dan{\v{c}}{\'\i}k and Paterson~\cite{dancik1994expected,danvcik1995upper}, proved that the expected fractional length of the longest common subsequence between two random strings lies between 0.788071 and 0.826280 (the exact value is still unknown). 
 Using this, one can show that a random binary code of positive rate can tolerate between $0.23$ and $0.18$ fraction of deletions or insertions. 
 Edit distance of random $q$-ary strings were studied by Kiwi, Loebl, and  Matou\~sek\cite{kiwi2005expected}, leading to positive rate random codes by Guruswami and Wang \cite{guruswami2017deletion} that correct $1-\Theta(\frac{1}{\sqrt{q}})$ fraction of deletions for asymptotically large $q$. Because random codes do not have efficient decoding and encoding procedures these results were purely existential. Computationally efficient binary codes of non-vanishing rate tolerating some small unspecified constant fraction of insertions and deletions were given by Schulman and Zuckerman~\cite{schulman1999asymptotically}. Guruswami and Wang~\cite{guruswami2017deletion} gave binary codes that could correct a small constant fraction of deletions with rate approaching $1$, and this was later extended to handle insertions as well~\cite{guruswami2016efficiently}.
 
 In the regime of low-rate and large fraction of deletions, Bukh and Guruswami~\cite{bukh2016improved} gave a $q$-ary code construction that could tolerate up to a $\frac{q-1}{q+1}$ fraction of deletions, which is $\frac{1}{3}$ for binary codes. Note that this beats the performance of random codes. Together with H\aa stad~\cite{bukh2017improved} they later improved the deletion fraction to $1 - \frac{2}{q+\sqrt{q}}$ or $\sqrt{2}-1 \approx 0.414$ for binary codes. This remains the best known result for unique-decodable codes and determining whether there exist binary codes capable of correcting a fraction of deletions approaching $\frac{1}{2}$ remains a fascinating open question.

\smallskip 
\noindent \textbf{List decoding.}
The situation for list-decodable codes over small alphabets is equally intriguing. In list-decoding, one relaxes the decoding requirement from having to output the codeword that was sent to having to produce a (polynomially) small list of codewords which includes the correct one. 
The trivial limit of $1/2$ fraction deletions for unique-decoding binary codes applies equally well for list-decoding. In their paper, Guruswami and Wang~\cite{guruswami2017deletion} showed that this limit can be approached by efficiently list-decodable binary codes. Similarly, $q$-ary codes list-decodable from a deletion fraction approaching the optimal $1-1/q$ bound can be constructed.

However, the situation was not well understood when insertions are also allowed. 
It had already been observed by Levenshtein~\cite{Levenshtein65} that (at least existentially) insertions and deletions are equally hard to correct for unique-decoding, in that if a code can correct $t$ deletions then it can also correct any combination of $t$ insertions and deletions. This turns out to be not true for list-decoding. This was demonstrated pointedly in \cite{haeupler2018synchronization4}, where it is shown that arbitrary large $\gamma=O(1)$ fractions of insertions (possibly exceeding 1) can be tolerated by list-decodable codes over sufficiently large constant alphabets (see \cref{thm:LargeAlphaInsDelListDecoding}), whereas the fraction of deletions $\delta$ is clearly bounded by $1$. Indeed, the fraction of insertions $\gamma$ does not even factor into the rate of these list-decodable insertion-deletion codes---this rate can approach the optimal bound of $1-\delta$ where $\delta$ is the deletion fraction. The result in \cite{haeupler2018synchronization4}, however, applies only to sufficiently large constant alphabet sizes, and it does not shed any light on the list-decodability of \emph{binary} (or any fixed alphabet) insdel codes. 

Considering a combination of insertions and deletions, the following bound is not hard to establish.
\begin{proposition}\label{thm:impossibility}
For any integer $q$ and any $\delta,\gamma \geq 0$ with $\frac{\delta}{1-\frac{1}{q}} + \frac{\gamma}{q-1}  \geq 1$ there is no family of constant rate codes of length $n$ which are list-decodable from $\delta n$ deletions and $\gamma n$ insertions. 
\end{proposition}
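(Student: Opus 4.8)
The plan is to contradict list-decodability by exhibiting, for any $\delta,\gamma$ in the stated range, a single string $w$ that is reachable from $2^{\Omega(n)}$ distinct codewords of any constant-rate code via $\delta n$ deletions and $\gamma n$ insertions; no list of polynomial (indeed subexponential) size placed on input $w$ can then contain all of them. I will repeatedly use the elementary fact that $y$ is obtainable from $x$ by at most $a$ deletions and at most $b$ insertions if and only if $x$ and $y$ have a common subsequence of length at least $\max(|x|-a,\,|y|-b)$. I may assume $\delta<1-\tfrac1q$: if $\delta\ge 1-\tfrac1q$, then by pigeonhole at least a $1/q$ fraction of the codewords share a common plurality symbol $a\in[q]$, each such codeword contains $a^{\lceil n/q\rceil}$ as a subsequence, and hence each reaches $w=a^{\lceil n/q\rceil}$ using $n-\lceil n/q\rceil\le(1-\tfrac1q)n\le\delta n$ deletions and $0\le\gamma n$ insertions, which already gives the contradiction.

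Now set $p:=1-\frac{q\delta}{q-1}\in(0,1]$ (recall $\frac{q\delta}{q-1}=\frac{\delta}{1-1/q}$). Restricting to block lengths $n$ for which $pn$ and $(1-p)n/q$ are integers --- harmless for a code family, the additive $O(1)$ rounding loss being immaterial in general --- write each codeword as $c=P_c\,S_c$ with $|P_c|=pn$ and $|S_c|=(1-p)n$. In each suffix $S_c$ some symbol occurs at least $|S_c|/q=\frac{\delta n}{q-1}$ times, so by pigeonhole there is a symbol $a\in[q]$ and a subcollection $\mathcal C'$ comprising at least a $1/q$ fraction of the code --- thus still of size $2^{\Omega(n)}$ --- for which $a$ is such a symbol of $S_c$ for every $c\in\mathcal C'$. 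Let $B:=1\,2\,\cdots\,q$ be the string that lists every alphabet symbol once, and define the hybrid target $w:=B^{\,pn}\cdot a^{\,(1-p)n/q}$ (if one insists on exactly $\gamma n$ insertions, lengthen the trailing run of $a$'s so that $|w|=(1-\delta+\gamma)n$; this only helps below). Note $|w|=q\,pn+(1-p)n/q=(q-(q+1)\delta)\,n$, and a one-line computation shows $q-(q+1)\delta\le 1-\delta+\gamma$ is precisely the hypothesis $\frac{q\delta}{q-1}+\frac{\gamma}{q-1}\ge1$, so $|w|\le(1-\delta+\gamma)n$.

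The crux is the bound $\LCS(c,w)\ge pn+(1-p)n/q=(1-\delta)n$ for every $c\in\mathcal C'$. To prove it, match $P_c$ into the prefix $B^{\,pn}$ of $w$ by sending the $i$-th symbol of $P_c$ to its unique occurrence inside the $i$-th copy of $B$; since consecutive copies of $B$ appear in order, this embeds all of $P_c$. Then pick $(1-p)n/q$ occurrences of $a$ in $S_c$ and match them in order to the trailing run $a^{\,(1-p)n/q}$. Every position of $c$ used in the second step lies strictly after every position used in the first step, and likewise in $w$ the run of $a$'s lies after $B^{\,pn}$; hence the two partial matchings concatenate into a common subsequence of $c$ and $w$ of length $pn+(1-p)n/q=(1-\delta)n$. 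Consequently $w$ is reachable from every $c\in\mathcal C'$ using $n-(1-\delta)n=\delta n$ deletions and $|w|-(1-\delta)n=(q-1-q\delta)\,n\le\gamma n$ insertions; since $|\mathcal C'|=2^{\Omega(n)}$, the list output on input $w$ cannot be of polynomial size, so no such code family exists.

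I expect the only genuine content to lie in the choice of $w$: recognizing that the right received word is a \emph{hybrid} of the universal subsequence-embedder $B^{pn}$ --- whose redundancy factor $q$ is paid for out of the insertion budget --- and a single long run $a^{(1-p)n/q}$ --- whose content is supplied, via pigeonhole over the $q$ symbols, by the tails of the codewords --- and then tuning the split point to $p=1-\frac{q\delta}{q-1}$ so that the deletion cost is exactly $\delta n$ while the insertion cost is exactly $(q-1-q\delta)n$, which the hypothesis forces to be at most $\gamma n$. Everything else --- the $\LCS$ lower bound, the identity $q\,pn+(1-p)n/q=(q-(q+1)\delta)n$, and the integrality bookkeeping --- is routine.
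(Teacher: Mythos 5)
Your proof is correct, and it is essentially the argument the paper has in mind: the paper states the proposition without proof, but its Section 5 machinery (Theorem \ref{thm:counterexamples} with $i=1$ and $i=q$, combined with the time-sharing adversary of Theorem \ref{thm:QaryUpperbound}) is exactly your construction --- apply the pure-insertion strategy (embed into $(12\cdots q)^{pn}$) to a $p$-fraction prefix and the delete-to-plurality strategy to the remaining suffix, with $p=1-\frac{q\delta}{q-1}$ chosen so the deletion budget is exactly $\delta n$. The LCS bookkeeping and the pigeonhole over the suffix plurality symbol are all in order.
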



For the case of insertion-only binary codes, the above limits the maximum fraction of insertions to $100\%$, which is twice as large as the best possible deletion fraction of $1/2$. 

Turning to existence/constructions of list-decodable codes for insertions, recall that the codes of Bukh, Guruswami, H\aa stad (BGH) could unique-decode (and thus also list-decode) a fraction of $0.414$ insertions (indeed any combination of insertions and deletions totaling $0.414$ fraction). Wachter-Zeh~\cite{wachter2017list} recently put forward a Johnson-type bound for insdel codes. The classical Johnson bound works in the Hamming metric, and connects unique-decoding to list-decoding (for Hamming errors) by showing that any unique-decodable code must also be list-decodable from an even larger fraction of corruptions. One intriguing implication of Wachter-Zeh's Johnson bound for insdel codes is that any unique-decodable insdel code which tolerates a $\frac{1}{2}$ fraction of deletions (or insertions) would automatically also have to be (existentially) list-decodable from a $100\%$ fraction of insertions. 
Therefore, even if one is interested in unique-decoding, e.g., closing the above-mentioned gap between $\sqrt{2}-1$ and $\frac{1}{2}$, this establishes the search for maximally list-decodable binary codes from insertions as a good and indeed necessary step towards this goal. On the other hand, proving any non-trivial impossibility result bounding the maximal fraction of insertions of list-decodable binary codes away from $100\%$ would directly imply an impossibility result for unique-decoding binary codes from a deletion fraction approaching $\frac{1}{2}$. 

Follow-up work by Hayashi and Yasunaga~\cite{hayashi2018list} corrected some subtle but crucial bugs in \cite{wachter2017list} and reproved a corrected Johnson Bound for insdel codes. They furthermore showed that the BGH codes~\cite{bukh2017improved} could be list-decoded from a fraction $\approx 0.707$ of insertions. Lastly, via a concatenation scheme used in \cite{guruswami2017deletion,guruswami2016efficiently} they furthermore made these codes efficient. A recent work of Liu, Tjuawinata, and Xing~\cite{liu2019list} also provides efficiently list-decodable insertion-deletion codes and derives a Zyablov-type bound.
In summary, for the binary insertion-only setting, the largest fraction of insertions that we knew to be list-decodable (even non-constructively) was $\approx 0.707$.


\subsection{Our Results}

We close the above gap and show binary codes which can be list-decoded from a fraction $1-\eps$ fraction of insertions, for any desired constant $\eps > 0$. In fact, we give a single family of codes that are list-decodable from any mixed combination of $\gamma$ fraction of insertions and $\delta$ fraction of deletions, as long as $2 \delta + \gamma \le 1 - \eps$. 
\begin{theorem}\label{thm:main}
For any $\eps\in(0, 1)$ and sufficiently large $n$, there exists a constant rate family of efficient binary codes that are $L$-list decodable from any $\delta n$ deletions and $\gamma n$ insertions in $\poly(n)$ time as long as $\gamma + 2\delta \le 1-\eps$ where $n$ denotes the block length of the code, $L=O_\eps(\exp(\exp(\exp(\log^*n))))$, and the code achieves a rate of $\exp\left(-\frac{1}{\eps^{10}}\log^2 \frac{1}{\eps}\right)$.
\end{theorem}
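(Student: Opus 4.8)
The plan is to split the statement into a purely combinatorial existence result and an efficiency-and-rate booster, matching the two ingredients advertised in the abstract. First I would establish the following rate-oblivious fact: for every $\eta>0$ and every finite alphabet $\Sigma$ there is a binary code $\mathcal{C}_{\mathrm{in}}\colon\Sigma\to\{0,1\}^m$, with block length $m=m(\eta,|\Sigma|)$ a constant, that is $L_0$-list-decodable from any $\gamma_{\mathrm{in}}m$ insertions and $\delta_{\mathrm{in}}m$ deletions whenever $\gamma_{\mathrm{in}}+2\delta_{\mathrm{in}}\le 1-\eta$, with $L_0=L_0(\eta,|\Sigma|)$. These are instances of the recursive Bukh--Ma construction with carefully tuned parameters, and their list-decoding analysis is the heart of the argument. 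Given a received word $w$ and $L_0+1$ distinct consistent codewords, I would fix an optimal alignment of $w$ to each of them and descend through the recursive block hierarchy, using a pigeonhole argument at each level to show that two of the alignments must eventually agree on a common inner block — contradicting the near-disjointness of distinct Bukh--Ma blocks guaranteed by the construction, in particular their ``self-matching resistance'': any substring of $w$ not aligned to a single block is far in edit distance from every codeword.

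The key point, and the source of the factor $2$ on $\delta$ (and of the $(q-1)$-piecewise-linear boundary for larger $q$), is that a bound on the \emph{pairwise} longest common subsequence of codewords is provably insufficient: it would only yield unique decoding and could never beat the $\gamma+\delta=1$ threshold. The analysis must instead exploit that \emph{many} codewords are simultaneously consistent with $w$, extracting the contradiction from the global combinatorics of how $L_0+1$ alignments overlap inside the block hierarchy; informally, a single symbol of $w$ may be matched into several codewords at once, so a unit of deletion budget is effectively shared across the list, which is exactly what pushes the feasibility region out to $\gamma+2\delta=1$. The matching impossibility at $\gamma+2\delta=1$ is \cref{thm:impossibility}.

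Second, I would boost $\mathcal{C}_{\mathrm{in}}$ to a constant-rate, polynomial-time code by concatenation. Fix $\eta=\eps/3$ and a large constant alphabet $\Sigma$, and let $\mathcal{C}_{\mathrm{out}}$ be an efficient, constant-rate, list-decodable insertion--deletion code over $\Sigma$ that tolerates a $1-\eps/2$ fraction of symbol-level deletions (and a bounded fraction of symbol-level insertions), as provided by the synchronization-string based construction of \cref{thm:LargeAlphaInsDelListDecoding}; it is from there that the tower list size $L=O_\eps(\exp(\exp(\exp(\log^*n))))$ is inherited. A message is encoded by applying $\mathcal{C}_{\mathrm{out}}$ to get $(\sigma_1,\dots,\sigma_N)\in\Sigma^N$ and replacing each $\sigma_i$ by $\mathcal{C}_{\mathrm{in}}(\sigma_i)\in\{0,1\}^m$; since $|\Sigma|$ and $m$ are constants, the rate is the product of the two rates, and the claimed value $\exp(-\eps^{-10}\log^2(1/\eps))$ emerges after optimizing $\eta$ and $|\Sigma|$ against the parameters of the Bukh--Ma inner code, which is where essentially all of the rate loss sits. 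To decode $w$, I would run the inner list-decoder on every substring whose length is within the insertion slack of $m$, obtaining at each such window a list of at most $L_0$ symbols of $\Sigma$, and feed the resulting sequence of subsets of $\Sigma$ to a list-recovering version of the outer decoder. Correctness hinges on a counting lemma: since the adversary makes only $(\gamma+\delta)n\le Nm$ edits in total, the average over the $N$ blocks of $\gamma_i+2\delta_i$ is at most $\gamma+2\delta\le1-\eps$, so by Markov's inequality at least an $\eps/2$ fraction of blocks are ``light'' ($\gamma_i+2\delta_i\le1-\eps/2$); for each such block the correctly positioned window decodes to a list containing the true $\sigma_i$, and this $\eps/2$ fraction of good outer symbols, surviving a $\le1-\eps/2$ fraction of symbol-level corruption, suffices for $\mathcal{C}_{\mathrm{out}}$ to list-recover the message.

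The hard part is twofold. The genuinely list-decoding analysis of the Bukh--Ma codes in the first step is where the real work lies; everything downstream is bookkeeping by comparison. Within the concatenation step, the delicate point is that the adversary controls where the inner blocks land inside $w$, so one must argue both that some window recovers each light block and that the ``junk'' windows straddling block boundaries do not inject enough spurious $\Sigma$-symbols to overwhelm the outer list-recovery decoder — handled using the self-matching resistance of Bukh--Ma codewords together with the synchronization-string machinery behind $\mathcal{C}_{\mathrm{out}}$. The final list size is polynomially bounded and dominated by the outer tower term, and all steps run in $\poly(n)$ time.
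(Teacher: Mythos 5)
Your high-level decomposition is the paper's: an inner Bukh--Ma code with the right list-decoding property, concatenated with the large-alphabet, insertion-tolerant outer code of \cref{thm:LargeAlphaInsDelListDecoding}, decoded by sliding windows whose inner lists are fed to the outer decoder. The concatenation half of your sketch is essentially correct and matches \cref{sec:concatenation} modulo bookkeeping: the paper treats the junk symbols from overlapping windows as \emph{insertions} against the outer codeword (exploiting that $\gamma_{\textout}$ can be an arbitrarily large constant) rather than invoking list-recovery, and it must fix parameters in a specific order (window lengths adapted to the per-block deletion count in $8/\eps$ rounds, and $\gamma_{\textout}$ depending only on $L_\textin$ and $\eps$, not on $n_\textin=\lceil\log|\Sigma_\textout|\rceil$) to avoid a circularity that your ``every substring whose length is within the insertion slack of $m$'' formulation would otherwise create. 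These are repairable details.

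The genuine gap is in the first ingredient, which you yourself identify as the heart of the matter but then only gesture at. The Bukh--Ma codes used here are the plain oscillating strings $(0^r1^r)^{n/2r}$ with geometrically separated periods; they have no recursive block hierarchy, no notion of ``near-disjoint inner blocks,'' and no ``self-matching resistance'' to descend through, so the pigeonhole-on-alignments argument you propose has no object to act on and, as described, would at best yield a pairwise (distance-type) statement --- precisely the kind of bound you correctly note cannot reach $\gamma+2\delta=1$. The paper's actual mechanism (\cref{thm:LowRateListDecBinaryCodes}) is entirely different: assuming $k$ codewords with periods $r_1>\cdots>r_k$ are all $\eps$-advantaged toward a common center $v$, one recursively samples nested random substrings $v_1\supset v_2\supset\cdots$ at scales $l_i=r_i\eps^2$ interleaved between consecutive periods, and shows that the bias $B_i=\bias(v_i)$ is a martingale whose variance must \emph{increase} by $\Omega(\eps^3)$ at every level (\cref{lem:process-variance-guarantee}, driven by the polarization estimate \cref{lem:polarization}); since $\Var(B_k)\le 1$, this forces $k=O(1/\eps^3)$. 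Nothing in your proposal supplies this (or any other) quantitative mechanism for bounding the list size, so the central claim of the theorem remains unproved in your write-up.
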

Since the computationally efficient codes from \Cref{thm:main} match the bounds from \Cref{thm:impossibility} for every $\delta,\gamma$,  this nails down the entire feasibility region for list-decodability from insertions and deletions for the binary case. We stress that while we get constructive results, even the existence of inefficiently list-decodable codes, that too just for the insertion-only setting, was not known prior to this work.

In the above result, the rather weird looking bound on the list-size is inherited from results on list-decoding from a huge number insertions over larger alphabets~\cite{haeupler2018synchronization4}, which in turn is inherited from the list-size bounds for the list-recoverable algebraic-geometric code constructions in \cite{guruswami2013list}.

\smallskip

We use similar construction techniques to obtain codes with positive rate over any arbitrary alphabet size $q$ that are list-decodable from any fraction of insertions and deletions under which list-decoding is possible. We thus precisely identify the feasibility region for any alphabet size, together with an efficient construction. Again, recall that the existence of such codes was not known earlier, even for the insertion-only case.

\begin{theorem}\label{thm:qaryMain}
For any positive integer $q \geq 2$, define $F_q$ as the concave polygon defined over vertices
$\left(\frac{i(i-1)}{q}, \frac{q-i}{q}\right)$ for $i = 1, \cdots, q$ and $(0, 0)$. (An illustration for $q=5$ is presented in \cref{fig:actual-region}). $F_q$ does not include the border except the two segments 
$\left[(0, 0), (q-1, 0)\right)$ and $\left[(0, 0), \left(0, 1-1/q\right)\right)$. 
Then, for any $\eps > 0$ and sufficiently large $n$, there exists a family of $q$-ary codes that, as long as $(\gamma, \delta) \in (1-\eps)F_q$, are efficiently $L$-list decodable from any $\delta n$ deletions and $\gamma n$ insertions where $n$ denotes the block length of the code, $L=O(\exp(\exp(\exp(\log^*n))))$, and the code achieves a positive rate of $\exp\left(-\frac{1}{\eps^{10}}\log^2 \frac{1}{\eps}\right)$.
\end{theorem}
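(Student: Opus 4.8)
The plan is to follow the same two-stage strategy used for the binary case (\Cref{thm:main}) but with a $q$-ary twist in the combinatorial core. First I would establish the \emph{existential} statement: for any $\eps>0$ there is a (possibly inefficient, possibly vanishing-rate) family of $q$-ary codes of sufficiently large size that is $L$-list-decodable whenever $(\gamma,\delta)\in(1-\eps)F_q$ with $L$ a slowly-growing function. As the abstract indicates, this is done by analyzing the Bukh--Ma codes over alphabet $q$. Concretely, I would set up a ``confusability'' framework: a list of more than $L$ codewords that are jointly consistent with a received word of length $(1-\delta+\gamma)n$ forces many of them to share a long common supersequence / subsequence structure, and the Bukh--Ma construction is engineered so that any $L+1$ of its codewords have a common ``insdel pattern'' of total weight bounded strictly below the threshold. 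The key quantitative claim to prove is that for the Bukh--Ma code, the largest $\delta,\gamma$ for which a single received word is compatible with $L+1$ codewords is governed exactly by the polygon $F_q$ — i.e.\ the $i$-th facet with corner $\left(\frac{i(i-1)}{q},\frac{q-i}{q}\right)$ arises from the sub-code on an effective alphabet of size $i$, where the adversary deletes down to an $i$-symbol ``core'' (costing $\frac{q-i}{q}$ fraction of deletions) and then re-inserts to confuse (costing $\frac{i(i-1)}{q}$ fraction of insertions, the number of ordered pairs over $i$ symbols normalized). This matching of the lower-bound construction to the upper bound of \Cref{thm:impossibility} (which is the real reason $F_q$ is not the naive generalization) is where I expect the main difficulty to lie.

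Second, I would upgrade the existential, low-rate family to an \emph{efficient, constant-rate} one via the concatenation scheme advertised in the abstract — the ``simple yet powerful'' transformation that turns any list-decodable insdel code family (even with vanishing rate and no efficient decoder) into a constant-rate efficiently list-decodable one. The outer code would be a large-alphabet list-recoverable / list-decodable-from-insdel code in the spirit of \cite{haeupler2018synchronization4,guruswami2013list} (this is also the source of the $\exp\exp\exp(\log^* n)$ list size), and the inner code would be the $q$-ary Bukh--Ma-type code from the first stage applied to short blocks with synchronization markers so that the decoder can approximately recover block boundaries. The decoding is: use the marker structure to chop the received word into candidate inner blocks, list-decode each inner block (brute force, since blocks are $O(\log n)$ long), feed the resulting small lists into the outer list-recovery decoder. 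One checks that the fraction of inner blocks that are ``too corrupted'' to yield a list containing the true symbol is small enough that the outer list-recovery tolerates it, and that the overall tolerated $(\gamma,\delta)$ region is $(1-\eps)F_q$ up to an arbitrarily small loss absorbed into $\eps$.

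The rate bookkeeping is routine: the outer code contributes a constant rate, the inner code contributes rate $\exp\!\left(-\frac{1}{\eps^{10}}\log^2\frac1\eps\right)$ (this is exactly the inner block-length vs.\ alphabet tradeoff forced by needing the inner code to operate $\eps$-close to the $F_q$ boundary), and the markers and the alphabet-reduction from the outer alphabet down to $q$ cost only lower-order factors. I would also verify the boundary behavior claimed in the statement — that $F_q$ excludes its border except the two axis segments $[(0,0),(q-1,0))$ and $[(0,0),(0,1-1/q))$ — which follows because at the extreme points one of the two ``strict'' inequalities in \Cref{thm:impossibility} becomes tight, and along the pure-insertion axis a huge constant fraction of insertions (up to $q-1$, not just $1$) is tolerable exactly as in \cite{haeupler2018synchronization4}, while along the pure-deletion axis the $1-1/q$ barrier is sharp and unachievable. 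The genuinely new content is all in the first stage; the second stage is essentially the $q$-ary analogue of the binary argument behind \Cref{thm:main}, and I would state it as a black-box concatenation lemma and instantiate it.
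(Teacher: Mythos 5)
Your high-level architecture matches the paper's: prove that $q$-ary Bukh--Ma codes are list-decodable throughout $(1-\eps)F_q$ with list size depending only on $q$ and $\eps$, then boost to constant rate and efficiency by concatenating with the large-alphabet insdel-list-decodable outer codes of \cite{haeupler2018synchronization4}. However, there is a genuine gap at the heart of stage one. You correctly identify that the corner $\left(\frac{i(i-1)}{q},\frac{q-i}{q}\right)$ reflects the ``delete down to an $i$-symbol core, then re-insert'' attack, but that only explains the \emph{impossibility} side (the paper's \cref{thm:counterexamples,thm:QaryUpperbound}). For the \emph{achievability} side you assert that ``the Bukh--Ma construction is engineered so that any $L+1$ of its codewords have a common insdel pattern of total weight bounded strictly below the threshold'' and then defer the difficulty; no argument is given, and this is precisely the bulk of the paper's technical work. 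The paper's mechanism is quite specific: for each facet $z$ of $F_q$ (the line $\gamma + 2z\delta = \frac{(2q-1)z-z^2}{q}$) it defines a $z$-th order advantage functional $\adv^{q,z}_M = \frac{(2z+1)|M|-|a|-\frac{z+z^2}{q}|b|}{|a|}$, shows that closeness to a codeword within the facet forces advantage $\geq \eps/q$, and then runs a martingale argument over nested random substrings $v_1 \supseteq v_2 \supseteq \cdots$, showing that the quantity $\sum_{p=1}^q \Var(F^p_i)$ (sum of variances of per-symbol frequencies) must increase by $\Omega(\eps^2/q^4)$ at each scale while being bounded by $q$, which caps the list size at $O(q^5/\eps^2)$. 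The key supporting facts are a polarization lemma (positive advantage forces the local frequency of some symbol to deviate from its global frequency) and a non-positivity proposition for the advantage upper bound evaluated at the global frequency vector. None of this, nor any substitute for it, appears in your sketch, so the central claim remains unproved.

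On stage two, your plan deviates from the paper in a way that is not obviously repairable: you propose synchronization markers to recover inner-block boundaries and then list-recovery for the outer code. At the error rates in question (insertion fractions up to $q-1-\eps$ and deletion fractions up to $1-1/q-\eps$) the adversary can flood the received word with spurious markers, so boundary recovery in the usual sense is problematic. The paper avoids this entirely: it slides a window over the received word in increments of $\frac{n_{\textnormal{in}}\eps}{16}$, brute-force list-decodes every window with the inner code, concatenates \emph{all} resulting outer symbols into one long string $T_i$, and feeds $T_i$ to the outer decoder, relying on the fact that the outer code of \cref{thm:LargeAlphaInsDelListDecoding} tolerates an arbitrarily large constant fraction of insertions to absorb the junk; it also buckets blocks by deletion count and iterates over the unknown facet index $z$. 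If you keep the marker-based route you would need to justify boundary recovery under these extreme corruption levels, which the paper deliberately sidesteps.
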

We further show in \cref{sec:large-alphabet} that for any pair of positive real numbers $(\gamma, \delta) \not\in F_q$, there exists no infinite family of $q$-ary codes with rate bounded away from zero that can be list decoded from a $\delta$-fraction of deletions plus a $\gamma$-fraction of insertions.

\begin{figure}[]
  \centering
  \noSTOC{\includegraphics[height=2in]{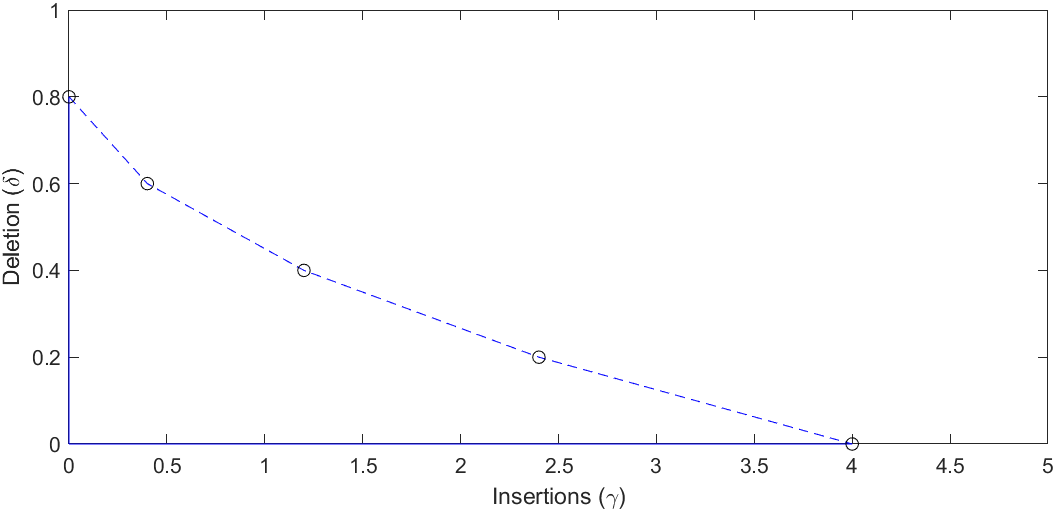}}
  \STOConly{\includegraphics[width=\linewidth]{Region.png}}
  \caption{Feasibility region for $q=5$.}\label{fig:actual-region}
\end{figure}

\subsection{Our Techniques}

We achieve these results using two ingredients, each interesting in its own right. The first is a simple new concatenation scheme for list-decodable insdel codes which can be used to boost the rate of insdel codes. The second component, which constitutes the bulk of this work, is a technically intricate proof of the list-decoding properties of the Bukh-Ma codes~\cite{bukh2014longest} which have good (edit) distance properties but a tiny sub-constant rate. We note that these codes were the inner codes in the ``clean construction" in the BGH work on codes unique-decodable from a $1/3$ insdel fraction~\cite{bukh2017improved}. This was driven by a property of these codes called the \emph{span}, which is a stronger form of edit distance that applies at all scales. The Bukh-Ma codes were also used by Guruswami and Li~\cite{GL-oblivious-deletion} in their existence proof of codes of positive rate for correcting a fraction of \emph{oblivious deletions} approaching 1. In this work, the non-trivial list-decodability property of the Bukh-Ma codes drives our result.

\subsubsection{Concatenating List-Decodable Insdel Codes}

Our first ingredient is a simple but powerful framework for constructing list-decodable insertion-deletion codes via code concatenation. Recall that code concatenation which composes the encoding of an \emph{outer} code $C_{\rm out}$ with an \emph{inner} code $C_{\rm in}$ whose size equals the alphabet size of $C_{\rm out}$.

In our approach, the outer code $C_{\rm out}$ is chosen to be 
a list-decodable insdel code $C_{\rm out}$ over an alphabet that is some large function of $1/\eps$, but which has constant rate and is  capable of tolerating a huge number of insertions.
The inner code $C_{\rm in}$ is chosen to be a list-decodable insdel code over a fixed alphabet of the desired size $q$, which has non-trivial list decoding properties for the desired fraction $\delta,\gamma$ of deletions and insertions.

We show that even if $C_{\rm in}$ has an essentially arbitrarily bad sub-constant rate and is not efficient, the resulting $q$-ary insdel code does have constant rate, and can also be efficiently list decoded  from the same fraction of insertions and deletions as $C_{\rm in}$. For the problem considered in this paper, this framework essentially provides efficiency of codes for free. More importantly, it reduces the problem of finding good \emph{constant-rate} insdel codes over a fixed alphabet to finding a family of good list-decodable insdel codes \emph{with an arbitrarily large number of codewords}, and a list-size bounded by some fixed function of $1/\eps$. 

\smallskip

Our decoding procedure for concatenated list-decodable insdel codes is considerably simpler than similar schemes introduced in earlier works~\cite{guruswami2017deletion,guruswami2016efficiently,bukh2017improved,schulman1999asymptotically}. Of course, the encoding is simply given by the standard concatenation procedure. The decoding is done by (i) list-decoding shifted intervals of the received string using the inner code $C_{\rm in}$, (ii) creating a single string from the symbols in these lists, and (iii) using the list-decoding algorithm of the outer code on this string (viewed as a version of the outer codeword with some number of deletions and insertions). 

The main driving force behind why this simplistic sounding approach actually works is a judicious choice of the outer code $C_{\rm out}$. Specifically, we use the codes due to Haeupler, Shahrasbi, and Sudan~\cite{haeupler2018synchronization4} which can tolerate a very large number of insertions. This means that the many extra symbols coming from the list-decodings of the inner code $C_{\rm in}$ and the choice of overlapping intervals does not disrupt the decoding of the outer code. 

\subsection{\STOConly{\hspace{-1mm}}Analyzing the \noSTOC{List-Decoding }Properties of Bukh-Ma Codes}

The main technical challenge that remains is to construct or prove the existence of arbitrarily large binary codes with optimal list decoding properties for any $\gamma, \delta$ (and $q$). For this we turn to a simple family of codes introduced by Bukh and Ma~\cite{bukh2014longest}, which consist of strings $(0^r\ 1^r)^{\frac{n}{r}}$ which oscillate between $0$'s and $1$'s with different frequencies. (Below we will refer to $r$ as the \emph{period}, and $1/r$ should be thought of as the \emph{frequency} of alternation.) 

A simple argument shows that the edit distance between any two such strings with sufficiently different periods is maximal, resulting in a tolerable fraction of edit errors of $\frac{1}{2}$ for unique decoding. The Johnson bound of \cite{wachter2017list,hayashi2018list} implies that this code must also be list-decodable from a full fraction $100\%$ of insertions. Therefore, using these codes as the inner codes in the above-mentioned concatenation scheme resolves the list-decoding question for the insertion-only setting. (The deletion-only setting is oddly easier as just random inner codes suffice, and was already resolved in \cite{guruswami2017deletion}.) This also raises hope that the Bukh-Ma codes might have good list-decoding properties for other $\gamma,\delta$ as well. Fortunately, this turns out to be true, though establishing this involves an intricate analysis that constitutes the bulk of the technical work in this paper.

\begin{theorem}\label{thm:LowRateListDecBinaryCodes}
For any $\eps>0$ and sufficiently large $n$, let $C_{n, \eps}$ be the following Bukh-Ma code:
$${C}_{n, \eps} = \left\{\left(0^r1^r\right)^{\frac{n}{2r}}\Big| r=\left(\frac{1}{\eps^4}\right)^k, k< \log_{1/\eps^4} n\right\}.$$

For any $\delta,\gamma \geq 0$ where $\gamma+2\delta < 1-\eps$, ${C}_{n, \eps}$ is list-decodable from any $\delta n$ deletions and $\gamma n$ insertions with\noSTOC{ a list size of $O\left(\frac{1}{\eps^3}\right)$.}\STOConly{ $O(\eps^{-3})$ list size.}
\end{theorem}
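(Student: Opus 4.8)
The plan is to fix a received word $y$ (obtained from some codeword $(0^r 1^r)^{n/2r}$ by at most $\delta n$ deletions and $\gamma n$ insertions) and show that the number of periods $r$ in the allowed range for which some codeword $x_r = (0^r 1^r)^{n/2r}$ is consistent with $y$ is $O(1/\eps^3)$. The key measurement is the following: if $x_r$ can be edited into $y$ using $\delta n$ deletions and $\gamma n$ insertions, then there is a common supersequence / alignment accounting for this, and the natural quantity to track is how efficiently a long run of $y$ can be "covered" by the alternating blocks of $x_r$. Concretely, I would define for each candidate period $r$ the cost $\text{cost}(r)$ of the best alignment between $x_r$ and $y$, where deletions cost $1$ and insertions cost $1$, and the constraint is $(\#\text{ins}) \le \gamma n$ and $(\#\text{del}) \le \delta n$ simultaneously; consistency of $x_r$ with $y$ means this is feasible. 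The heart of the argument is a structural lemma bounding, for any string $y$, the total number of periods $r$ in a geometric progression with ratio $1/\eps^4$ for which $x_r$ is consistent with $y$.

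**The core lemma — charging argument on monotone runs.** The main step is to show that a single string $y$ of length at most $(1+\gamma)n$ cannot be simultaneously close (in the asymmetric insdel sense with budgets $\gamma n, \delta n$) to two codewords whose periods $r_1 \ll r_2$ differ by more than a bounded factor, except for a bounded number of "scales." The intuition: a block $0^{r_2}$ of the slow codeword $x_{r_2}$, when aligned against the fast codeword $x_{r_1}$ (which alternates $0/1$ on scale $r_1 \ll r_2$), forces roughly $r_2 / (2r_1)$ sign changes worth of insertions or deletions to reconcile — so to keep $x_{r_1}$ within its deletion budget while covering $y$, and $x_{r_2}$ within its budget while covering $y$, a large portion of $y$ must look "monotone at scale $r_1$" and "monotone at scale $r_2$" at the same time, which bounds things. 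I would formalize this by partitioning $y$ into maximal monochromatic runs, and arguing: for codeword $x_r$ to be consistent with $y$ with deletion fraction $\delta$ and insertion fraction $\gamma$, the total "span"-like contribution forces $\sum (\text{stuff per run})$; then the quantity $\gamma + 2\delta$ arises because each unit of "misalignment" at a run boundary of $y$ relative to $x_r$ costs, in the worst case, one insertion plus a deletion that itself may have to be re-inserted — this is precisely the source of the factor $2$ in front of $\delta$. The slack $\eps$ in $\gamma + 2\delta \le 1 - \eps$ gives each consistent period a "profit" of $\eps n$ against a total budget that is $O(n)$ per scale, yielding at most $O(1/\eps)$ consistent periods per scale and, summing the $O(\log_{1/\eps^4})$-separated scales with a geometric/pigeonhole argument on which scales can be "active," the final $O(1/\eps^3)$.

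**Assembling the pieces.** With the core counting lemma in hand, the theorem follows: list-decoding $y$ returns exactly the set of consistent periods $r$, and the lemma caps this set at $O(1/\eps^3)$. I would organize the write-up as (i) define the alignment/cost formalism and restate consistency; (ii) prove a "single-scale" bound — among periods within a factor $1/\eps^4$ of each other, few are consistent — via the run-counting / profit argument above; (iii) prove a "cross-scale" bound handling geometrically separated periods, using that a codeword of period $r$ looks essentially constant on windows of length $\ll r$ and essentially balanced-alternating on windows of length $\gg r$, so each received-word region can be "charged" to at most a bounded number of scales; (iv) combine (ii) and (iii).

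**The main obstacle.** The hard part will be step (ii)/(iii): correctly setting up the charging so that the constant $1-\eps$ in $\gamma + 2\delta$ is matched exactly, i.e., making the factor-$2$ on $\delta$ fall out of the combinatorics of insdel alignments rather than being lost to a crude constant. In particular, one must carefully handle the interaction between insertions (which can "fake" the alternation pattern of a wrong period cheaply) and deletions (which can "hide" the true alternation), and argue that there is no way to spend the two budgets jointly to be consistent with many periods at once. Getting a clean amortized/LP-style inequality that simultaneously tracks both budgets across all monochromatic runs of $y$, and then optimizing it to exactly the claimed threshold, is where the intricate analysis lies; the geometric separation $1/\eps^4$ between consecutive periods is presumably chosen precisely to make the cross-scale interference terms lower-order so that this single inequality governs the count.
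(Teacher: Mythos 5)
Your proposal does not contain the idea that actually makes the proof work, and the route you sketch is one the paper explicitly flags as a dead end. Your core lemma is a charging/amortization argument: each period $r$ consistent with the received word earns a ``profit'' of $\eps n$ that is charged against a budget of $O(n)$ spread over the monochromatic runs, giving $O(1/\eps)$ periods ``per scale'' and then $O(1/\eps^3)$ by a pigeonhole over scales. Two problems. First, the accounting does not match the structure of ${C}_{n,\eps}$: the periods in the code are \emph{already} separated by multiplicative factors of $1/\eps^4$, so there is exactly one candidate period per ``scale'' and no within-scale multiplicity to bound; your step (ii) is vacuous, and your step (iii) --- capping the number of geometrically separated consistent periods at $O(1/\eps^3)$ rather than the trivial $\log_{1/\eps^4} n$ --- is the entire theorem, for which you offer no mechanism. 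Second, and more fundamentally, the natural version of your plan --- fractionally attributing the $\eps$-advantage of each consistent period to a disjoint part of the received string so that the total budget caps the count --- is exactly the intuition the authors report ``turned out to be false'': a single string can be $\eps$-correlated with many periods in a way that does not decompose into disjoint pieces.

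The actual argument is of a different character. One recursively subsamples nested random substrings $v \supset v_1 \supset v_2 \supset \cdots$, where $v_i$ is a uniformly random block of length $r_i\eps^2$, observes that $B_i = \bias(v_i)$ is a martingale, and proves a polarization-style lemma: conditioned on whether $v_{i+1}$ projects (under the optimal matching to $A_{r_{i+1}}$) into a $0$-run or a $1$-run, its expected bias shifts by $\Omega(\adv(v_i,A_{r_{i+1}}))$, whence $\Var(B_{i+1}) \ge \Var(B_i) + \Omega(\eps^3)$. Since the bias lies in $[-1,1]$, the variance can increase at most $O(1/\eps^3)$ times, which is where the list-size bound comes from: each consistent period contributes a global variance increment, not a disjoint chunk of the string. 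Nothing in your proposal anticipates this mechanism, and the step you yourself label ``the main obstacle'' is precisely the missing proof.
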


In order to prove \cref{thm:LowRateListDecBinaryCodes} we first introduce a new correlation measure which expresses how close a string is to any given frequency (or Bukh-Ma codeword) if one allows for both insertions and deletions each weighted appropriately. Using this we want to show that it is impossible to have a single string $v$ which is more than $\eps$-correlated with more than $\Theta_\eps(1)$ frequencies. 

Intuitively, one might expect that each correlation can be (fractionally) attributed to a (disjoint) part of $v$ which would result in the maximum number of $\eps$-close frequencies to be at most $1/\eps$. This, however, turned out to be false. Instead, we use a proof technique which is somewhat reminiscent of the one used to establish the polarization of the martingale of entropies in the analysis of polar codes~\cite{arikan2008channel,blasiok2018polar}. 

In more detail, we think of recursively sub-sampling smaller and smaller nested substrings of $v$, and analyze the expectation and variance of the bias between the fraction of $0$'s and $1$'s in these substrings. More precisely, we order the run lengths $r_1,r_2,\ldots$ that are $\eps$-correlated with $v$ in decreasing order and first sample a substring $v_1$ with $r_1 \gg |v_1| \gg r_2$ from $v$. While the expected zero-one bias in $v_1$ is the same as in $v$, we show that the variance of this bias is an increasing function in the correlation with $\left(0^{r_1} 1^{r_1}\right)^{\frac{n}{2{r_1}}}$. Intuitively, $v_1$ cannot be too uniform on an scale of length $l$ if it is correlated with $r_1$. 

Put differently, in expectation the sampled substring $v_1$ will land in a part of $v$ which is either (slightly) correlated to one of the long stretches of zeros in $v$ or in a part which is correlated with a long stretch of ones in $v$, resulting in at least some variance in the bias of $v_1$. Because the scales $r_2, r_3, \ldots$ are so much smaller than $v_1$, this sub-sampling of $v_1$ furthermore preserves the correlation with these scales intact, at least in expectation.

Next we sample a substring $v_2$ with $r_2 \gg |v_2| \gg r_3$ within $v_1$. Again, the bias in $v_2$ stays the same as the one in $v_1$ in expectation but the sub-sampling introduces even more variance given that $v_1$ is still non-trivially correlated with the string with period $r_2$. The evolution of the bias of the strings $v_1, v_2, \ldots$ produced by this nested sampling procedure can now be seen as a martingale with the same expectation but an ever increasing variance. Given that the bias is bounded in magnitude by 1, the increase in variance cannot continue indefinitely. This limits the number of frequencies a string $v$ can be non-trivially correlated with, which is exactly what we were after.

Our generalization to larger $q$-ary alphabets follows the same high level blueprint, but is technically even more delicate. Recall that in the non-binary case, there are $(q-1)$ different linear trade-offs between $\delta,\gamma$ depending on the exact regime they lie in.

\section{Preliminaries}
\subsection{List-Decodable Insertion-Deletion Codes}
The following list-decodable insertion-deletion codes from \cite{haeupler2018synchronization4} will be used as the outer code in our constructions.
\begin{theorem}[Theorem 1.1 from~\cite{haeupler2018synchronization4}]\label{thm:LargeAlphaInsDelListDecoding}
For every $\delta,\eps\in(0, 1)$ and constant $\gamma > 0$, there exist a family of list-decodable insdel codes that can protect against $\delta$-fraction of deletions and $\gamma$-fraction of insertions and achieves a rate of $1-\delta-\eps$ or more over an alphabet of size 
$\left(\frac{\gamma+1}{\eps^2}\right)^{O\left(\frac{\gamma+1}{\eps^3}\right)}=O_{\gamma, \eps}\left(1\right)$. These codes are list-decodable with lists of size $L_{\eps, \gamma}(n)= \exp\left(\exp\left(\exp\left(\log^* n\right)\right)\right)$, and have polynomial time encoding and decoding complexities.
\end{theorem}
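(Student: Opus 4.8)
This is a restatement of Theorem~1.1 of Haeupler, Shahrasbi, and Sudan~\cite{haeupler2018synchronization4}, and in the present paper it is invoked purely as a black box, so strictly speaking there is nothing new to prove; what follows is the approach one would take (and that~\cite{haeupler2018synchronization4} takes) to establish a statement of this shape. The plan is to reduce list-decoding from insertions and deletions to Hamming-type \emph{list-recovery} by indexing codeword symbols with a synchronization string. Concretely: pick an efficiently list-recoverable code $C_0$ over a constant-size alphabet whose rate approaches $1-\delta$ and which list-recovers from an $\approx\!\delta$ fraction of errors/erasures with constant input list size --- the algebraic-geometric constructions of Guruswami and Xing~\cite{guruswami2013list} are exactly such a family, and are the source of the $\exp(\exp(\exp(\log^* n)))$ output-list-size bound; attach to the $j$-th symbol of each $C_0$-codeword the $j$-th symbol of an $\eps$-synchronization string; and bundle $\Theta_{\gamma,\eps}(1)$ consecutive indexed symbols into one symbol of the final code, which accounts for the alphabet size $\left(\frac{\gamma+1}{\eps^2}\right)^{O\left((\gamma+1)/\eps^3\right)}$. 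Carrying the synchronization indices costs only an $o(1)$ rate overhead, so the final rate is still $\geq 1-\delta-\eps$.

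The decoder I would build runs in two stages. First, it uses the synchronization indices carried inside the received blocks to reconstruct, for each coordinate $j$ of the outer code, a short list of candidate symbols: every received block whose decoded index points approximately to position $j$ donates its payload to coordinate $j$'s list. The two facts that make this work, both consequences of the defining property of an $\eps$-synchronization string (any monotone matching of sent-to-received positions misattributes only an $O(\eps)$ fraction of them), are that (i) all but an $O(\delta+\eps)$ fraction of outer coordinates retain the true symbol in their list, and (ii) the $\gamma n$ insertions only enlarge these candidate lists, so that $C_0$ need only list-recover from input lists of size $O_\gamma(1)$. The asymmetry between insertions and deletions is exactly here: a deletion removes the genuine symbol, i.e.\ produces an erasure of the outer code, which forces rate $\le 1-\delta$, whereas $\gamma$ affects only the per-coordinate list size and the block length, and hence costs nothing in rate --- this is why an arbitrarily large constant $\gamma$ is tolerable. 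The second stage simply feeds this instance to the list-recovery decoder of $C_0$, yielding in polynomial time a list of the claimed size containing the transmitted message.

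The one genuinely delicate point is the synchronization-string analysis under a \emph{worst-case} mixture of $\delta n$ deletions and $\gamma n$ insertions: one must show that the block-level index decoding misidentifies only an $O(\delta+\eps)$ fraction of outer coordinates while controlling the per-coordinate list inflation tightly enough that $C_0$ need only list-recover from constant input list size, and one must do this while losing only the additive $\eps$ in rate rather than some function of $\gamma$. Once those quantitative bounds are in hand, the rest is a routine composition of the synchronization-string decoder with the list-recovery decoder of $C_0$, which is where the triple-exponential-in-$\log^*n$ list size and the polynomial running time are both inherited.
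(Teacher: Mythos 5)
Your proposal correctly recognizes that this theorem is imported verbatim from Haeupler--Shahrasbi--Sudan~\cite{haeupler2018synchronization4} and is used here purely as a black box, so the paper contains no proof to compare against; your sketch of the synchronization-string indexing reduction to list-recovery, with the Guruswami--Xing codes~\cite{guruswami2013list} supplying the $\exp(\exp(\exp(\log^* n)))$ list size, accurately reflects how the cited work establishes it. Nothing further is needed.
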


\subsection{Strings, Insertions\noSTOC{ and}\STOConly{,} Deletions, and Distances}

In this section we provide preliminary definitions on strings, edit operations, and related notions. \noSTOC{We start by definition of count and bias.}

\begin{definition}[Count and Bias]
We define $\Count_a(w)= |\{i | w[i]=a\}|$ as the number of appearances of symbol $a$ in string $w$. The bias of a binary string $w$ is the normalized difference between the appearances of zeros and ones in $w$, i.e., $\bias(w) = \frac{\Count_1(w)-\Count_0(w)}{|w|}$. With this definition, 
$\Count_0(w) = \frac{1-\bias(w)}{2}|w|$ and $\Count_1(w) = \frac{1+\bias(w)}{2}|w|$.
\end{definition}

\noSTOC{Next, we formally define a \emph{matching} between two strings. }
\begin{definition}[Matching]
A matching $M$ of size $k$ between two strings $S$ and $S'$ is defined to be two sequences of $k$ integer positions $0 < i_1 < \ldots < i_k \leq |S|$ and $0 < i'_1 < \ldots < i'_k \leq |S'|$ for which $S[i_j]=S'[i'_j]$ for all $j \leq k$. The subsequence induced by a matching $M$ is simply $S[i_1],\ldots,S[i_k]$. 
Every common subsequence between $S$ and $S'$ implicitly corresponds to a matching and we use the two interchangeably.
\end{definition}

\noSTOC{We now proceed to define the important notion of advantage.}

\begin{definition}[Advantage of a Matching]
Let $M$ be a matching between two binary strings $a$ and $b$. The \emph{advantage of the matching $M$} is defined as 
\noSTOC{$$\adv_M = \frac{3|M|-|a|-|b|}{|a|}.$$}\STOConly{$\adv_M = \frac{3|M|-|a|-|b|}{|a|}.$}
\end{definition}

\begin{definition}[Advantage]
For a given pair of strings $a$ and $b$, the \emph{advantage of $a$ to $b$} is defined as the advantage of the matching $M$ that corresponds to the largest common subsequence between them, i.e.,
$\adv(a, b) = \adv_{M=\LCS(a, b)}$. It is easy to verify that the longest common subsequence $M$ maximizes the advantage among all matchings from $a$ to $b$.
\end{definition}

We now make the following remark that justifies the notion of advantage as defined above. Note that any matching between two strings $a$ and $b$ implies a set of insertions and deletions to convert $b$ to $a$ which is, to delete all unmatched symbols in $b$ and insert all unmatched symbols in $a$ within the remaining symbols.

\begin{remark}\label{rmk:advInterpretation}
Consider strings $a$ and $b$ and matching $M$ between them. Think of $a$ as a distorted version of $b$ and let $\delta_M$ and $\gamma_M$ represent the fraction of deletions and insertions needed to convert $b$ to $a$ as suggested by $M$, i.e.,
\noSTOC{$$\delta_M  = \frac{\textnormal{Number of unmatched symbols in $b$}}{|b|}=\frac{|b|-|M|}{|b|},$$
and 
$$\gamma_M = \frac{\textnormal{Number of unmatched symbols in $a$}}{|b|}=\frac{|a|-|M|}{|b|}.$$}\STOConly{$\delta_M  = \frac{\textnormal{Number of unmatched symbols in $b$}}{|b|}=\frac{|b|-|M|}{|b|},$ and $\gamma_M = \frac{\textnormal{Number of unmatched symbols in $a$}}{|b|}=\frac{|a|-|M|}{|b|}.$}
The $\adv_M$ function tracks the value of $|b|(1-2\delta_M-\gamma_M)$ normalized by $|a|$ rather than $|b|$.
\noSTOC{$$\adv_M(a, b) = \frac{3|M|-|a| - |b|}{|a|} = \frac{3|b|(1-\delta_M) - |b|(1-\delta_M+\gamma_M) - |b|}{|a|}=\frac{|b|}{|a|}\cdot (1- 2\delta_M-\gamma_M)$$}\STOConly{
\begin{align*}
&\adv_M(a, b) = \frac{3|M|-|a| - |b|}{|a|}\\
&= \frac{3|b|(1-\delta_M) - |b|(1-\delta_M+\gamma_M) - |b|}{|a|}=\frac{|b|}{|a|}\cdot (1- 2\delta_M-\gamma_M)
\end{align*}
}
We will make use of this unnatural normalization later on. 
\end{remark}

We now extend the definition of advantage to the case where the second argument is an infinite string.

\begin{definition}[Infinite Advantage]
For a finite string $a$ and infinite string $b$, the advantage of $a$ to $b$ is defined as the minimum advantage that $a$ has over all substrings of $b$.
$$\adv(a, b) = \min_{b' = b[i, j]}\adv(a, b').$$
\end{definition}

We now define a family of binary strings called \emph{Alternating Strings}.

\begin{definition}[Alternating Strings]
For any positive integer $r$, we define the infinite alternating string of run-length $r$ as $A_r = (0^r1^r)^\infty$ and denote its prefix of length $l$ with $A_{r,l} = A_r[1,l]$.
\end{definition}

We finish the preliminaries by the following lemma stating some properties of the notions defined through this section.

\begin{lemma}\label{lem:advProperties}
The following properties hold true:
\begin{itemize}
    \item For any pair of binary strings $S_1, S_2$ where $\adv(S_1, S_2) > 0$, lengths of $S_1$ and $S_2$ are within a factor of two of each other, i.e, $\min(|S_1|, |S_2|) \geq \frac{\max(|S_1|, |S_2|)}{2}$.
    \item For any binary string $S$ and integer $r$, $\adv(S, A_{r}) \geq -\frac{1}{2}$
\end{itemize}
\end{lemma}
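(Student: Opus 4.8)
The plan is to verify the two bullet points directly from the definitions of matching, advantage, and infinite advantage. For the first property, let $M$ be the matching realizing $\LCS(S_1, S_2)$, so that $\adv(S_1, S_2) = \adv_M = \frac{3|M| - |S_1| - |S_2|}{|S_1|}$. Since $|M| \le \min(|S_1|, |S_2|)$, positivity of the advantage forces $3\min(|S_1|, |S_2|) \ge 3|M| > |S_1| + |S_2| \ge 2\max(|S_1|, |S_2|)$, i.e. $\min(|S_1|,|S_2|) > \tfrac{2}{3}\max(|S_1|,|S_2|) \ge \tfrac12 \max(|S_1|,|S_2|)$, which is the claimed bound (in fact slightly stronger). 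I would note that the inequality $|M| \le \min(|S_1|,|S_2|)$ is immediate since a matching of size $k$ requires $k$ distinct positions in each string.

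For the second property, fix a binary string $S$ and an integer $r$; by the definition of infinite advantage I must show that for \emph{every} finite substring $A_{r,l}' = A_r[i,j]$ of the alternating string, $\adv(S, A_{r,l}') \ge -\tfrac12$. Unwinding definitions, $\adv(S, A_{r,l}') = \frac{3|M| - |S| - |A_{r,l}'|}{|S|}$ where $M = \LCS(S, A_{r,l}')$, so the claim is equivalent to $3|M| \ge \tfrac12 |S| + |A_{r,l}'| $ — but wait, I should be careful about the normalization: the bound $\adv \ge -\tfrac12$ rearranges to $3|M| - |S| - |A_{r,l}'| \ge -\tfrac12 |S|$, i.e. $3|M| \ge \tfrac12 |S| + |A_{r,l}'|$. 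This cannot hold for all substrings (e.g. take $|A_{r,l}'|$ huge with $M$ small), so the key realization is that the $\min$ in the definition of infinite advantage is working in our favor: we only need a \emph{single} good substring of $A_r$ to witness a not-too-negative value, because $\adv(S, A_r)$ is the \emph{minimum}. Hmm — actually the minimum makes it harder, not easier. Let me reconsider: the right reading is that we want a lower bound that holds uniformly, so we need: for every substring $b'$ of $A_r$, $\adv(S, b') \ge -\tfrac12$. The way to get this is that whenever a substring $b'$ of $A_r$ is so long that matching it against $S$ gives a very negative raw count, the relevant quantity is still controlled because $|M| \ge $ the number of symbols of $S$ that agree in a long run — concretely, any binary string $S$ shares with a long enough block of $A_r$ at least $\max(\Count_0(S), \Count_1(S)) \ge |S|/2$ common symbols (match all the $0$'s of $S$ into one long run of $0$'s, or all the $1$'s into one long run of $1$'s, whichever substring of $A_r$ we are comparing against permits).

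So the core argument for the second bullet is: given $S$ and any substring $b' = A_r[i,j]$, consider the matching $M$ that pairs as many identical symbols as possible; I claim $|M| \ge |S|/2 - O(r)$ is not quite what's needed — instead I should argue $3|M| \ge |S| + |b'| - \tfrac12|S|$ is false in general, so the honest statement must be that $\adv(S,b')$, being $\frac{3|M|-|S|-|b'|}{|S|}$, and using $|M|\le |b'|$ and $|M| \le |S|$, gives $\adv(S,b') \ge \frac{3|M| - |S| - |b'|}{|S|} \ge \frac{3|M| - |S| - |b'|}{|S|}$; plug $|b'| \le$ ... . The clean way: by the definition of infinite advantage as a minimum over substrings, and since $\adv(S, \emptyset)$-type degenerate cases are excluded, the minimizing substring $b'$ can be taken with $|b'| \le 2|S|$ (a longer substring only decreases $3|M|-|S|-|b'|$ once $|M|$ saturates at $|S|$, but then shrinking $b'$ back to length $|M|$ does not decrease $|M|$ and increases the expression, so the minimizer has $|b'| \le 2|S|$ roughly); for such $b'$ one shows $|M| \ge |S|/4$ by matching the majority symbol of $S$ entirely into a single run of $b'$ of length $\ge r$, valid once we also observe runs of $A_r$ have length exactly $r$ and handle the case $|S| > r$ by a counting/pigeonhole argument on how $S$'s majority symbol distributes. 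I expect the bookkeeping around short substrings $b'$ (length comparable to or below $r$) and the precise constant $-\tfrac12$ to be the main obstacle; the right move is probably to split into the case where some run of $b'$ is long enough to absorb the majority symbol of $S$ and the case where $b'$ itself is short, handling the latter by the trivial bound $|M| \ge 0$ together with $|b'| \le |S|/2$ forced by $\adv > -\tfrac12$ being the boundary. I would carry out the two bullets in the order given, doing the first (which is a one-line count) as a warm-up and then devoting the bulk of the proof to the substring case analysis for the second.
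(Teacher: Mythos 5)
Your first bullet is correct and is essentially the paper's own argument: from $3\min(|S_1|,|S_2|) \ge 3|M| \ge |S_1|+|S_2| = \min(|S_1|,|S_2|)+\max(|S_1|,|S_2|)$ one gets $2\min \ge \max$; your slightly stronger constant $2/3$ is fine but not needed.

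The second bullet is where your proposal has a genuine gap, and the confusion you voice is real: read literally, the paper's definition of $\adv(a,b)$ for infinite $b$ as a \emph{minimum} over substrings makes the claim false, for exactly the reason you spotted (a substring $b'$ with $|b'| \gg |S|$ forces $\adv(S,b') \to -\infty$ since $|M| \le |S|$). Neither of your attempted repairs works: the uniform bound over all substrings is unprovable because it is false, and your claim that the minimizer can be taken with $|b'| \le 2|S|$ is backwards --- a minimizer wants the expression small, hence wants $b'$ as long as possible, so no finite minimizer exists. The resolution, which you circle around but never commit to, is that the quantity is used throughout the paper as a \emph{maximum} over substrings (the ``min'' in the definition is a typo): for instance, in the proof of \cref{lem:process-guarantees} the paper asserts $\adv(\hat v, \hat v \rightarrow M_i) \le \adv(\hat v, A_{r_i})$ for one specific substring $\hat v \rightarrow M_i$, which is only valid under the max reading. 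Under that reading the second bullet is a two-line witness argument whose ingredients you already have: let $b$ be the majority bit of $S$, occurring $m \ge |S|/2$ times, and let $S'$ be the shortest substring of $A_r$ beginning at the start of a $b^r$ block and containing $m$ copies of $b$. Then $|S'| \le 2m$ (the interleaved $\bar b^r$ blocks contribute at most $m$ extra symbols), the LCS is at least $m$ (match the $j$th occurrence of $b$ in $S$ to the $j$th in $S'$), and hence $\adv(S,S') \ge \frac{3m - |S| - 2m}{|S|} \ge \frac{-m}{|S|} \ge -\frac12$. No case analysis on $|S|$ versus $r$ or on short substrings is needed.
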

\begin{proof}
For the first part, let $M=\LCS(S_1, S_2)$. We have that 
    $\adv(S_1, S_2) \geq 0\Rightarrow 3|M|\geq|S_1|+|S_2|$,
    which, as $|M|\leq\min(|S_1|, |S_2|)$, implies that $\min(|S_1|, |S_2|) \geq \frac{\max(|S_1|, |S_2|)}{2}$.
    
    For the second part, let $n=|S|$ and assume that $b\in\{0, 1\}$ is the most frequent bit in $S$ and there are $m$ occurrences of $b$ in $S$. Take a substring $S'$ in $A_r$ as the smallest string that starts at the beginning of a $b^r$ block and contains the same number of $b$s as $S$. The size of $S'$ is no more than $2m$ and the longest common subsequence between $S$ and $S'$ is at least $m$. Therefore,
\noSTOC{\begin{equation*}
        \adv(S, A_{r})\geq \adv(S, S')\geq\frac{3|M|-|S|-|S'|}{|S|}\geq\frac{3m-2m-2m}{n} \geq \frac{-m}{n} \geq -\frac{1}{2}.\qedhere
    \end{equation*}}\STOConly{$\adv(S, A_{r})\geq \adv(S, S')\geq\frac{3|M|-|S|-|S'|}{|S|}\geq\frac{3m-2m-2m}{n} \geq \frac{-m}{n} \geq -\frac{1}{2}.$\qedhere}
\end{proof}

\section{Proof of \STOConly{Theorem \ref{thm:LowRateListDecBinaryCodes}}\noSTOC{\cref{thm:LowRateListDecBinaryCodes}}: List-Decoding for Bukh-Ma Codes}

To prove this theorem, we assume for the sake of contradiction that there exists a string $v$ and $k > \frac{1200}{\eps^3}$ members of ${C}_{n, \eps}$ like $A_{r_1, n}, A_{r_2, n}, \cdots, A_{r_k, n}$, so that each $A_{r_i, n}$ can be converted to $v$ with $I_i$ insertions and $D_i$ deletions where $I_i+2D_i \leq n(1-\eps)$. We define the indices in a way that $r_1 > r_2 > \cdots > r_k$. Given the definition of ${C}_{n, \eps}$, $r_{i} \geq \frac{r_{i+1}}{\eps^4}$. 
We first show that, for \noSTOC{all }$i=1, 2, \cdots, k$,  $\adv(v, A_{r_i, n}) \geq \frac{\eps}{2}$.

\begin{lemma}\label{lem:advantage}
For any $1\leq i\leq k$, $\adv(v, A_{r_i, n}) \geq \frac{\eps}{2}$.
\end{lemma}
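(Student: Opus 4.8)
The plan is to unpack the definitions and show that the hypothesis ``$A_{r_i,n}$ can be converted to $v$ with $I_i$ insertions and $D_i$ deletions, $I_i + 2D_i \le n(1-\eps)$'' directly forces a large common subsequence, hence large advantage. First I would fix $i$ and let $M$ be the matching between $v$ and $A_{r_i,n}$ induced by the given insdel conversion. Writing $a = v$ and $b = A_{r_i,n}$ (so $|b| = n$), \cref{rmk:advInterpretation} says $\adv_M(a,b) = \frac{|b|}{|a|}(1 - 2\delta_M - \gamma_M)$, where $\delta_M = D_i/n$ and $\gamma_M = I_i/n$. By hypothesis $2D_i + I_i \le n(1-\eps)$, i.e. $1 - 2\delta_M - \gamma_M \ge \eps$, so $\adv_M(a,b) \ge \eps \cdot \frac{n}{|v|}$. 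Since the LCS maximizes the advantage over all matchings, $\adv(v, A_{r_i,n}) \ge \adv_M(v, A_{r_i,n}) \ge \eps \cdot \frac{n}{|v|}$.

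It remains to bound $|v|$ from above by roughly $2n$, so that $\frac{n}{|v|} \ge \frac12$. The natural way is to use the conversion itself: $v$ is obtained from $b = A_{r_i,n}$, which has length $n$, by deleting $D_i$ symbols and inserting $I_i$ symbols, so $|v| = n - D_i + I_i$. From $I_i + 2D_i \le n(1-\eps) < n$ we get $I_i < n$, hence $|v| = n - D_i + I_i < 2n$ (the worst case $D_i = 0$, $I_i$ close to $n$). Therefore $\frac{n}{|v|} > \frac12$ and $\adv(v, A_{r_i,n}) > \frac{\eps}{2}$. (If the paper prefers to avoid reasoning about $|v|$ via one particular conversion, the same bound $|v| \le 2n$ for every relevant $i$ can instead be derived once at the start of the section, since a single $v$ is simultaneously close to all $k$ codewords; either route works.)

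I do not expect a genuine obstacle here — this is essentially bookkeeping with the definitions of advantage and \cref{rmk:advInterpretation}. The only mild subtlety is making sure that ``$\adv(v, A_{r_i,n})$'' refers to the finite string $A_{r_i,n}$ (the length-$n$ prefix) and not the infinite-advantage version, and that the matching $M$ furnished by the conversion is indeed a legal matching in the sense of \cref{def:matching}; both are immediate. One should also double-check the direction of the normalization in \cref{rmk:advInterpretation} (the $\frac{|b|}{|a|}$ factor is what makes the $\frac{n}{|v|}$ appear, and bounding $|v| \le 2n$ is exactly the step that converts $\eps \cdot \frac{n}{|v|}$ into $\frac{\eps}{2}$), but no hard estimation is involved.
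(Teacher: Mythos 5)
Your proposal is correct and follows essentially the same route as the paper: invoke \cref{rmk:advInterpretation} to get $\adv(v,A_{r_i,n}) \ge \eps\cdot\frac{n}{|v|}$ and then bound $\frac{n}{|v|}\ge\frac12$. The only (immaterial) difference is that you derive $|v| = n - D_i + I_i < 2n$ directly from counting the conversion operations, whereas the paper cites the first item of \cref{lem:advProperties}; both are valid.
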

\begin{proof}
Let $M_i$ denotes the matching that corresponds to the set of $I_i$ insertions and $D_i$ deletions that convert $A_{r_i, n}$ to $v$. 
$$I_i+2D_i \leq n(1-\eps) \Rightarrow n-I_i-2D_i \geq n\eps\Rightarrow 1-\gamma_i-2\delta_i \geq \eps
$$
Note that according to \cref{rmk:advInterpretation}, 
$\adv(v, A_{r_i, n}) = \frac{n}{|v|}\cdot(1-\gamma_i-2\delta_i)$. Thus,
$\adv(v, A_{r_i, n}) \geq \frac{n}{|v|} \eps \geq \frac{\eps}{2}$. 
The last step follows from the first item of \cref{lem:advProperties}.
\end{proof}

Having \cref{lem:advantage}, we are ready to prove \cref{thm:LowRateListDecBinaryCodes}. We start with defining a couple of sequences of random variables via random sampling of nested substrings of $v$.
We split the string $v$ into substrings of size $l_1 = r_1\eps^2$, pick one uniformly at random and denote it by $v_1$. We define random variable $A_1 = \adv(v_1, A_{r_1})$ and random variable $B_1 = \bias(v_1)$. Similarly, we split $v_1$ into substrings of length $l_2 = r_2\eps^2$ and pick $v_2$ uniformly at random and define $A_2 = \adv(v_2, A_{r_2})$ and $B_2 = \bias(v_2)$. Continuing this procedure, one can obtain the two sequences of random variables $A_1, A_2, \cdots, A_k$ and $B_1, B_2, \cdots, B_k$. We will prove the following.
\noSTOC{
\begin{lemma}\label{lem:process-guarantees}
The following hold for $A_1, A_2, \cdots, A_k$ and $B_1, B_2, \cdots, B_k$.
\begin{enumerate}
\item $\E[B_i] = \bias(v)$
\item $\E[A_i] \geq \frac{\eps}{2}$
\end{enumerate}
\end{lemma}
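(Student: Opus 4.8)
The two parts share a common mechanism: both $\bias(\cdot)$ and $\adv(\cdot, A_r)$ are \emph{monotone in expectation} under the single operation ``partition a string into equal-length blocks and keep a uniformly random one,'' and the nested sampling defining $v_1,\dots,v_k$ is nothing but an $i$-fold iteration of this operation (with block lengths $l_1 > l_2 > \cdots$). So the plan is, for each quantity, to prove a one-level statement and then iterate it through the levels $j=1,\dots,i$ using the tower property of conditional expectation.

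For part (1) the one-level statement is the exact identity: if $w$ is split into $m$ equal-length blocks $w^{(1)},\dots,w^{(m)}$, then $\frac1m\sum_j \bias(w^{(j)}) = \bias(w)$, since $\Count_0$ and $\Count_1$ are additive over the partition and every block has the same length $|w|/m$. Applying this with $w=v_{j-1}$ and blocks of length $l_j$ gives $\E[B_j \mid v_{j-1}] = \bias(v_{j-1}) = B_{j-1}$, hence $\E[B_i] = \E[B_{i-1}] = \cdots = \E[B_1] = \bias(v)$. (If $|v_{j-1}|$ is not an exact multiple of $l_j$ one invokes the sampling convention fixed in the construction; this is a negligible boundary effect and changes nothing.)

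For part (2) the one-level statement is the inequality: for any string $w$, any period $r$, and any block length $l\le |w|$, if $w$ is split into equal-length-$l$ blocks and $w'$ is a uniformly random one, then $\E[\adv(w', A_r)] \ge \adv(w, A_r)$. To prove it, fix the window $A_r[c,d]$ and matching $M$ witnessing $\adv(w,A_r)$, so that $\adv(w,A_r) = \tfrac{3|M| - |w| - (d-c+1)}{|w|}$ (using the definition of advantage to an infinite string exactly as in the proof of \cref{lem:advProperties}). Since $M$ is monotone, it restricts on block $w^{(j)}$ to a matching $M^{(j)}$ with a sub-window $A_r[a_j,b_j]\subseteq A_r[c,d]$; these sub-windows are pairwise disjoint along $A_r$, so $\sum_j (b_j-a_j+1) \le d-c+1$, while $\sum_j |M^{(j)}| = |M|$. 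Therefore $\adv(w^{(j)}, A_r) \ge \adv(w^{(j)}, A_r[a_j,b_j]) \ge \tfrac{3|M^{(j)}| - l - (b_j-a_j+1)}{l}$ (an empty block contributes at least $-1$, consistent with this bound, or $-\tfrac12$ by \cref{lem:advProperties}), and averaging over the $m=|w|/l$ blocks — where, crucially, every denominator equals $l$, which is exactly what the ``unnatural'' normalization of $\adv$ by the length of the received string buys us — yields $\E_j[\adv(w^{(j)},A_r)] \ge \tfrac{3|M| - ml - \sum_j(b_j-a_j+1)}{ml} \ge \tfrac{3|M| - |w| - (d-c+1)}{|w|} = \adv(w,A_r)$. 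Now iterate: since the block lengths satisfy $l_j = r_j\eps^2 \le r_j/\eps^2 \le l_{j-1}$, each level is a legitimate refinement, and the tower property gives $\E[A_i] = \E[\adv(v_i, A_{r_i})] \ge \E[\adv(v_{i-1}, A_{r_i})] \ge \cdots \ge \adv(v, A_{r_i})$. Finally $A_{r_i,n}$ is itself a window of $A_{r_i}$, so $\adv(v, A_{r_i}) \ge \adv(v, A_{r_i,n}) \ge \tfrac{\eps}{2}$ by \cref{lem:advantage}, completing part (2).

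The main obstacle is the one-level inequality for $\adv$: one must set up the monotone-matching decomposition so that the per-block windows are disjoint with controlled total length, and then check that, after passing from each finite sub-window back to the infinite alternating string $A_r$, the per-block advantages average back to $\adv(w,A_r)$ — which works precisely because the advantage is normalized by the (common) block length rather than by the (varying) window lengths, so the average telescopes. Divisibility, boundary, and empty-block issues are all handled by the sampling convention and the crude bound $\adv(\cdot, A_r)\ge -\tfrac12$ of \cref{lem:advProperties}, and do not affect the conclusion.
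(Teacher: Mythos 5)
Your proposal is correct and takes essentially the same approach as the paper: both parts rest on restricting the optimal matching to the equal-length blocks and exploiting the fact that $\adv$ is normalized by the common block length, so the per-block lower bounds average back exactly to the global quantity. The only cosmetic difference is that you iterate a one-level statement through the nested levels via the tower property, whereas the paper observes that $v_i$ is simply a uniformly random length-$l_i$ block of $v$ and applies the same computation once.
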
}
\STOConly{
\begin{lemma}\label{lem:process-guarantees}
The following hold for $A_1, \cdots, A_k$ and $B_1, \cdots, B_k$:\\
$\quad (1) \E[B_i] = \bias(v),\quad (2) \E[A_i] \geq \frac{\eps}{2}.$
\end{lemma}}
\begin{proof}
Note that one can think of $v_i$ as a substring of $v$ that is obtained by splitting $v$ into substrings of length $l_i$ and choosing one uniformly at random. Let $U$ denote the set of all such substrings. We have that
\begin{eqnarray*}
\E[B_i] &=& \sum_{\hat{v}\in U}\frac{1}{|U|}\cdot\bias(\hat{v})
= \frac{1}{|U|}\sum_{\hat{v}\in U}\frac{\Count_1(\hat{v})-\Count_0(\hat{v})}{l_i} \STOConly{\\&}=\STOConly{&} \frac{\Count_1(v)-\Count_0(v)}{|U|\cdot l_i} = \bias(v).
\end{eqnarray*}

A similar argument proves the second item. Take the matching $M_i$ between $v$ and $A_{r_i, n}$ that achieves the advantage $\adv(v, A_{r_i, n})$, i.e., the largest matching between $v$ and $A_{r_i, n}$. Take some $\hat{v}\in U$; $\hat{v}$ is mapped to some substring in $A_{r_i, n}$ under $M_i$. We call that substring of $\hat{v}$, \emph{the projection of $\hat{v}$ under $M_i$} and denote it by $\hat{v} \rightarrow M_i$. We also represent the subset of $M_i$ that appears between $\hat{v}$ and $\hat{v} \rightarrow M_i$ with $M_i[\hat{v}]$.

For a $\hat{v}\in U$, we define $a(\hat{v})$ as the value for advantage that is yielded by the matching $M_i[\hat{v}]$ between $\hat{v}$ and $\hat{v} \rightarrow M_i$. In other words, $a(\hat{v}) = \frac{3|M_i[\hat{v}]| - |\hat{v}|- |\hat{v} \rightarrow M_i|}{|\hat{v}|}$. Given the definitions of advantage and infinite advantage, we have that
\noSTOC{$$a(\hat{v}) \leq \adv(\hat{v}, \hat{v} \rightarrow M_i) \leq \adv(\hat{v}, A_{r_i}).$$}\STOConly{$a(\hat{v}) \leq \adv(\hat{v}, \hat{v} \rightarrow M_i) \leq \adv(\hat{v}, A_{r_i}).$}
This can be used to prove the second item as follows:
\begin{eqnarray*}
\E[A_i] &=& \sum_{\hat{v}\in U} \frac{1}{|U|}\cdot\adv(\hat{v}, A_{r_i}) \geq \frac{1}{|U|}\cdot\sum_{\hat{v}\in U} a(\hat{v})\\
&=&\frac{1}{|U|}\cdot\sum_{\hat{v}\in U} \frac{3|M_i[\hat{v}]|-|\hat{v}|-|\hat{v}\rightarrow M_i|}{|\hat{v}|} \STOConly{\\&}=\STOConly{&}\frac{1}{|U|\cdot|\hat{v}|}\cdot\sum_{\hat{v}\in U} \left(3|M_i[\hat{v}]|-|\hat{v}|-|\hat{v}\rightarrow M_i|\right)\\
&=&\frac{1}{|v|}\cdot \left(3|M_i|-|v|-|A_{r_i, n}|\right)
=\adv(v, A_{r_i, n})\geq \frac{\eps}{2}
\end{eqnarray*}
where the last step follows from \cref{lem:advantage}.
\end{proof}

\begin{lemma}\label{lem:process-variance-guarantee}
For the sequence $B_1, B_2, \cdots, B_k$, we have
\noSTOC{$$\Var(B_{i+1}) \geq \Var(B_i) + \frac{\eps^3}{1200}, \quad \forall 1\leq i < k.$$}
\STOConly{$\Var(B_{i+1}) \geq \Var(B_i) + \frac{\eps^3}{1200}, \quad \forall 1\leq i < k.$}
\end{lemma}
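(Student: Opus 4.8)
The goal is to show that each sub-sampling step adds a definite amount of variance to the bias. The driving intuition is that $v_i$ is still non-trivially correlated (advantage $\geq \eps/2$ in expectation, by \cref{lem:process-guarantees}) with the alternating string $A_{r_i}$, and since we sub-sample $v_{i+1}$ at scale $l_{i+1} = r_{i+1}\eps^2 \ll r_i$, a typical window $v_{i+1}$ lands \emph{inside} one of the long $0^{r_i}$ or $1^{r_i}$ runs that $v_i$ must contain (in an amortized sense) to sustain that correlation. Landing inside such a run forces $v_{i+1}$ to have bias bounded away from $0$ in absolute value, and since the two run-types push the bias in opposite directions, this spreads out the conditional distribution of $B_{i+1}$ given $v_i$, producing extra variance.

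\textbf{Key steps.} First, I would use the law of total variance: $\Var(B_{i+1}) = \E_{v_i}\!\left[\Var(B_{i+1}\mid v_i)\right] + \Var_{v_i}\!\left(\E[B_{i+1}\mid v_i]\right)$. By the averaging identity already proved in \cref{lem:process-guarantees}, $\E[B_{i+1}\mid v_i] = \bias(v_i) = B_i$ (the conditional expectation over the sub-windows of a fixed $v_i$ is exactly its bias), so the second term is precisely $\Var(B_i)$. Hence it suffices to show $\E_{v_i}\!\left[\Var(B_{i+1}\mid v_i)\right] \geq \frac{\eps^3}{1200}$. Second, I would fix $v_i$ and lower bound $\Var(B_{i+1}\mid v_i)$ in terms of $\adv(v_i, A_{r_i})$: partition $v_i$ into consecutive windows of length $l_{i+1}$; the matching achieving $\adv(v_i, A_{r_i})$ projects each window to a substring of $A_{r_i}$ of comparable length (within a factor $2$, by \cref{lem:advProperties}), and because $l_{i+1} \ll r_i$, all but a vanishing fraction of windows project entirely within a single monochromatic run of $A_{r_i}$. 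For such a window, a large matched fraction (advantage close to its contribution) forces the window to be dominated by that run's symbol, i.e.\ $|\bias(v_{i+1})|$ is bounded below on a constant fraction of the ``mass'' of correlation; windows projecting to $0$-runs and windows projecting to $1$-runs then have conditional biases separated by a constant, forcing $\Var(B_{i+1}\mid v_i)$ to be at least a constant times (a suitable function of) $\adv(v_i, A_{r_i})$. Third, I would take expectation over $v_i$ and plug in $\E[\adv(v_i,A_{r_i})] = \E[A_i] \geq \eps/2$ from \cref{lem:process-guarantees}, together with the lower bound $\adv \geq -1/2$ from \cref{lem:advProperties} to control the contribution from ``bad'' windows, and turn this into the claimed absolute constant $\eps^3/1200$. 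The powers of $\eps$ should emerge as: one factor from $\E[A_i]\geq \eps/2$, one or two more from converting an advantage lower bound into a bias-magnitude lower bound (since advantage is a normalized $3|M|-|a|-|b|$ quantity, extracting a bias gap from it costs a factor roughly $\eps$), and the scale separation $l_{i+1} = r_{i+1}\eps^2 \leq \eps^2 r_i$ ensures the boundary-window loss is lower order.

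\textbf{Main obstacle.} The delicate part is the second step: turning ``$v_i$ has advantage $\geq$ something with $A_{r_i}$'' into ``a constant fraction of length-$l_{i+1}$ windows of $v_i$ are biased (in absolute value) by a constant, with the sign controlled by which run they project into.'' Advantage is an $L_1$-type averaged quantity, so a priori the correlation could be spread thinly and evenly rather than concentrated in runs — I need to argue that because $A_{r_i}$ is literally constant on each run of length $r_i \gg l_{i+1}$, any matching into it \emph{must} route each small window into essentially one run, and a window with high matched fraction into a constant string is itself nearly constant, hence biased. Handling the $O(l_{i+1}/r_i)$ fraction of windows straddling a run boundary (whose bias could be anything, but is bounded, and whose total advantage contribution is bounded via \cref{lem:advProperties}) is the bookkeeping that pins down the constant $1/1200$; I would not grind through it here but would set it up as: (good windows) contribute $\geq c\cdot\E[A_i]$ to the variance, (bad windows) lose at most $O(\eps^2)\cdot$(bounded) which is absorbed, yielding the stated increment.
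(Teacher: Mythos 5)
Your overall architecture (the law of total variance to reduce to $\E[\Var(B_{i+1}\mid v_i)]\ge \eps^3/1200$, classifying the length-$l_{i+1}$ sub-windows by which monochromatic run their projection lands in, discarding an $O(\eps^2)$ fraction of boundary-straddling windows, and separating the conditional means of the two run-types) is exactly the paper's, but there is one substantive error that breaks your second step: you measure the correlation of $v_i$ against the \emph{wrong} alternating string. You propose to lower bound $\Var(B_{i+1}\mid v_i)$ in terms of $\adv(v_i, A_{r_i})$ and to classify windows by which run of $A_{r_i}$ they project into. But $|v_i| = l_i = \eps^2 r_i \ll r_i$, so (by the first item of \cref{lem:advProperties}) the entire projection of $v_i$ into $A_{r_i}$ has length at most $2\eps^2 r_i$ and hence meets at most two consecutive runs of $A_{r_i}$. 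There is then no lower bound on the probability mass of the minority run-type, and the claim that both run-types occur with probability comparable to the advantage fails. Concretely, if $v_i = 0^{l_i}$ then $\adv(v_i, A_{r_i}) = 1$ while every sub-window is constant, so $\Var(B_{i+1}\mid v_i)=0$; no bound of the form $\Var(B_{i+1}\mid v_i)\ge h\bigl(\adv(v_i,A_{r_i})\bigr)$ with $h$ positive on $[\eps/2,1]$ can hold, and averaging over $v_i$ does not rescue it. Intuitively, the correlation with scale $r_i$ was already ``spent'' creating the variance present in $B_i$; reusing it at step $i\to i+1$ double-counts.

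The quantity that drives the increment is $\adv(v_i, A_{r_{i+1}})$, the correlation with the \emph{next, finer} frequency. Since $r_i \ge r_{i+1}/\eps^4$, one has $|v_i| = \eps^2 r_i \ge r_{i+1}/\eps^2 \gg r_{i+1}$, so the projection of $v_i$ into $A_{r_{i+1}}$ spans many periods: each of $U_0$ and $U_1$ receives probability at least about $(\adv(v_i,A_{r_{i+1}})-8\eps^2)/8$, while each sub-window of length $l_{i+1}=\eps^2 r_{i+1}$ lands inside a single run except for a $\le 2\eps^2$ fraction. The separation of the two conditional means is then the paper's \cref{lem:polarization}, and passing from the per-$v_i$ bound to the expectation requires truncating the resulting (roughly cubic) function of the advantage to a convex $g$ and applying Jensen with $\E[\adv(v_i,A_{r_{i+1}})]\ge \eps/2$; your appeal to $\adv \ge -1/2$ alone does not justify moving the expectation inside. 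With $A_{r_i}$ replaced by $A_{r_{i+1}}$ throughout your second and third steps (and the Jensen step made explicit), your plan coincides with the paper's proof.
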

\begin{proof}
To analyze the relation of $\Var(B_i)$ and $\Var(B_{i+1})$, we use the law of total variance and condition the variance of $B_{i+1}$ on $v_i$, i.e., the substring chosen in the $i$th step of the stochastic process, from which we sub sample $v_{i+1}$.
\begin{eqnarray}
\Var(B_{i+1})&=&\Var\left(\E[B_{i+1}|v_{i}]\right) + \E\left[\Var(B_{i+1}|v_{i})\right]\nonumber\\
&=&\Var\left(B_{i}\right) + \E\left[\Var(B_{i+1}|v_{i})\right]\label{eqn:totalVar1}
\end{eqnarray}
Equation \eqref{eqn:totalVar1} comes from the fact that the average bias of substrings of length $l_{i+1}$ in $v_i$ is equal to the bias of $v_i$. Having this, we see that it suffices to show that $\E\left[\Var(B_{i+1}|v_{i})\right] \geq \eps^3/1200$. We remind the reader that $v_{i+1}$ is obtained by splitting $v_i$ into substrings of length $l_{i+1} = r_{i+1}\eps^2$ and choosing one at random. We denote the set of such substrings by $U$. Also, there is a matching $M_i$ between $v_i$ and $A_{r_{i+1}}$ with advantage $\eps$ or more. Any substring of length $l_{i+1}$ is mapped to some substring in $A_{r_{i+1}}$, i.e., its projection of the substring under $M_i$. Note there are three different possibilities for such projection. It is either an all zeros string, an all one string, or a string that contains both zeros and ones. We partition $U$ into three sets $U_0$, $U_1$, and $U_e$ based on which case the projection belongs to. (See \cref{fig:substringTypeAlternting})

\begin{figure}[]
  \centering
  \noSTOC{\includegraphics[width=.75\textwidth]{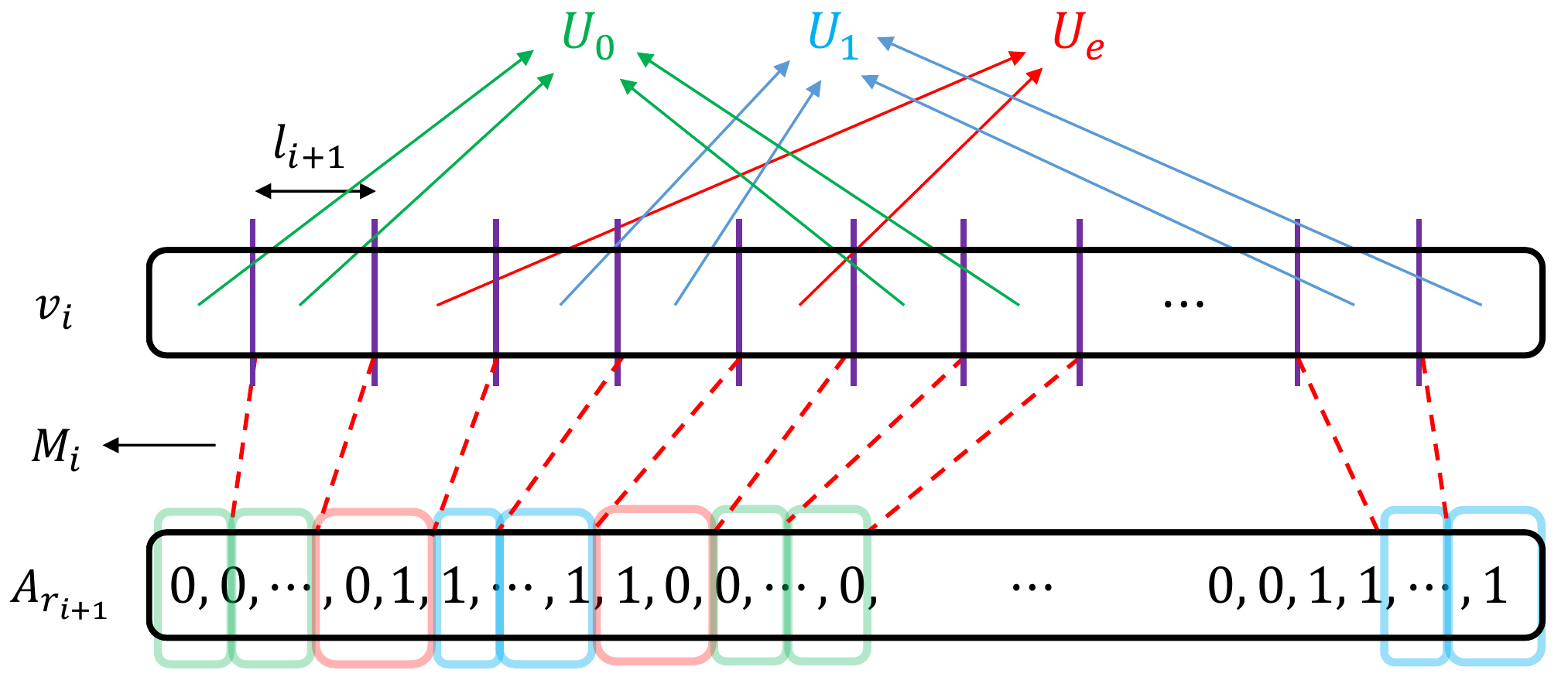}}
  \STOConly{\includegraphics[width=\linewidth]{substringTypeAlternting.pdf}}
  \caption{Partitioning substrings of length $l_{i+1}$ into three sets $U_0, U_1, U_e$}\label{fig:substringTypeAlternting}
\end{figure}

We partition the sample space into three events $E_0$, $E_1$, and $E_e$ based on whether $v_{i+1}$ belongs to $U_0$, $U_1$, or $U_e$ respectively. We also define the random variable $T$ over $\{0, 1, e\}$ that indicates which one of $E_0$, $E_1$, or $E_e$ happens. Once again, we use the law of total variance to bound $\E\left[\Var(B_{i+1}|v_i)\right]$.
\noSTOC{
\begin{eqnarray}
\E\left[\Var(B_{i+1}|v_{i})\right] &=&\E_{v_i}\big[\Var_{T}(\E\left[B_{i+1}|v_i, T\right]) + \E_{T}\left[\Var(B_{i+1}|v_i, T)\right]\big]\nonumber\\
&\geq& \E_{v_i}\big[\Var_{T}(\E\left[B_{i+1}|v_i, T\right])\big]\label{eqn:varLoweboundStep2}
\end{eqnarray}}
\STOConly{
\begin{eqnarray}
\E\left[\Var(B_{i+1}|v_{i})\right] &=&\E_{v_i}\big[\Var_{T}(\E\left[B_{i+1}|v_i, T\right])\nonumber\\ &&+ \E_{T}\left[\Var(B_{i+1}|v_i, T)\right]\big]\nonumber\\
&\geq& \E_{v_i}\big[\Var_{T}(\E\left[B_{i+1}|v_i, T\right])\big]\label{eqn:varLoweboundStep2}
\end{eqnarray}}
Note that the term $\Var_{T}(\E\left[B_{i+1}|v_i, T\right])$ refers to variance of a 3-valued random variable that takes the value $\E_{v_i}\left[B_{i+1}|v_i, T=t\right]$ with probability $\Pr\{T=t|v_i\}$ for $t\in\{0, 1, e\}$. We use three important facts about this distribution to bound its variance from below. 

First, $\Pr\{T=e|v_i\} \leq 2\eps^2$. To see this, note that the run length of $A_{r_{i+1}}$ is $r_{i+1}  =\frac{l_{i+1}}{\eps^2}$ and the length of the projection of $v_i$ in $A_{r_i}$ under the matching that yields the optimal $\adv(v_i, A_{r_i})$ is no more than $2|v_i|=2l_i$ (See~\cref{lem:advProperties}). 
Therefore, $|U_e|\leq \frac{2 l_i}{r_{i+1}}$ and consequently no more that a $\frac{2 l_i / r_{i+1}}{l_i/l_{i+1}} = 2\eps^2$ fraction of strings in $U$ might be mapped to a substring of $A_{r_{i+1}}$ that crosses the border of some $0^{r_{i+1}}$ and $1^{r_{i+1}}$ intervals. 

Secondly, for any $j\in\{0, 1\}$, $\Pr\{T=j|v_i\} \geq \frac{\adv(v_i, A_{r_{i+1}})-8\eps^2}{8}$. This can be showed as follows. Let $M_i^j$ represent the subset of pairs of $M_i$ with one end in $U_j$ for $j\in\{0, 1, e\}$ and $v_i\rightarrow M_i$ represent the substring of $A_{r_{i+1}}$ where $v_i$ is projected under $M_i$. 
Note that $\Pr\{T=j|v_i\} = \frac{|U_j|}{|U|} = \frac{|U_j|\cdot l_i}{|v_i|}\geq \frac{|M_i^j|}{|v_i|} \geq \frac{|M_i^j|}{2|v_i\rightarrow M_i|}$.
Assume for contradiction that $\Pr\{T=j|v_i\} < \frac{\adv(v_i, A_{r_{i+1}})-8\eps^2}{8}$ for some $j$. Then, $|M_i^j| < |v_i\rightarrow M_i|\frac{\adv(v_i, A_{r_{i+1}})-8\eps^2}{4}$, which since $|M_i^{j'}|\leq\frac{|v_i\rightarrow M_i|}{2}$ for $j'\in\{0, 1\}$ and $|M^e_i|\leq 2\eps^2|v_i\rightarrow M_i|$, gives that $|M_i|<|v_i\rightarrow M_i|\left(\frac{1}{2} + 2\eps^2 + \frac{\adv(v_i, A_{r_{i+1}})-8\eps^2}{4}\right) =
|v_i\rightarrow M_i|\left(\frac{1}{2} + \frac{\adv(v_i, A_{r_{i+1}})}{4}\right)$.
However,
\noSTOC{
\begin{equation*}
\adv_{M_i} = \frac{3|M_i|-|v_i|-|p|}{|v_i|}
\Rightarrow 2|M_i|-|p|\geq |v_i|\adv_{M_i}\Rightarrow |M_i| \geq |p|\left(\frac{1}{2}+\frac{\adv_{M_i}}{4}\right).
\end{equation*}}
\STOConly{$\adv_{M_i} = \frac{3|M_i|-|v_i|-|p|}{|v_i|}
\Rightarrow 2|M_i|-|p|\geq |v_i|\adv_{M_i}\Rightarrow |M_i| \geq |p|\left(\frac{1}{2}+\frac{\adv_{M_i}}{4}\right).$}
This contradiction implies that $\Pr\{T=j|v_i\} \geq \frac{\adv(v_i, A_{r_{i+1}})-8\eps^2}{8}$.

The third and final important ingredient is provided by the following lemma that we prove later on.
\begin{lemma}\label{lem:polarization}
The following holds true:
$$\Big|\E\left[B_{i+1}|v_i, T=0\right] - \E\left[B_{i+1}|v_i, T=1\right]\Big| \geq  \frac{\adv(v_i, A_{r_{i+1}}) - 5\eps^2}{3}$$
\end{lemma}

To summarize, the above three properties imply that we have a three-valued random variable where the probability for one value is minuscule and there is at least $[\adv(v_i, A_{r_{i+1}}) - 5\eps^2]/3$ difference between the other two values each occurring with adequately large probabilities. This is enough for us to bound below the variance of such random variable. The following straightforward lemma abstracts this.
\begin{lemma}\label{lem:threeValuedVariance}
Let $X$ be a random variable that can take values $a_0$, $a_1$, and $a_2$ where
$\Pr\{X=a_i\} \geq \xi$ for $i\in\{0, 1\}$. Then, we have that 
$\Var(X) \geq \frac{\xi}{2}(a_0-a_1)^2$.
\end{lemma}
\noSTOC{\begin{proof}
$\Var(X)=\sum_{a_i}\Pr\{X=a_i\}(a_i-\bar{X})^2\geq \xi \left[(a_0-\bar{X})^2+(a_1-\bar{X})^2\right] \geq \frac{\xi}{2}(a_0-a_1)^2$.
\end{proof}}
Applying \cref{lem:threeValuedVariance} to our random variable gives that:
\noSTOC{\begin{equation*}
\Var_{T}(\E\left[B_{i+1}|v_i, T\right]) \geq \frac{1}{144}\left(\adv(v_i, A_{r_{i+1}})-8\eps^2\right)\left(\adv(v_i, A_{r_{i+1}}) - 5\eps^2\right)^2
\end{equation*}}
\STOConly{
$$\Var_{T}(\E\left[B_{i+1}|v_i, T\right]) \geq \frac{\left(\adv(v_i, A_{r_{i+1}})-8\eps^2\right)\left(\adv(v_i, A_{r_{i+1}}) - 5\eps^2\right)^2}{144}
$$}
Note the right hand side of this inequality is negative when $\adv(v_i, A_{r_{i+1}})\leq8\eps^2$. Therefore, we define function $g(x)$ as a function that takes value of $\frac{(x-8\eps^2)(x-5\eps^2)}{144}$ when $x > 8\eps^2$ and zero otherwise. Note that $g$ is a convex function. We have that 
\begin{equation}
\Var_{T}(\E\left[B_{i+1}|v_i, T\right]) \geq g(\adv(v_i, A_{r_{i+1}}))\label{lem:3ValVarLowerBound}
\end{equation}

Plugging \eqref{lem:3ValVarLowerBound} into \eqref{eqn:varLoweboundStep2} gives that 
\noSTOC{\begin{eqnarray}
\E\left[\Var(B_{i+1}|v_{i})\right] &\geq& \E_{v_i}\big[\Var_{T}(\E\left[B_{i+1}|v_i, T\right])\big]
\geq\E_{v_i}\big[g(\adv(v_i, A_{r_{i+1}}))\big]\nonumber\\
&\geq&g\left(\E_{v_i}\big[\adv(v_i, A_{r_{i+1}})\big]\right) = g(\E[A_{i+1}])\label{eqn:jensen}\\
&\geq& g\left(\frac{\eps}{2}\right) = \frac{\eps^3}{1152}+o(\eps^3)\label{eqn:lastStepVarLower}
\end{eqnarray}}
\STOConly{\begin{eqnarray}
\E\left[\Var(B_{i+1}|v_{i})\right] &\geq& \E_{v_i}\big[\Var_{T}(\E\left[B_{i+1}|v_i, T\right])\big] \nonumber\\&
\geq &\E_{v_i}\big[g(\adv(v_i, A_{r_{i+1}}))\big]\nonumber\\
&\geq&g\left(\E_{v_i}\big[\adv(v_i, A_{r_{i+1}})\big]\right) = g(\E[A_{i+1}])\label{eqn:jensen}\\
&\geq& g\left(\eps/2\right) = \eps^3/1152+o(\eps^3)\label{eqn:lastStepVarLower}
\end{eqnarray}}
where \eqref{eqn:jensen} follows from the Jensen inequality and \eqref{eqn:lastStepVarLower} follows from \cref{lem:process-guarantees} and the fact that $g$ is an increasing function. Note that the right hand side is at least $\frac{\eps}{1200}$ for sufficiently small $\eps$. This completes the proof of \cref{lem:process-variance-guarantee} (With the exception of \cref{lem:polarization}).
\end{proof}

With \cref{lem:process-variance-guarantee} proved, one can easily prove \cref{thm:LowRateListDecBinaryCodes}.

\begin{proof}[{\bf Proof of \cref{thm:LowRateListDecBinaryCodes}}]
Since $\Var(B_{i+1}) \geq \Var(B_i) + \eps^3/1200$, we have that
\noSTOC{$$\Var(B_{k})\geq\Var(B_1) + (k-1)\frac{\eps^3}{1200} \geq \frac{(k-1)\eps^3}{1200}.$$}
\STOConly{$\Var(B_{k})\geq\Var(B_1) + (k-1)\frac{\eps^3}{1200} \geq \frac{(k-1)\eps^3}{1200}.$}
If $k > \frac{1200}{\eps^3}$, the above inequality implies that $\Var(B_{k}) > 1$ which is impossible since $B_{k}$ takes value in $[-1, 1]$. This contradiction implies that the list size $k \leq \frac{1200}{\eps^3}$.
\end{proof}

We now proceed to the proof of \cref{lem:polarization}.

\subsection{Proof of \cref{lem:polarization}}
\pushQED{\qed} 
Consider $v_i$ and the matching that yields the optimal advantage from $v_i$ to $A_{r_{i+1}}$, denoted by $M_i$. We denote the substring of $A_{r_{i+1}}$ that is identified by the projection of $v_i$ under $M_i$ as $p = v_i\rightarrow M_i$.
To simplify the analysis, we perform a series of transformations on $v_i$, $M_i$, and $p$ that does not decrease $\adv_{M_i}$ except by a small quantity. \cref{fig:transformtion} depicts the steps of this transformation described below.

\begin{figure}[]
  \centering
  \includegraphics[height=4in]{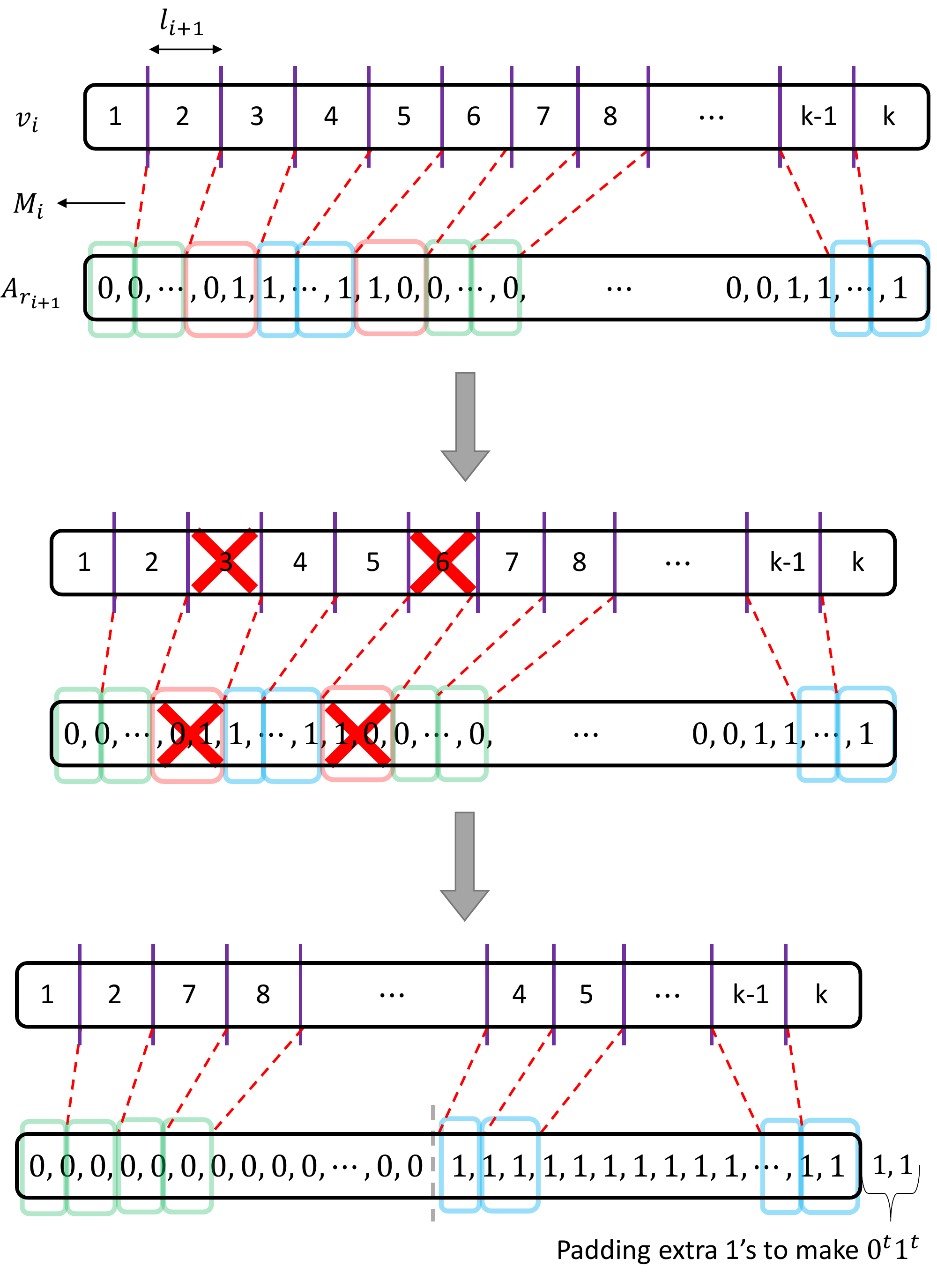}\\
  \caption{Three steps of transformation in \cref{lem:polarization}.}\label{fig:transformtion}
\end{figure}

\begin{enumerate}
\item \label{step:modificationStepOne}First, we delete all substrings of $U_e$---i.e., substrings of length $l_i$ in $v_i$ whose projection contain both \noSTOC{zeros and ones}\STOConly{0s and 1s}---from $v_i$.
\item We reorder the substrings of length $l_{i+1}$ in $v_i$ by shifting all $U_0$ substrings together and all $U_1$ substrings together. We accordingly shift the projections of these strings in $p$ to the similar order. This was, the remainder of $M_i$ from step \ref{step:modificationStepOne} will be preserved as a valid matching between reordered strings.
\item At this point, string $p$ consists of a stretch of zeros followed by a stretch of ones. If the length of two stretches are not equal, we add adequate zeros or ones to the smaller stretch to make $p$ have the form of $0^t1^t$.
\end{enumerate}

To track the changes in $\adv_{M_i}$ during this transformation, we track how $|M_i|$, $|v_i|$ and $|p|$ change throughout the three steps mentioned above.

In the first step, a total of up to 
$|U_e| l_{i+1}$
elements are removed from $v_i$ and $M_i$. Note that since the run length of $A_{r_{i+1}}$ is $r_{i+1}$, there can only be $\frac{|p|}{r_{i+1}}$ substrings in $U_e$. Therefore,
\noSTOC{$$|U_e| l_{i+1} \leq \frac{|p|l_{i+1}}{r_{i+1}} = |p|\eps^2 \leq 2\eps^2|v_i|.$$}\STOConly{$|U_e| l_{i+1} \leq \frac{|p|l_{i+1}}{r_{i+1}} = |p|\eps^2 \leq 2\eps^2|v_i|.$}

The second step preserves $|M_i|$, $|v_i|$ and $|p|$.

Finally, since $p$ is a substring of $A_{r_{i+1}}$, the third step increases $|p|$ only by up to $r_{i+1}$. Note the run length of the $A_{r_{i+1}}$s and consequently $l_{i+1}$s are different by a multiplicative factor of at least $\frac{1}{\eps^4}$ by the definition of the code $\mathcal{C}$. Therefore,
$r_{i+1} = \frac{l_{i+1}}{\eps^2} = \frac{l_{i+1}|v_i|}{\eps^2|v_i|} = \frac{l_{i+1}|v_i|}{\eps^2 l_{i}}
\leq \eps^2|v_i|$. 

Overall, the value of the $\adv_{M_i} = \frac{3|M|-|p|-|v_i|}{|v_i|}$ can be affected by a maximum of $(3-1)\times2\eps^2|v_i| + \eps^2|v_i| = 5\eps^2|v_i|$ decrease in the numerator and $\eps^2|v_i|$ decrease in the denominator. Therefore, the eventual advantage does not drop below $\adv_{M_i} - 5\eps^2$.
Let us denote the transformed versions of $v_i$, $p$, and $M_i$ by $\bar{v}_i$, $\bar{p}$, and $\bar{M}_i$ respectively. We have shown that
\begin{equation}
\adv_{\bar{M}_i} \geq \adv_{M_i} - 5\eps^2.\label{eqn:advLowerBoundAfterTransformation}
\end{equation}
Further, let $\bar{v}_i = (\bar{v}_i^0, \bar{v}_i^1)$ so that $\bar{v}_i^0$ and $\bar{v}_i^1$ respectively correspond to the part of $\bar{v}_i$ that is mapped to $0^t$ and $1^t$ under $\bar{M}_i$.
Consider the matching between $\bar{v}_i$ and $\bar{p}$ that connects as many zeros as possible between the $\bar{v}_i^0$ and $0^t$ and as many ones as possible between the $\bar{v}_i^1$ to $1^t$ portion of $\bar{p}$. Clearly, the size of $\bar{M}_i$ cannot exceed the size of this matching and therefore, 
\begin{equation}\label{eqn:01advantage}
\adv_{\bar{M}_i} \leq \frac{3\left[\min\{t, \Count_0(\bar{v}_i^0)\}+\min\{t, \Count_1	(\bar{v}_i^1)\}\right] - |\bar{v}_i| - 2t}{|\bar{v}_i|}
\end{equation}
Note that as long as $t<\Count_0(\bar{v}_i^0)$ or $t<\Count_1(\bar{v}_i^1)$, increasing $t$ in the right hand side term does not make it smaller. Therefore, the inequality \eqref{eqn:01advantage} holds for $t=\max_{j\in\{0, 1\}}\{\Count_j(\bar{v}_i^j)\}$.
Without loss of generality, assume that $\Count_0(\bar{v}_i^0) \leq \Count_1(\bar{v}_i^1)$ and set $t=\Count_1(\bar{v}_i^1)$. Then we have the following.
\noSTOC{\begin{eqnarray}
&&\adv_{\bar{M}_i} \leq \frac{3 \Count_0(\bar{v}_i^0)+\Count_1	(\bar{v}_i^1) - |\bar{v}_i|}{|\bar{v}_i|}\nonumber\\
&\Rightarrow& \adv_{\bar{M}_i} \leq \frac{3 \frac{1-\bias(\bar{v}_i^0)}{2}|\bar{v}_i^0|+\frac{1+\bias(\bar{v}_i^1)}{2}|\bar{v}_i^1| - (|\bar{v}_i^0|+|\bar{v}_i^1|)}{|\bar{v}_i|}\nonumber\\
&\Rightarrow& 2\adv_{\bar{M}_i}|\bar{v}_i| \leq 3 (1-\bias(\bar{v}_i^0))|\bar{v}_i^0|+(1+\bias(\bar{v}_i^1))|\bar{v}_i^1| - 2(|\bar{v}_i^0|+|\bar{v}_i^1|)\nonumber\\
&\Rightarrow& 2\adv_{\bar{M}_i}|\bar{v}_i| \leq 
\left[1-3\bias(\bar{v}_i^0)\right] |\bar{v}_i^0| - \left[1-\bias(\bar{v}_i^1)\right] |\bar{v}_i^1|\label{eqn:biasDiscrepency}
\end{eqnarray}}
\STOConly{
\begin{align}
&\adv_{\bar{M}_i} \leq \frac{3 \Count_0(\bar{v}_i^0)+\Count_1	(\bar{v}_i^1) - |\bar{v}_i|}{|\bar{v}_i|}\nonumber\\
\Rightarrow\ & \adv_{\bar{M}_i} \leq \frac{3 \frac{1-\bias(\bar{v}_i^0)}{2}|\bar{v}_i^0|+\frac{1+\bias(\bar{v}_i^1)}{2}|\bar{v}_i^1| - (|\bar{v}_i^0|+|\bar{v}_i^1|)}{|\bar{v}_i|}\nonumber\\
\Rightarrow\ & 2\adv_{\bar{M}_i}|\bar{v}_i| \leq 3 (1-\bias(\bar{v}_i^0))|\bar{v}_i^0|\nonumber\\
&+(1+\bias(\bar{v}_i^1))|\bar{v}_i^1| - 2(|\bar{v}_i^0|+|\bar{v}_i^1|)\nonumber\\
\Rightarrow\	 & 2\adv_{\bar{M}_i}|\bar{v}_i| \leq 
\left[1-3\bias(\bar{v}_i^0)\right] |\bar{v}_i^0| - \left[1-\bias(\bar{v}_i^1)\right] |\bar{v}_i^1|\label{eqn:biasDiscrepency}
\end{align}
}
We claim that the above inequality leads to the fact that $|\bias(\bar{v}_i^1)- \bias(\bar{v}_i^0)| \geq \adv_{\bar{M}_i}/3$. Assume for contradiction that this is not the case. Therefore, replacing the term $\bias(\bar{v}_i^0)$ with $\bias(\bar{v}_i^1)$ in \eqref{eqn:biasDiscrepency} does not change the value of the right hand side by any more than $|\bar{v}_i|\cdot \adv_{\bar{M}_i}$. Same holds true with replacing the term $\bias(\bar{v}_i^1)$ with $\bias(\bar{v}_i^0)$ in \eqref{eqn:biasDiscrepency}. 	This implies that, with $b^* =\max\{\bias(\bar{v}_i^0), \bias(\bar{v}_i^1)\}$, we have that
\noSTOC{\begin{eqnarray}
&&\adv_{\bar{M}_i}|\bar{v}_i| \leq 
\left(1-3b^*\right)\cdot |\bar{v}_i^0| - \left(1-b^*\right) |\bar{v}_i^1|\nonumber\\
&\Rightarrow& \left(1-b^*\right) |\bar{v}_i^1|<\left(1-3b^*\right) |\bar{v}_i^0|\label{eqn:advantaegAssumption}
\end{eqnarray}}
\STOConly{$\adv_{\bar{M}_i}|\bar{v}_i| \leq 
\left(1-3b^*\right)\cdot |\bar{v}_i^0| - \left(1-b^*\right) |\bar{v}_i^1|$ and, therefore,
\begin{align}
\left(1-b^*\right) |\bar{v}_i^1|<\left(1-3b^*\right) |\bar{v}_i^0|\label{eqn:advantaegAssumption}
\end{align}}
On the other hand, we assumed earlier (without loss of generality) that $\Count_0(\bar{v}_i^0) \leq \Count_1(\bar{v}_i^1)$. Therefore,
\STOConly{\begin{eqnarray}
&&\left(1-\bias(\bar{v}_i^0)\right) |\bar{v}_i^0|\leq\left(1+\bias(\bar{v}_i^1)\right) |\bar{v}_i^1|\nonumber\\
&\Rightarrow& 
\left(1-b^*\right) |\bar{v}_i^0|\leq\left(1+b^*\right) |\bar{v}_i^1|\label{eqn:WLOGassumption}
\end{eqnarray}}
\noSTOC{\begin{eqnarray}
&&\Count_0(\bar{v}_i^0) \leq \Count_1(\bar{v}_i^1)\nonumber\\
&\Rightarrow& 
\left(1-\bias(\bar{v}_i^0)\right) |\bar{v}_i^0|\leq\left(1+\bias(\bar{v}_i^1)\right) |\bar{v}_i^1|\nonumber\\
&\Rightarrow& 
\left(1-b^*\right) |\bar{v}_i^0|\leq\left(1+b^*\right) |\bar{v}_i^1|\label{eqn:WLOGassumption}
\end{eqnarray}}
Note that since $|b^*|\leq 1$, $(1-b^*)^2 > (1+b^*)(1-3b^*) \Rightarrow \frac{1-3b^*}{1-b^*} < \frac{1-b^*}{1+b^*}$. Multiplying the two sides of this inequality to the sides of \eqref{eqn:WLOGassumption} gives that 
\noSTOC{$$\left(1-3b^*\right) |\bar{v}_i^0|\leq\left(1+b^*\right) |\bar{v}_i^1|$$}
\STOConly{$\left(1-3b^*\right) |\bar{v}_i^0|\leq\left(1+b^*\right) |\bar{v}_i^1|$}
which contradicts \eqref{eqn:advantaegAssumption}. Therefore, we must have 
\noSTOC{$$|\bias(\bar{v}_i^1)- \bias(\bar{v}_i^0)| \geq \adv_{\bar{M}_i}/3.$$}
\STOConly{$|\bias(\bar{v}_i^1)- \bias(\bar{v}_i^0)| \geq \adv_{\bar{M}_i}/3.$}
Note that $\bias(\bar{v}_i^j) = \E\left[B_{i+1}|v_i, T=j\right]$ since $\bias(\bar{v}_i^j)$ is the average bias of all strings in $U_j$. Therefore, combining with \eqref{eqn:advLowerBoundAfterTransformation}, we have that
\begin{equation*}
\Big|\E\left[B_{i+1}|v_i, T=0\right] - \E\left[B_{i+1}|v_i, T=1\right]\Big| \geq \frac{\adv(v_i, A_{r_{i+1}}) - 5\eps^2}{3}.\qedhere
\end{equation*}
\popQED

\section{Proof of \noSTOC{\cref{thm:main}}\STOConly{Theorem \ref{thm:main}}: Concatenated InsDel Codes}\label{sec:concatenation}
We recall that the concatenation of an inner insdel code $\mathcal{C}_{\text{in}}$ over an alphabet of size $|\Sigma_{\text{in}}|$ and an outer insdel code, $\mathcal{C}_{\text{out}}$, over an alphabet of size $|\Sigma_{\text{out}}| = |\mathcal{C}_{\text{in}}|$ as a code over alphabet $\Sigma_{\text{in}}$, is obtained by taking each codeword $x \in \mC_\textout$, encoding each symbol of $x$ with $\mathcal{C}_{\text{in}}$, and appending the encoded strings together to obtain each codeword of the concatenated code.

In this section, we will show that, concatenating an inner code $\mC_\textin$ from \cref{thm:LowRateListDecBinaryCodes} that can $L_\textin$-list decode from any $\gamma$ fraction of insertions and $\delta$ fraction deletions when 
$2\delta + \gamma < 1-\eps_\textin$ along with an appropriately chosen outer code $\mC_\textout$ from \cref{thm:LargeAlphaInsDelListDecoding}, one can obtain an infinite family of constant-rate insertion-deletion codes that are efficiently list-decodable from any $\gamma$ fraction of insertions and $\delta$ fraction of deletions as long as 
$2\delta + \gamma < 1-\eps$ for $\eps=\frac{16}{5}\eps_\textin$.

\subsection{Construction of the Concatenated Code}
We start by fixing some notation. Let $\mC_\textout$ be able to $L_\textout$-list decode from $\delta_{\text{out}}$ fraction of deletions and $\gamma_{\text{out}}$ fraction of insertions. Further, let us indicate the block sizes of $\mC_\textout$ and $\mC_\textin$ with $n_\textout$ and $n_\textin=\left\lceil\log \left|\Sigma_{\text{out}}\right|\right\rceil$.

To construct our concatenated codes, we utilize \cref{thm:LargeAlphaInsDelListDecoding} to obtain an efficient family of codes $\mathcal{C}_{\text{out}}$ over alphabet $\Sigma_{\text{out}}$ of size $O_{\gamma_{\text{out}}, \delta_{\text{out}}}(1)$ that is $L_{\text{out}}$-list decodable from any $\delta_{\text{out}}$ fraction of deletions and $\gamma_{\text{out}}$ fraction of insertions for appropriate parameters $\delta_{\text{out}}$ and $\gamma_{\text{out}}$ that we determine later. We then concatenate any code in $\mathcal{C}_{\text{out}}$ with an instance of the binary list-decodable codes from \cref{thm:LowRateListDecBinaryCodes}, $C_{\text{in}}$, with parameter $n_{\text{in}}=\left\lceil\log \left|\Sigma_{\text{out}}\right|\right\rceil$ and a properly chosen $\eps_{\text{in}}$. 
We will determine appropriate values for all these parameters given $\eps$ when describing the decoding procedure in \cref{sec:decoding}. \cref{fig:parameters} shows the order of determining all parameters. We remark that the following two properties for the utilized inner and outer codes are critical to this order of fixing parameters:
\begin{enumerate}
    \item The alphabet size of the family of codes used as the outer code only depends on $\delta_\textout$ and $\gamma_\textout$ and is independent of the outer block size $n_\textout$. (See \cref{thm:LargeAlphaInsDelListDecoding})
    \item The list size of the family of codes used as the inner code, $L_\textin$, merely depends on parameter $\eps_{\text{in}}$ in \cref{thm:LowRateListDecBinaryCodes} and is independent of the size of the code or its block length, i.e., $|\mC_\textin|$ or $n_\textin$.
\end{enumerate}
\begin{figure}[]
  \centering
  \noSTOC{\includegraphics[width=.7\textwidth]{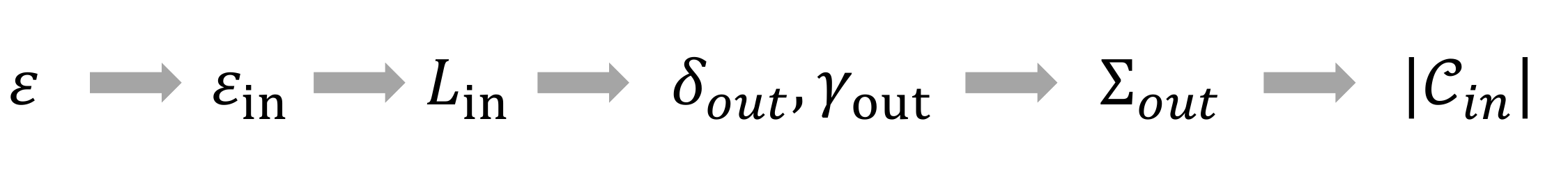}\\}
  \STOConly{\includegraphics[width=\linewidth]{Parameters.pdf}\\}
  \caption{The order of determining parameters in the proof of \cref{thm:main}.}\label{fig:parameters}
\end{figure}

\subsection{Decoding Procedure and \noSTOC{Determining }Parameters}\label{sec:decoding}

    We now analyze the resulting family of codes and choose the undetermined parameters along the way of describing the decoding procedure. A pseudo-code of the decoding procedure is available in \cref{alg:conctenatedDecoder}. Let $\mC$ be a binary code with block length $n$ that is obtained from the above-mentioned concatenation. Take the codeword $x\in C$ and split it into \emph{blocks} of length $n_{\text{in}}$. Note that each such block corresponds to the encoding of some symbol in $\Sigma_{\text{out}}$ under $\mC_{\text{in}}$. Let $x'$ be a string obtained by applying $n\gamma$ insertions and $n\delta$ deletions into $x$ where $n=n_\textin n_\textout$ and $\gamma+2\delta < 1-\eps$. For each block of $x$, we define the error count to be the total number of insertions that have occurred in that block plus twice the number of deleted symbols in it. Clearly, the average value of error count among all blocks is $n_\textin(\gamma+2\delta) < n_\textin(1-\eps)$. By a simple averaging, at least $\left(1-\frac{1-\eps}{1-\eps/4}\right) n_{\text{out}} \ge \frac{3\eps}{4}\cdot n_{\text{out}}$ of those blocks have an error count of $n_\textin(1-\frac{\eps}{4})$ or less. 
Let us call the set of all such blocks $S$. 

Further, we partition $S$ into smaller sets based on the number of deletions occurring in the blocks of $S$. Let $S_i\in S$ be the subset of blocks in $S$ for which the number of deletions is in 
\noSTOC{$\left[n_{\text{in}}\cdot\frac{\eps}{16}\cdot (i-1), n_{\text{in}}\cdot \frac{\eps}{16}\cdot i\right)$}
\STOConly{$[n_{\text{in}}\cdot\frac{\eps}{16}\cdot (i-1), n_{\text{in}}\cdot \frac{\eps}{16}\cdot i)$}
 for $i=1, 2, \cdots, 8/\eps$\footnote{Note that the fraction of deletions cannot exceed $\frac{1}{2}$ assuming $n_\textin(\gamma+2\delta) < n_\textin(1-\eps)$.}. The following \noSTOC{two properties }hold true:
\begin{enumerate}
    \item All blocks in $S_i$ suffer from at least $n_{\text{in}}\cdot\frac{\eps}{16}\cdot (i-1)$ deletions. Further, they can suffer from up to $n_{\text{in}}\cdot\left(1-\frac{\eps}{4}-\frac{2\eps}{16}\cdot (i-1)\right)$ insertions. Therefore, they all appear as substrings of length $n_{\text{in}}\cdot\left(2-\frac{\eps}{4}-\frac{3\eps}{16}\cdot (i-1)\right)$ or less in $x'$.
    \item We have that $S=\dot\bigcup_{i=1}^{8/\eps} S_i$. By the Pigeonhole principle, for some $i^*\in\left[1, 8/\eps\right]$, $|S_{i^*}|\ge \frac{3\eps^2}{32}n_{\text{out}}$.
\end{enumerate}

Our decoding algorithm consists of $8/\eps$ rounds each consisting of two phases of inner and outer decoding. During the first phase of each round $i=1,2,\cdots, 8/\eps$, the algorithm uses the decoder of the inner code on $x'$ to construct a string $T_i$ over alphabet $\Sigma_\textout$ and then, in the second phase, uses the decoder of the outer code on input $T_i$ to obtain a list $List_i$ of size $L_\textout$. In the end, the decoding algorithm outputs the union of all such lists $\bigcup_i List_i$.

\begin{algorithm}
\caption{Decoder of the Concatenated Code}\label{alg:conctenatedDecoder}
\begin{algorithmic}[1]
\noSTOC{\Procedure{Concatenated-Decoder}{$x', \eps, n_\textin, n_\textout, \Decode_{\mC_\textin}(\cdot), \Decode_{\mC_\textout}(\cdot)$}}
\STOConly{\Procedure{Concat'd-Dec}{$x', \eps, n_\textin, n_\textout, \Decode_{\mC_\textin}, \Decode_{\mC_\textout}$}}

\medskip 

\State {Output $\leftarrow\emptyset$}

\For{$i\in\left\{1, 2, \cdots, \frac{8}{\eps}\right\}$}\Comment{Round $i$}
\State $w\leftarrow\left\lfloor\frac{n_{\text{in}}(2-\eps/4-3\eps(i-1)/16)}{n_{\text{in}} \eps/16}\right\rfloor+1$\noSTOC{\Comment{Length of the sliding window is $w\cdot\frac{n_\textin\eps}{16}$.}}

\State $T_i \leftarrow $ empty string

\medskip 

\For{$j\in \left\{1, 2, \cdots, \frac{|x'|}{n_{\textin}\eps/16}-w\right\}$}\Comment{Phase I\noSTOC{: Inner Decoding}}
\State {$List \leftarrow \Decode_{\mC_\textin}\left(x'\left[\frac{n_{\text{in}} \eps}{16}\cdot j, \frac{n_{\text{in}} \eps}{16}\cdot (j + w) \right]\right)$}
\State {Pad symbols of $\Sigma_\textout$ corresponding to the elements of $List$ to the right of $T_i$.}
\EndFor

\medskip 

\State {Output $\leftarrow$ Output $\cup$  $\Decode_{\mC_\textout}\left(T_i\right)$}\Comment{Phase II\noSTOC{: Outer Decoding}}
\EndFor

\State {\bf return} Output

\EndProcedure
\end{algorithmic}
\end{algorithm}

\paragraph{Description of Phase I (Inner Decoding)} 
We now proceed to the description of the first phase in each round $i\in\{1,2,\cdots, 8/\eps\}$. In the construction of $T_i$, we aim for correctly decoding the blocks in $S_i$. As mentioned above, all such blocks appear in $x'$ in a substring of length $n_{\text{in}}\cdot\left(2-\frac{\eps}{4}-\frac{3\eps}{16}\cdot (i-1)\right)$ or less.

Having this observation, we run the deocoder of the inner code on substrings of $x'$ of form 
$x'\left[\frac{n_{\text{in}} \eps}{16}\cdot j, \frac{n_{\text{in}} \eps}{16}\cdot (j + w) \right]$ for all $j=1,2,\cdots, \frac{|x'|}{n_{\textin}\eps/16}-w$ where 
\noSTOC{$$w=\left\lfloor\frac{n_{\text{in}}(2-\eps/4-3\eps(i-1)/16)}{n_{\text{in}} \eps/16}\right\rfloor + 1.$$}
\STOConly{$w=\left\lfloor\frac{n_{\text{in}}(2-\eps/4-3\eps(i-1)/16)}{n_{\text{in}} \eps/16}\right\rfloor + 1.$}
One can think of such substrings as a \emph{window} of size $w\cdot\frac{n_\textin \eps}{16}$ that slides in $\frac{n_\textin \eps}{16}$ increments.

Note that each block $B$ in $S_i$ appears within such window and is far from it by, say, $D_B$ deletions and no more than $n_{\text{in}}\left(1-\frac{\eps}{4}\right) - 2D_B  + \frac{n_{\text{in}} \eps}{16}$ insertions where the additional $\frac{n_{\text{in}} \eps}{16}$ term in insertion count comes from the extra symbols around the block in the fixed sized window. As long as the fraction of insertions plus twice the fraction of deletions that are needed to convert a block of $S_i$ into its corresponding window does not exceed $1-\eps_{\text{in}}$, the output of the inner code's decoder for input 
$x'\left[\frac{n_{\text{in}} \eps}{16}\cdot j, \frac{n_{\text{in}} \eps}{16}\cdot (j + w) \right]$
will contain the block $B$ of $S_i$. So, we choose $\eps_{\text{in}}$ such that 
\begin{eqnarray}
&&n_{\text{in}}\left(1-\frac{\eps}{4}\right) - 2D_B  + \frac{n_{\text{in}} \eps}{16} + 2D_B \le n_{\text{in}}(1-\eps_{\text{in}})\label{eqn:eps-in-choice}\\
&\Leftrightarrow&n_{\text{in}}(1-3\eps/16) \le n_{\text{in}}(1-\eps_{\text{in}})\noSTOC{\nonumber\\
&}\Leftrightarrow\noSTOC{&}\eps_{\text{in}}\le \frac{3}{16}\eps\nonumber
\end{eqnarray}

Now, each element in the output list corresponds to some codeword of the inner code and, therefore, some symbol in $\Sigma_{\text{out}}$. For each run of the decoder of the inner code, we take the corresponding symbols of $\Sigma_{\text{out}}$ and write them back-to-back in arbitrary order. Then, we append all such strings in the increasing order of $j$ to obtain $T_i$.

\paragraph{Description of Phase II (Outer Decoding)} 
Note that the length of $T_i$ is at most 
$\frac{|x'|}{n_{\text{in}}\eps/16} L_{\text{in}} \leq
\frac{2n_{\text{in}}n_{\text{out}}}{n_{\text{in}}\eps/16} L_{\text{in}} = n_{\text{out}}\cdot\frac{32}{\eps}L_{\text{in}}$. Further, $T_i$ contains symbols corresponding to all blocks of $S_i$ as a subsequence (i.e., in the order of appearance) except possibly the ones that appear in the same run of the inner decoder together. Since the fraction of deletions happening to each block in $S_i$ is less than $\frac{1}{2}$ and the size of the inner decoding sliding window is no more than $2n_\textin$, the number of blocks of $S_i$ that can appear in the same window in the first phase is at most 4. This gives that $T_{i}$ has a common subsequence of size at least $\frac{|S_i|}{4}$ with the codeword of the outer code.

We mentioned earlier that for some $i^*$, $|S_{i^*}|\ge \frac{3\eps^2}{32}n_{\text{out}}$. Therefore, for such $i^*$, $T_{i^*}$ is different from $x$ by up to a $1-\frac{3\eps^2}{128}$ fraction of deletions and $\frac{32}{\eps}L_{\text{in}}$ fraction of insertions. Therefore, by taking $\delta_{\text{out}}=1-\frac{3\eps^2}{128}$,  $\gamma_{\text{out}}=\frac{32}{\eps}L_{\text{in}}=O\left(\frac{1}{\eps^4}\right)$, and using each $T_i$ as an input to the decoder of the outer code in the second phase, $x$ will certainly appear in the outer output list for some $T_i$. (Specifically, for $i=i^*$.)

\subsection{Remaining Parameters}
As shown in \cref{sec:decoding}, we need a list-decodable code as outer code that can list-decode from $\delta_{\text{out}}=1-\frac{3\eps^2}{128}$ fraction of deletions and  $\gamma_{\text{out}}=\frac{32}{\eps}L_{\text{in}}=O\left(\frac{1}{\eps^4}\right)$ fraction of insertions. To obtain such codes we use \cref{thm:LargeAlphaInsDelListDecoding} with parameters $\gamma=\frac{32}{\eps}L_{\text{in}}$ and $\epsilon=\frac{3\eps^2}{256}$.
This implies that the rate of the outer code is $r_\textout = \frac{3\eps^2}{256} = O(\eps^2)$, it is $L_\textout=O_\eps(\exp(\exp(\exp(\log^*n))))$ list-decodable, and can be defined over an alphabet size of $|\Sigma_\textout|=e^{O\left(\frac{1}{\eps^{10}}\log \frac{1}{\eps^8}\right)}$.

Consequently, $|\mC_{in}|=\log|\Sigma_{\text{out}}|=O\left(\frac{1}{\eps^{10}}\log \frac{1}{\eps}\right)$. Note that in \cref{thm:LowRateListDecBinaryCodes}, the block length of the inner code can be chosen independently of its list size as the list size only depends on $\eps_\textin$. This is a crucial quality in our construction since in our analysis $\eps_\textin$ and $L_\textin$ are fixed first and then $|C_\textin|$ is chosen depending on the properties of the outer code.

As the decoder of the outer code is used $\frac{8}{\eps}$ times in the decoding of the concatenated code, the list size of the concatenated code will be $L=\frac{8}{\eps}\cdot L_{\text{out}} = O_\eps(\exp(\exp(\exp(\log^*n))))$.
The rate of the concatenated code is 
\noSTOC{$$r=r_\textout r_\textin = O\left(\eps^2\cdot\frac{\log\log |\mC_\textin|}{n_\textin}\right) = \noSTOC{O\left(\eps^2\cdot\frac{\log\log |\mC_\textin|}{(1/\eps^4)^{|\mC_\textin|}}\right) =} e^{-O\left(\frac{1}{\eps^{10}}\log^2 \frac{1}{\eps}\right)}.$$}
\STOConly{$r=r_\textout r_\textin = O\left(\eps^2\cdot\frac{\log\log |\mC_\textin|}{n_\textin}\right) = \noSTOC{O\left(\eps^2\cdot\frac{\log\log |\mC_\textin|}{(1/\eps^4)^{|\mC_\textin|}}\right) =} e^{-O\left(\frac{1}{\eps^{10}}\log^2 \frac{1}{\eps}\right)}.$}

Finally, since the outer code is efficient and the inner code is explicit and can be decoded by brute-force in $O_\eps(1)$ time, the encoding and decoding procedures run in polynomial time. This concludes the proof of \cref{thm:main}.

\section{Extension to Larger Alphabets}\label{sec:large-alphabet}
In this section we extend the results presented so far to $q$-ary alphabets where $q > 2$. 
\subsection{Feasibility Region: Upper Bound}
For an alphabet of size $q$, no positive-rate family of deletion codes can protect against $1-\frac{1}{q}$ fraction of errors since, with that many deletions, an adversary can simply delete all but the most frequent symbol of any codeword. Similarly, for insertion codes, it is not possible to achieve resilience against $q-1$ fraction of errors as adversary would be able to turn any codeword $x\in q^n$ to $(1, 2, \cdots, q)^n$.

The findings of the previous sections on binary alphabets might suggest that the feasibility region for list-decoding is the region mapped out by these two points, i.e., $\frac{\delta}{1-\frac{1}{q}} + \frac{\gamma}{q-1} < 1$. However, this conjecture turns out to be false. The following theorem provides a family of counterexamples.

\begin{theorem}\label{thm:counterexamples}
For any alphabet size $q$ and any $i=1, 2, \cdots, q$, no positive-rate $q$-ary infinite family of insertion-deletion codes can list-decode from $\delta = \frac{q-i}{q}$ fraction of deletions and $\gamma = \frac{i(i-1)}{q}$ fraction of insertions.
\end{theorem}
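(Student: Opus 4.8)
The plan is the standard ``collapse many codewords onto one received word'' strategy for insdel impossibility. It suffices to show: for each code $C_n$ in a putative positive-rate family, there is a single string $z$ reachable from at least $q^{\Omega(n)}$ of the codewords via at most $\delta n$ deletions and $\gamma n$ insertions. Indeed, a family of rate bounded below by $R>0$ has $|C_n|\ge q^{Rn}$ codewords, and any list-decoder from $\delta n$ deletions and $\gamma n$ insertions must include all those codewords in its output list on input $z$, forcing a super-polynomial list size and hence contradicting list-decodability.

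To build the adversarial transformation, fix a codeword $x\in[q]^n$. Let $S\subseteq[q]$ be a set of $i$ symbols of largest multiplicity in $x$ (ties broken arbitrarily) and write $S=\{s_1<s_2<\cdots<s_i\}$. The remaining $q-i$ symbols are the least frequent, so their total number of occurrences in $x$ is at most $\frac{q-i}{q}n$ (the average of the $q-i$ smallest multiplicities is at most the overall average $n/q$). Deleting all those occurrences costs at most $\delta n=\frac{q-i}{q}n$ deletions and leaves a string $w\in S^m$ with $m\ge n-\delta n=\frac{i}{q}n$; deleting $m-\frac{i}{q}n$ further symbols of $w$ arbitrarily yields $w'\in S^{m'}$ with $m'=\frac{i}{q}n$, the total deletion count being exactly $n-m'=\delta n$. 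Now observe that the ``sorted block'' $s_1 s_2\cdots s_i$ repeated $m'$ times is a supersequence of \emph{every} length-$m'$ string over the alphabet $S$: greedily embed the symbols of any such string one at a time, each symbol consuming at most one fresh block, so $m'$ symbols need at most $m'$ blocks. In particular $w'$ is a subsequence of $z_S:=(s_1 s_2\cdots s_i)^{m'}$, so $z_S$ is obtained from $w'$ by inserting $i m'-m'=(i-1)\cdot\frac{i}{q}n=\gamma n$ symbols. Hence $z_S$ is reachable from $x$ using exactly $\delta n$ deletions and $\gamma n$ insertions.

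The crucial point is that $z_S$ depends only on the $i$-subset $S$, so as $x$ ranges over $C_n$ there are at most $\binom{q}{i}=O_q(1)$ distinct targets. By pigeonhole some fixed $z$ is reachable from at least $|C_n|/\binom{q}{i}\ge q^{Rn}/\binom{q}{i}=q^{\Omega(n)}$ codewords, which for large $n$ exceeds any polynomial list-size bound, giving the contradiction. As consistency checks, the two extremes recover familiar arguments: $i=1$ (so $\delta=\frac{q-1}{q}$, $\gamma=0$) is ``delete down to the most frequent symbol,'' and $i=q$ (so $\delta=0$, $\gamma=q-1$) is ``insert to turn any codeword into $(1\,2\,\cdots\,q)^n$.'' The content for intermediate $i$ is the realization that the deletion budget is used exactly to knock the alphabet down to size $i$ and the insertion budget is used exactly to inflate the resulting length-$\frac{i}{q}n$ word over that sub-alphabet to sorted-block form, which is precisely why the corner sits at $\left(\frac{i(i-1)}{q},\frac{q-i}{q}\right)$.

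I do not expect a deep obstacle here; the two combinatorial facts (least-frequent symbols total at most $\frac{q-i}{q}n$, and $(s_1\cdots s_i)^{m'}$ being a supersequence of all length-$m'$ strings over $S$) are both one-line arguments. The one technicality to flag is integrality: the transformation is exact when $q\mid n$ (so that $\delta n$, $\gamma n$, and $m'=\frac{i}{q}n$ are integers), and an infinite family contains codes of infinitely many such block lengths, which is enough for the contradiction; if one insists on handling every $n$, take $m'=\lceil i n/q\rceil$, paying at most an additive $O_q(1)$ in the insertion count, which is negligible in the asymptotic (rate-bounded-away-from-zero) regime.
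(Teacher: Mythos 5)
Your proposal is correct and matches the paper's argument: delete the $q-i$ least frequent symbols (costing at most $\frac{q-i}{q}n$ deletions), then insert to inflate the remaining length-$\frac{i}{q}n$ string over the surviving $i$-symbol sub-alphabet into the sorted-block supersequence $(s_1\cdots s_i)^{in/q}$, so that only $O_q(1)$ distinct received words are possible and pigeonhole kills any positive-rate list-decodable family. You supply slightly more detail than the paper (the explicit supersequence embedding and the integrality remark), but the route is the same.
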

\begin{proof}
Take a codeword $x\in [q]^n$. With $\delta n = \frac{q-i}{q} \cdot n$ deletions, the adversary can delete the $q-i$ least frequent symbols to turn $x$ into $x'\in\Sigma_d^{n(1-\delta)}$ for some $\Sigma_d = \{\sigma_1, \cdots, \sigma_{i}\} \subseteq [q]$. Then, with $\gamma n = n(1-\delta)(i - 1)=n\frac{i(i-1)}{q}$ insertions, it can turn $x'$ into $[\sigma_1, \sigma_2, \cdots, \sigma_{i}]^{n(1-\delta)}$. Such adversary only allows $O(1)$ amount of information to pass to the receiver. Hence, no such family of codes can yield a positive rate.
\end{proof}

Note that all points $(\gamma, \delta) = \left(\frac{i(i-1)}{q}, \frac{q-i}{q}\right)$ are located on a second degree curve inside the conjectured feasibility region $\frac{\delta}{1-\frac{1}{q}} + \frac{\gamma}{q-1} < 1$ (see \cref{fig:conjecture}). \noSTOC{Our next step is to show that the actual feasibility region is a subset of the polygon outlined by these points.}\STOConly{In the extended version of this paper, we use a simple time-sharing argument to show that the actual feasibility region is a subset of the polygon outlined by these points.}

\begin{figure}[]
  \centering
  \noSTOC{\includegraphics[width=.9\textwidth]{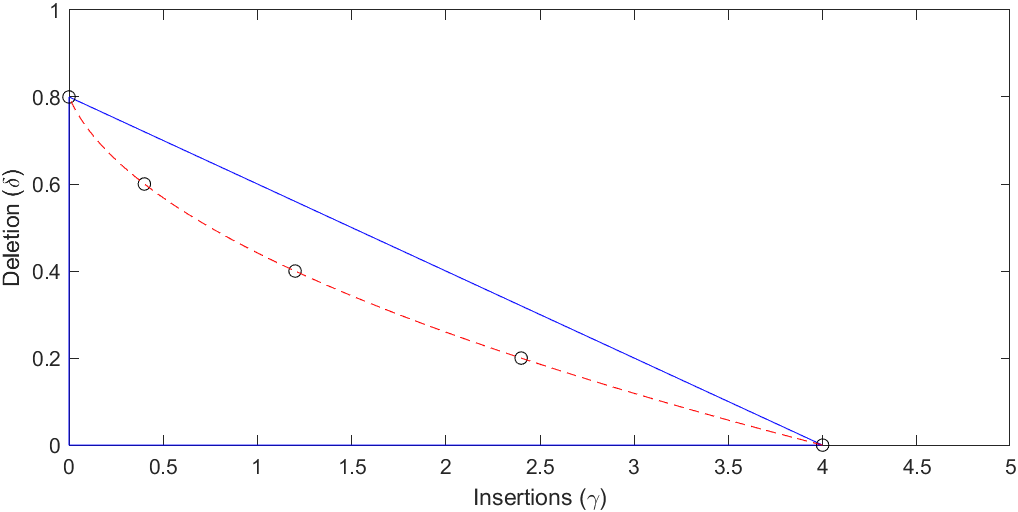}}
  \STOConly{\includegraphics[width=\linewidth]{Conjectured.png}}
  \caption{Infeasible points inside the conjectured feasibility region. (Illustrated for $q=5$)}\label{fig:conjecture}
\end{figure}

\begin{theorem}\label{thm:QaryUpperbound}
For any positive integer $q > 2$, define $F_q$ as the concave polygon defined over vertices
$\left(\frac{i(i-1)}{q}, \frac{q-i}{q}\right)$ for $i = 1, \cdots, q$ and $(0, 0)$. (see \cref{fig:actual-region}). $F_q$ does not include the border except the two segments 
$\left[(0, 0), (q-1, 0)\right)$ and $\left[(0, 0), \left(0, 1-\frac{1}{q}\right)\right)$. Then, for any pair of positive real numbers $(\gamma, \delta) \not\in F_q$, there exists no infinite family of $q$-ary codes with positive rate that can correct from $\delta$ fraction of deletions and $\gamma$ fraction of insertions.
\end{theorem}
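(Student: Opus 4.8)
The plan is to show that the feasibility region is \emph{convex}, and then combine this with the infeasibility of the corner points $\left(\frac{i(i-1)}{q},\frac{q-i}{q}\right)$ established in \cref{thm:counterexamples}, together with the trivial endpoints $(q-1,0)$ and $\left(0,1-\frac1q\right)$, to conclude that any point outside $F_q$ is infeasible. Concretely, suppose for contradiction that $(\gamma,\delta)\notin F_q$ but there is a positive-rate family of $q$-ary codes list-decodable from a $\gamma$ fraction of insertions and a $\delta$ fraction of deletions. Since $(\gamma,\delta)\notin F_q$, it lies on the far side of one of the $q-1$ bounding line segments of $F_q$ (or beyond the two trivial axis segments, which are already handled by the elementary arguments recalled just before the theorem). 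Let that segment have endpoints $P_1=\left(\frac{i(i-1)}{q},\frac{q-i}{q}\right)$ and $P_2=\left(\frac{(i+1)i}{q},\frac{q-i-1}{q}\right)$ for the appropriate $i$. Then $(\gamma,\delta)$ dominates (coordinate-wise is $\geq$) some convex combination $P^* = \lambda P_1 + (1-\lambda) P_2$ for some $\lambda\in[0,1]$; call $P^*=(\gamma^*,\delta^*)$ with $\gamma^*\le\gamma$, $\delta^*\le\delta$.

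The key step is a \textbf{time-sharing / code-splitting argument}: I would show that if a code family tolerates $(\gamma,\delta)$, then by partitioning each codeword into a $\lambda$-fraction prefix block and a $(1-\lambda)$-fraction suffix block (restricting to rational $\lambda$ and taking limits, or working with block lengths $\lambda n$ and $(1-\lambda)n$ rounded to integers), one obtains a code family tolerating $P_1$ on the first block \emph{and} $P_2$ on the second block simultaneously — because an adversary restricted to $P_1$ on the first part and $P_2$ on the second part uses in total at most $\lambda\gamma^* n + (1-\lambda)\gamma^* n \cdot(\dots)$, which I need to check aggregates to at most $\gamma n$ insertions and $\delta n$ deletions overall. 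More carefully: the adversary against the combined code injects, on the length-$\lambda n$ prefix, at most $\frac{i(i-1)}{q}\cdot\lambda n$ insertions and $\frac{q-i}{q}\cdot\lambda n$ deletions, and on the length-$(1-\lambda)n$ suffix at most $\frac{(i+1)i}{q}\cdot(1-\lambda)n$ insertions and $\frac{q-i-1}{q}\cdot(1-\lambda)n$ deletions; summing gives exactly $\gamma^* n\le\gamma n$ insertions and $\delta^* n\le\delta n$ deletions, so this adversary is no stronger than the one the original code already defeats. Hence the prefix sub-code family must be list-decodable from $P_1$ and the suffix sub-code family from $P_2$; at least one of these two families still has positive rate (since the product of their rates is the rate of the concatenation, which is positive), contradicting \cref{thm:counterexamples} applied at the corresponding corner point.

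The \textbf{main obstacle} I anticipate is making the time-sharing argument fully rigorous when the two sub-codes are \emph{not} independent but are just the two halves of a single code family: one must argue that positivity of the rate of the whole family forces positivity of the rate of at least one restriction, and that the list-decoding guarantee genuinely restricts to each half. The cleanest way is probably a counting/pigeonhole argument — there must be an exponentially large set of codewords all agreeing on one half and differing on the other, and on that half the adversary's $P_j$-budget alone suffices to collapse them to $O(1)$ possible images (by the explicit adversary strategy in the proof of \cref{thm:counterexamples}), giving a list size that is super-polynomial, a contradiction. A secondary technical point is handling irrational $\lambda$ and the rounding of $\lambda n$ to an integer, which is routine: pick a sequence of rationals approaching $\lambda$ and use the usual subsequence/compactness argument, absorbing the $o(n)$ rounding slack into the (open) strict inequalities defining $F_q$. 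I would also need to double-check the boundary behavior — that the open edges of $F_q$ are indeed feasible follows from \cref{thm:qaryMain}, and the two closed axis segments are handled by the elementary arguments already recalled, so the dichotomy is complete.
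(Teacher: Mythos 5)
Your proposal is correct and is essentially the paper's own argument: the paper proves infeasibility of each boundary segment by exactly the time-sharing adversary you describe, applying the corner-point strategy of \cref{thm:counterexamples} with parameter $p_i$ to the first $\alpha n$ symbols and $p_{i+1}$ to the last $(1-\alpha)n$ symbols, and then noting that the total number of possible received strings is ${q \choose i} i!\cdot{q\choose i+1}(i+1)! = O_q(1)$, so exponentially many codewords collapse to a single string. The ``main obstacle'' you worry about (positivity of the rate of one half) does not arise in the paper's version, since the direct pigeonhole on the $O_q(1)$ received strings of the \emph{whole} codeword---which you correctly identify as the cleanest route---already yields the contradiction.
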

\noSTOC{
\begin{proof}
In order to prove this, it suffices to show that for any pair of consecutive vertices on the polygon like 
$p_i = \left(\frac{i(i-1)}{q}, \frac{q-i}{q}\right)$ 
 and $p_{i+1} = \left(\frac{i(i+1)}{q}, \frac{q-i-1}{q}\right)$, 
 the entirety of the segment between $p_i$ and $p_{i+1}$ lie outside of the feasibility region. To this end, we show that for any $i=1, 2, \cdots, q-1$ and $\alpha\in(0, 1)$, no family of codes with positive rate is list-decodable from $(\gamma_0, \delta_0) = \alpha p_i + (1-\alpha)p_{i+1}$ fraction of insertions and deletions. Note that in \cref{thm:counterexamples} we proved the infeasibility of the vertices of $F_q$ by providing a strategy for the adversary to convert any string into one out of a set of size $O_q(1)$ using the corresponding amount of insertions and deletions. To finish the proof, we similarly present a strategy for the adversary that is obtained by a simple time sharing between the ones used to show infeasibility at $p_i$ and $p_{i+1}$ in \cref{thm:counterexamples}.

Consider a codeword $x\in[q]^n$. As shown in \cref{thm:counterexamples}, the adversary can utilize $n\alpha \cdot p_i$ errors to convert the first $\alpha n$ symbols of $x$ into a string of form $[\sigma_1, \sigma_2, \cdots, \sigma_{i}]^{n\alpha\cdot\frac{i}{q}}$ where $\{\sigma_1, \sigma_2, \cdots, \sigma_{i}\} \subseteq \Sigma$. Similarly, the remaining $n(1-\alpha)p_{i+1}$ errors can be utilized to turn the last $(1-\alpha)n$ symbols of $x$ into a string of the form $[\sigma'_1, \sigma'_2, \cdots, \sigma'_{i+1}]^{n(1-\alpha)\cdot\frac{i+1}{q}}$ where $\{\sigma_1, \sigma_2, \cdots, \sigma_{i+1}\} \subseteq \Sigma$. Note that there are no more than ${q \choose i} i! \cdot {q\choose i+1}(i+1)! = O_q(1)$ of such strings. Therefore, for any given positive rate code, there exists one string of the above-mentioned form which is $(\gamma_0, \delta_0)$-close to exponentially many codewords and, thus, no positive-rate family of codes is list-decodable from $(\gamma_0, \delta_0)$ fraction of insertions and deletions.
\end{proof}
}

\subsection{Feasibility Region: Exact Characterization}

Finally, we will show that the feasibility region is indeed equal to the region $F_q$ described in \cref{thm:QaryUpperbound}. 
The proof closely follows the steps taken for the binary case but is significantly more technical. We first formally define $q$-ary Bukh-Ma codes and show they are list-decodable as long as the error rate lies in $F_q$ and then use the concatenation in \cref{sec:concatenation} to obtain \cref{thm:qaryMain}.

\begin{theorem}\label{thm:LowRateListDecQaryCodes}
For any integer $q\geq 2$, $\eps>0$, and sufficiently large $n$, let $C^q_{n, \eps}$ be the following Bukh-Ma code:
$${C}^q_{n, \eps} = \left\{\left(0^r1^r\cdots q^r\right)^{\frac{n}{qr}}\Big| r=\left(\frac{1}{\eps^4}\right)^k, k< \log_{1/\eps^4} n\right\}.$$

For any $(\gamma, \delta) \in (1-\eps)F_q$ it holds that ${C}^q_{n, \eps}$ is list decodable from any $\delta n$ deletions and $\gamma n$ insertions with a list size of $O\left(\frac{q^5}{\eps^2}\right)$.
\end{theorem}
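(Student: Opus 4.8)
The plan is to lift the martingale-of-variances argument behind \cref{thm:LowRateListDecBinaryCodes} to the $q$-ary setting, replacing the scalar $\bias$ by the empirical frequency vector and the single advantage functional by the family of $q-1$ linear functionals that cut out $F_q$. Write $A^{(q)}_r$ for the infinite $q$-ary string cycling through the runs $0^r,1^r,\ldots,(q-1)^r$, so that $C^q_{n,\eps}$ consists of length-$n$ prefixes of $A^{(q)}_r$ for geometrically spaced $r$, and $f(w)=\bigl(\Count_0(w)/|w|,\ldots,\Count_{q-1}(w)/|w|\bigr)$ for the frequency vector of $w$, which lies in the probability simplex and always satisfies $\|f(w)-f(w')\|_2^2\le 2$. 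A direct computation shows the facets of $F_q$ are exactly $\{(\gamma,\delta)\ge 0:\gamma+2i\delta<\tfrac{i(2q-i-1)}{q}\}$ for $i=1,\ldots,q-1$ (for $q=2$ this is $\gamma+2\delta<1$). Accordingly I would introduce, for each $i$, the advantage $\adv^{(i)}(a,b)=\max_M \tfrac{(2i+1)|M|-|a|-(i(i+1)/q)|b|}{|a|}$, extended to infinite $b$ by a minimum over substrings as in the Infinite Advantage definition; for $q=2,i=1$ this recovers $\adv$. As in \cref{rmk:advInterpretation}, if a matching $M$ converts $b$ to $a$ via a $\gamma_M$-fraction of insertions and $\delta_M$-fraction of deletions then $\adv^{(i)}_M=\tfrac{|b|}{|a|}\bigl(\tfrac{i(2q-i-1)}{q}-\gamma_M-2i\delta_M\bigr)$, so $(\gamma,\delta)\in(1-\eps)F_q$ is equivalent to all $q-1$ slacks being positive, indeed at least $\eps$. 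I would also record the $q$-ary versions of \cref{lem:advProperties}: $\adv^{(i)}(S_1,S_2)>0$ keeps $|S_1|,|S_2|$ within a factor $O(q)$ of each other, and $\adv^{(i)}(S,A^{(q)}_r)\ge -O(1)$.

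With these in hand, the skeleton is the obvious analogue of the binary argument. Suppose for contradiction $v$ is $(1-\eps)F_q$-close to $k$ codewords $A^{(q)}_{r_1,n},\ldots,A^{(q)}_{r_k,n}$ with $r_1>\cdots>r_k$ and consecutive ratios at least $1/\eps^4$; as in \cref{lem:advantage}, $\adv^{(i)}(v,A^{(q)}_{r_j,n})\ge \tfrac{n}{|v|}\,\eps=\Omega(\eps/q)$ for every $i,j$. Build a nested sample $v=v_0\supset v_1\supset\cdots\supset v_k$, where $v_{j+1}$ is a uniformly random tile of $v_j$ of length $l_{j+1}=r_{j+1}\eps^2$. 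Since tiling preserves averages, \cref{lem:process-guarantees} lifts: $\E[f(v_{j+1})\mid v_j]=f(v_j)$, so $\{f(v_j)\}$ is a vector-valued martingale, and $\E[\adv^{(i)}(v_{j+1},A^{(q)}_{r_{j+1}})\mid v_j]\ge \adv^{(i)}(v_j,A^{(q)}_{r_{j+1}})$ for every $i$, the projection argument being indifferent to which linear functional of $(|M|,|a|,|b|)$ is tracked. Take the potential $\Phi_j=\mathrm{tr}\,\mathrm{Cov}(f(v_j))=\E\|f(v_j)-f(v)\|_2^2\in[0,2]$; summing the law of total variance over coordinates gives $\Phi_{j+1}=\Phi_j+\E\bigl[\mathrm{tr}\,\mathrm{Cov}(f(v_{j+1})\mid v_j)\bigr]$. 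A uniform lower bound $\E[\mathrm{tr}\,\mathrm{Cov}(f(v_{j+1})\mid v_j)]\ge c(q,\eps)$ then telescopes to $2\ge\Phi_k\ge(k-1)c(q,\eps)$, forcing $k=O(1/c(q,\eps))$; tracking the constant should give $c(q,\eps)=\Omega(\eps^2/q^5)$, hence the list size $O(q^5/\eps^2)$.

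The heart of the proof is this polarization bound, the analogue of \cref{lem:polarization} combined with \cref{lem:threeValuedVariance}. Fix $v_j$ and an optimal matching for $\adv^{(i)}(v_j,A^{(q)}_{r_{j+1}})$; because $l_{j+1}\ll r_{j+1}\ll|v_j|$, a typical tile $v_{j+1}$ is mapped into the interior of a single run $c^{r_{j+1}}$, and conditioning on that run-symbol $c$ (the image straddling a boundary has probability $O(\eps^2)$, bounded exactly as $U_e$ in \cref{lem:process-variance-guarantee}) forces all matched symbols of $v_{j+1}$ to equal $c$, so $\E[f(v_{j+1})\mid v_j,c]$ has $c$-th coordinate at least the average matched density of the $c$-runs. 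Running the three-step cleanup of \cref{lem:polarization} — delete the boundary tiles, reorder tiles to group equal run-symbols, pad $p$ to the canonical $0^t1^t\cdots(q-1)^t$ — costs only $O(\eps^2)$ of advantage and reduces matters to a matching of $v_j$ into $0^t\cdots(q-1)^t$. Here the $q-1$ regimes enter: if $m$ run-symbols carry a non-negligible share of the matching, then $\adv^{(m)}=\Omega(\eps/q)$ forces at least two of them, say $c\ne c'$, to carry matched mass and within-run density $\Omega(\eps/q)$ each; consequently $\E[f(v_{j+1})\mid v_j,c]$ and $\E[f(v_{j+1})\mid v_j,c']$ are $\Omega(\eps/q)$ apart along the orthogonal directions $e_c,e_{c'}$, while the events that $v_{j+1}$ lands in a $c$-run and in a $c'$-run each have probability $\Omega(\eps/q)$. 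A three-valued-variance estimate in the spirit of \cref{lem:threeValuedVariance} then lower-bounds $\mathrm{tr}\,\mathrm{Cov}(f(v_{j+1})\mid v_j)$, and pushing $\E_{v_j}$ inside via Jensen — the bound being a convex increasing function of the advantage, exactly as in \eqref{eqn:jensen} — closes the step using the lower bound on $\E[\adv^{(i)}]$ established above.

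\textbf{Main obstacle.} Everything delicate is in this last lemma. First, one must show — uniformly over which corner $p_m$ of $F_q$ the error pattern lies near — that the optimal matching necessarily spreads over at least two run-symbols with simultaneously non-negligible mass \emph{and} within-run density; the binary proof gets this for free, since the matching must split between the only two run types. Second, the bookkeeping of the reorder-and-pad step is heavier with $q$ run types, notably in bounding the extra length from padding $p$ to $0^t1^t\cdots(q-1)^t$. Third, the factor-$O(q)$ length distortions from the comparability lemmas accumulate along the chain of reductions, and keeping them in check is what converts a crude $\mathrm{poly}(\eps,1/q)$ estimate into the sharp $\Omega(\eps^2/q^5)$ underlying the stated list size. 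Once \cref{thm:LowRateListDecQaryCodes} is in hand, feeding it as the inner code into the concatenation of \cref{sec:concatenation} gives \cref{thm:qaryMain} with no further work.
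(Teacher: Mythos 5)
Your skeleton matches the paper's: restate membership in $(1-\eps)F_q$ as positivity (by at least $\eps/q$) of one of $q-1$ generalized advantages $\adv^{q,i}$, run the nested tile-sampling process, track the sum of variances of the $q$ symbol frequencies as the monotone potential, and telescope against the trivial upper bound to cap the list size at $O(q^5/\eps^2)$. The expectation-preservation claims, the $O(\eps^2)$ cost of boundary tiles, and the clean-up (delete, reorder, pad) are all as in the paper, and your facet equations and the interpretation of $\adv^{q,i}_M$ in terms of $(\gamma_M,\delta_M)$ are correct.

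However, the step you yourself flag as the heart of the proof is a genuine gap, and the mechanism you sketch for it is not the one that works. You propose to show that a positive $\adv^{(m)}$ forces \emph{at least two} run-symbols $c\neq c'$ to each carry matched mass and within-run density $\Omega(\eps/q)$, and then to apply a three-valued-variance bound to the two far-apart conditional means $\E[f(v_{j+1})\mid T=c]$ and $\E[f(v_{j+1})\mid T=c']$. This quantitative claim is asserted, not proved, and it is unclear how to establish it uniformly over which facet of $F_q$ is active: the adversary is free to concentrate the matching very unevenly across the $q$ run types, and "two symbols with density $\Omega(\eps/q)$ each" does not follow from positivity of a single linear functional of $(|M|,|a|,|b|)$ without the optimization that the paper carries out. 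What the paper actually proves is different and sidesteps the pairwise comparison entirely: it upper-bounds $\adv^{q,z}_{\bar M_i}$ by an explicit linear expression in the \emph{local} frequencies $f^*_j=\E[F^j_{i+1}\mid v_i,T=j]$ weighted by the block lengths $p_j$ (with the optimal padding threshold $t$ sitting at $w=\lfloor z/2\rfloor+1$), and then proves (Proposition on non-positivity, via two auxiliary optimization lemmas and an even/odd case analysis whose maximum is exactly $1$, attained at $m=z$ or $z+1$ equally loaded symbols) that the \emph{same} expression evaluated at any unit-sum frequency vector --- in particular at the global frequencies $\bar f_j=F^j_i$ --- is non-positive. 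Subtracting the two yields $\sum_j |f^*_j-\bar f_j|\,p_j \geq \frac{\adv - 3q\eps^2}{2z+1}$, i.e., some conditional mean deviates from the \emph{unconditional} mean, which is a variance lower bound directly from the definition, with no analogue of \cref{lem:threeValuedVariance} needed. This non-positivity proposition is exactly where the quadratic corner points $\left(\frac{i(i-1)}{q},\frac{q-i}{q}\right)$ and the $(q-1)$ linear pieces of $F_q$ enter, and it is the only substantively new content beyond the binary case; your proposal leaves it unsupplied.
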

We remark that in the case of $q=2$, \cref{thm:LowRateListDecQaryCodes} improves over \cref{thm:LowRateListDecBinaryCodes} in terms of the dependence of the list size on $\eps$.

\subsubsection{Proof Sketch for \cref{thm:LowRateListDecQaryCodes}}
To prove \cref{thm:LowRateListDecQaryCodes}, we show that Bukh-Ma codes are list-decodable as long as the error rate $(\gamma, \delta)$ lies beneath the line that connects a pair of consecutive non-zero vertices of $F_q$. 

In other words, for \noSTOC{any pair of points }\STOConly{pairs }$\left(\frac{i(i-1)}{q}, \frac{q-i}{q}\right)$ and $\left(\frac{i(i+1)}{q}, \frac{q-i-1}{q}\right)$ we consider the line passing through them (see \cref{fig:border-characteristics}), i.e., 
\begin{equation}
\gamma + (2i)\delta = \frac{(2q-1)i-i^2}{q}, \quad i=1, \cdots, q-1\label{eqn:line-segments}
\end{equation}
\begin{figure}[]
  \centering
  \noSTOC{\includegraphics[width=.9\textwidth]{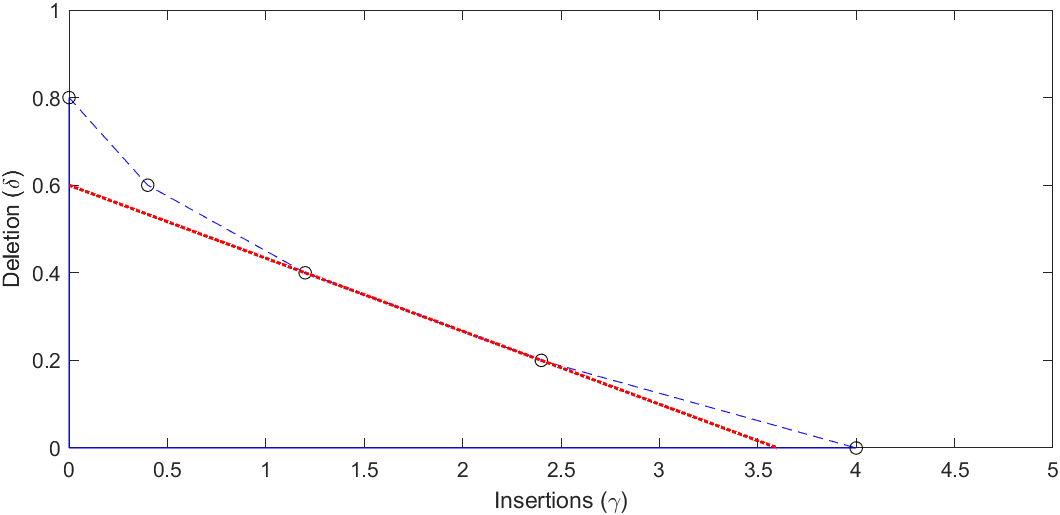}}
  \STOConly{\includegraphics[width=\linewidth]{Characteristic.png}}
  \caption{In the feasibility region for $q=5$, the line passing through $(1.2, 0.4)$ and $(1.8, 0.3)$ (indicated with red dotted line) is characterized as $\gamma + 6\delta \leq 3.6$. (Corresponding to $i=3$ in \cref{eqn:line-segments})}\label{fig:border-characteristics}
\end{figure}
and show that as long as 
$\gamma + (2z)\delta \leq (1-\eps)\frac{(2q-1)z-z^2}{q}$ for some $z\in\{1,\cdots,q-1\}$, Bukh-Ma codes are list-decodable. Note that the union of such areas is equal to $(1-\eps)F_q$.

The analysis for each line follows the arguments for the binary case. Namely, we assume that $k$ codewords can be converted to some center string $v$ via $(\gamma, \delta)$ fraction of errors. Then, using an appropriate advantage notion and considering some coupled statistic processes obtained by sampling substrings, we show that $k$ is bounded above by some $O_q\left(\poly(1/\eps)\right)$.

The only major difference is that the notion of bias cannot be directly used for $q$-ary alphabets. In this general case, instead of keeping track of the variance of the bias, we keep track of the sum of the variances of the frequency of the occurrence of each symbol. We show that this quantity increases by some constant after each substring sampling (analogous to \cref{lem:process-variance-guarantee}) by showing that a positive advantage requires that the frequency of occurrence of at least one of the symbols to be $\eps$-different for two different values of the random variable $T$ (analogous to \cref{lem:polarization}). The rest of this section contains more formal description of generalized notions and proofs for generalized $q$-ary claims.

\subsection{Generalized Notation and \noSTOC{Preliminary Lemmas}\STOConly{Preliminaries}}
To prove \cref{thm:LowRateListDecQaryCodes}, we need to generalize some of the notions and respective preliminary lemmas for the binary case.
We start with defining $i$th order advantage.
\begin{definition}[$i$th order $q$-ary advantage of matching $M$]
For a pair of positive integers $i < q$, a pair of $q$-ary strings $a$ and $b$, and a matching $M$ between $a$ and $b$, we define \emph{$i$th order $q$-ary advantage of $a$ to $b$} as follows:
\noSTOC{$$\adv^{q, i}_M(a, b)=\frac{(2i+1)|M|-|a|-\frac{i+i^2}{q} \cdot |b|}{|a|}$$}
\STOConly{$\adv^{q, i}_M(a, b)=\frac{(2i+1)|M|-|a|-\frac{i+i^2}{q} \cdot |b|}{|a|}.$}
\end{definition}
Note that the notion of advantage utilized for the binary case is obtained for $q=2$ and $i=1$ in the above definition. 
The notions of $i$th order advantage between two strings (that is independent of a specific matching, i.e., $\adv^{q, i}(a, b)$) and infinite $i$th order advantage are defined in a similar manner to the binary case.

\begin{remark}\label{rmk:qAryAdvInterpretation}
In the same spirit as of the binary case, $\adv^{q, i}_M(a, b)$ is simply the value of 
\noSTOC{$$|b|\left(\frac{(2q-1)i-i^2}{q}-(2i)\delta_M-\gamma_M\right)$$}
\STOConly{$|b|\left(\frac{(2q-1)i-i^2}{q}-(2i)\delta_M-\gamma_M\right)$}
normalized by the length of $a$. \noSTOC{Indeed,
\begin{eqnarray*}
\adv^{q, i}_M(a, b) &=& \frac{(2i+1)|M|-|a|-\frac{i+i^2}{q}\cdot|b|}{|a|}\\
&=& \frac{(2i+1)|b|(1-\delta_M)-|b|(1-\delta_M+\gamma_M)-\frac{i+i^2}{q}\cdot|b|}{|a|}\\
&=& \frac{|b|}{|a|}\cdot\left[(2i+1)(1-\delta_M)-(1-\delta_M+\gamma_M)-\frac{i+i^2}{q}\right]\\
&=& \frac{|b|}{|a|}\cdot\left(\frac{(2q-1)i-i^2}{q} -(2i)\delta_M - \gamma_M\right).
\end{eqnarray*}}
\end{remark}

\begin{lemma}\label{lem:qAryLengthPositiveAdv}
If for strings $a$ and $b$, $\adv^{q, i}(a, b) \geq 0$, then $|a|$ and $|b|$ are within a $q$ factor of each other.
\end{lemma}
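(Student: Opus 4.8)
The plan is to mimic the binary argument (the first item of \cref{lem:advProperties}) but with the looser constants that come from the $q$-ary definition of advantage. Recall that for a matching $M$ achieving $\adv^{q,i}(a,b) = \adv^{q,i}_M(a,b)$ we have by definition
\[
\adv^{q,i}_M(a,b) = \frac{(2i+1)|M| - |a| - \frac{i+i^2}{q}|b|}{|a|}.
\]
So $\adv^{q,i}(a,b)\ge 0$ is equivalent to $(2i+1)|M| \ge |a| + \frac{i+i^2}{q}|b|$. The only facts I will use are the trivial matching bound $|M| \le \min(|a|,|b|)$ and the range $1 \le i \le q-1$, which gives $\frac{i+i^2}{q} \le \frac{(q-1)+(q-1)^2}{q} = q-1$ and also $2i+1 \le 2q-1 < 2q$.

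First I would bound $|a|$ in terms of $|b|$. From $(2i+1)|M| \ge |a| + \frac{i+i^2}{q}|b| \ge |a|$ and $|M| \le |b|$ we get $|a| \le (2i+1)|b| \le (2q-1)|b| < 2q|b|$; more than enough to say $|a| \le q|b|$ is false in general, so I should be more careful here — actually the clean bound I want is $|a| \le (2i+1)|b|$, and since the statement only claims the two lengths are within a factor of $q$ of each other I will instead extract the tighter estimate. Using $|M| \le |a|$ in $(2i+1)|M| \ge |a| + \frac{i+i^2}{q}|b|$ gives $(2i+1)|a| \ge (2i+1)|M| \ge \frac{i+i^2}{q}|b|$, i.e. $|b| \le \frac{q(2i+1)}{i+i^2}|a| = \frac{q(2i+1)}{i(i+1)}|a|$. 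For $i=1$ this is $\frac{3q}{2}|a|$; for $i \ge 1$ the factor $\frac{2i+1}{i(i+1)} \le \frac{3}{2}$, so $|b| \le \frac{3q}{2}|a|$. Symmetrically, using $|M|\le|b|$ in the same inequality: $(2i+1)|b| \ge (2i+1)|M| \ge |a|$, so $|a| \le (2i+1)|b| \le (2q-1)|b|$.

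Thus both $|a|/|b|$ and $|b|/|a|$ are bounded by a quantity that is at most $2q$ (indeed $\max\{2i+1,\ \tfrac{q(2i+1)}{i(i+1)}\} \le 2q$ for all $1\le i\le q-1$ and $q\ge 2$), which establishes that $|a|$ and $|b|$ are within a factor of $q$ of one another once one checks the constants — and if the intended constant is exactly $q$ rather than $2q$, I would tighten the first estimate by also invoking the nonnegativity of the $\frac{i+i^2}{q}|b|$ term, which was dropped above, to absorb the slack. The main (really the only) obstacle is purely bookkeeping: making sure the worst case over $i\in\{1,\dots,q-1\}$ of the two ratios $2i+1$ and $\frac{q(2i+1)}{i(i+1)}$ genuinely lands inside the factor-$q$ window claimed in the statement; there is no conceptual difficulty, since unlike the binary Lemma~\ref{lem:advProperties} we are not claiming the sharp constant $2$ but only the order-$q$ bound, so a slightly lossy argument suffices. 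I would finish by writing out the two one-line chains of inequalities explicitly and remarking that they jointly give $\frac{1}{q}\max(|a|,|b|) \le \min(|a|,|b|)$.
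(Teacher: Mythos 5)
Your overall strategy is the paper's: write out $\adv^{q,i}_M(a,b)\ge 0$ as $(2i+1)|M|\ge |a|+\frac{i+i^2}{q}|b|$ and combine it with $|M|\le\min(|a|,|b|)$. But as written your argument does not prove the stated lemma: in each direction you discard one of the two nonnegative terms on the right-hand side, and that discarded term is exactly the slack you need. Your two chains give $|b|\le\frac{q(2i+1)}{i(i+1)}|a|$ (which is $\frac{3q}{2}|a|$ at $i=1$) and $|a|\le(2i+1)|b|\le(2q-1)|b|$; both exceed the claimed factor $q$, and "within a factor of $2q$" is not "within a factor of $q$." You notice the shortfall but leave the repair as a promissory note, and your description of the repair is also slightly off: in the first estimate the term you dropped was $|a|$, not $\frac{i+i^2}{q}|b|$, so that is the one you must reinstate.

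The fix is one line in each direction and is exactly what the paper does (it keeps both terms, in the form $q|a|+(i+i^2)|b|\le q(2i+1)\min(|a|,|b|)$, and then splits into cases by which length is smaller; the case split is not even necessary since $|M|\le|a|$ and $|M|\le|b|$ both always hold). Concretely: from $(2i+1)|a|\ge |a|+\frac{i+i^2}{q}|b|$ you get $2i|a|\ge\frac{i(i+1)}{q}|b|$, hence $|b|\le\frac{2q}{i+1}|a|\le q|a|$ using $i\ge 1$; and from $(2i+1)|b|\ge|a|+\frac{i+i^2}{q}|b|$ you get $|a|\le\frac{q(2i+1)-i-i^2}{q}|b|\le q|b|$, the last step being equivalent to $(q-i)^2\ge q-i$, which holds since $i\le q-1$. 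With those two lines in place your proof is complete and coincides with the paper's.
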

\STOConly{The proof of this lemma is similar to the binary case and can be found in the extended version of this argument.}
\noSTOC{\begin{proof}
$\adv^{q, i}(a, b) \geq 0$ implies that for some matching $M$, 
\begin{eqnarray}
&&adv^{q, i}_M \geq 0 \Rightarrow |a|+\frac{i+i^2}{q}\cdot|b| \leq (2i+1)|M|\nonumber\\
&\Rightarrow& q|a|+(i+i^2)\cdot|b| \leq q(2i+1)|M| \leq q(2i+1)\min(|a|, |b|)\label{eqn:lengths-advantage}
\end{eqnarray}
Now if $|a|\leq |b|$, \eqref{eqn:lengths-advantage} gives
$$q|a|+(i+i^2) |b| \leq q(2i+1) |a| \Rightarrow |b| \leq \frac{2q}{i+1}\cdot |a| \leq q|a|$$
and if $|b|<|a|$, \eqref{eqn:lengths-advantage} gives
$$q|a|+(i+i^2) |b| \leq q(2i+1) |b| \Rightarrow |a| \leq \frac{2iq+q-i-i^2}{q}\cdot |b| \leq q|b|.$$
\end{proof}}

\begin{definition}[$q$-ary Alternating Strings]
For any positive integer $r$, we define the infinite $q$-ary alternating string of run-length $r$ as $A^q_r = (1^r2^r\cdots q^r)^\infty$ and denote its prefix of length $l$ by $A^q_{r,l} = A^q_r[1,l]$.
\end{definition}

\subsection{Proof of \cref{thm:LowRateListDecQaryCodes}}
As mentioned before, \cref{thm:LowRateListDecQaryCodes} can be restated as follows.
\begin{theorem}[Restatement of \cref{thm:LowRateListDecQaryCodes}]\label{thm:restated}
For any integer $q\geq 2$, $\eps>0$, sufficiently large $n$, and any $z\in\{1, 2, \cdots, q-1\}$, the Bukh-Ma code $C^n_{n, \eps}$ from \cref{thm:LowRateListDecQaryCodes} is list decodable from any $\delta n$ deletions and $\gamma n$ insertions with a list size $O\left(q^5 / \eps^2\right)$ as long as 
$\gamma + (2z)\delta \leq (1-\eps)\frac{(2q-1)z-z^2}{q}$.
\end{theorem}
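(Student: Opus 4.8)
The strategy is to transcribe the proof of \cref{thm:LowRateListDecBinaryCodes} with three substitutions: the scalar \emph{bias} is replaced by the full empirical frequency vector of a string, the scalar variance potential by a sum of $q$ coordinate variances, and the advantage $\adv$ by the $z$th order $q$-ary advantage $\adv^{q,z}$ for the fixed $z$ of the statement. Write $\beta_z := \frac{(2q-1)z-z^2}{q} = 2z-\frac{z(z+1)}{q}$, and note $\beta_z\ge 1$ for $1\le z\le q-1$ (it is concave in $z$ and equals $2-\tfrac2q$ at $z=1$ and $q-1$ at $z=q-1$). Proving \cref{thm:restated} for every $z$ suffices, since each $(\gamma,\delta)\in(1-\eps)F_q$ satisfies $\gamma+(2z)\delta\le(1-\eps)\beta_z$ for at least one $z\in\{1,\dots,q-1\}$, so the union of the regions handled is exactly $(1-\eps)F_q$ and \cref{thm:LowRateListDecQaryCodes} follows.

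\emph{Setup and first moments.} Assume toward a contradiction that a string $v$ and $k$ codewords $A^q_{r_1,n},\dots,A^q_{r_k,n}$ with $r_1>\dots>r_k$ and $r_i/r_{i+1}\ge 1/\eps^4$ can each be transformed into $v$ with $\gamma_i n$ insertions and $\delta_i n$ deletions, where $\gamma_i+(2z)\delta_i\le(1-\eps)\beta_z$. By \cref{rmk:qAryAdvInterpretation} the witnessing matching gives $\adv^{q,z}(v,A^q_{r_i,n})\ge\frac{n}{|v|}(\beta_z-(2z)\delta_i-\gamma_i)\ge\frac{n}{|v|}\eps\beta_z\ge 0$; applying \cref{lem:qAryLengthPositiveAdv} then yields $|v|\le qn$, hence $\adv^{q,z}(v,A^q_{r_i,n})\ge\eps\beta_z/q$ for all $i$ (the analogue of \cref{lem:advantage}). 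Exactly as in the binary case, recursively sample $v=v_0\supseteq v_1\supseteq\dots\supseteq v_k$, where $v_{i+1}$ is a uniformly random length-$(r_{i+1}\eps^2)$ substring of $v_i$; set $A_i:=\adv^{q,z}(v_i,A^q_{r_i})$ and let $\bm f_i$ be the frequency vector of $v_i$ with coordinates $\Count_a(v_i)/|v_i|$. The averaging/projection argument of \cref{lem:process-guarantees} uses only that $\adv^{q,z}_M$ is affine in $|M|,|a|,|b|$ and that the sampled substrings tile $v_i$, so it carries over coordinatewise: for every $i$ and symbol $a$, $\E[(\bm f_i)_a]=\Count_a(v)/|v|$ and $\E[A_i]\ge\eps\beta_z/q$.

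\emph{Growth of the potential.} Put $\Phi_i:=\sum_{a=1}^q\Var\big((\bm f_i)_a\big)$; each coordinate lies in $[0,1]$ with constant mean, so $\Phi_i\le\sum_a\tfrac{\Count_a(v)}{|v|}\big(1-\tfrac{\Count_a(v)}{|v|}\big)\le 1$. The core claim, replacing \cref{lem:process-variance-guarantee}, is $\Phi_{i+1}\ge\Phi_i+c(q)\eps^2$ for some $c(q)=1/\mathrm{poly}(q)$. Coordinatewise law of total variance (conditioning on $v_i$) reduces this to $\E_{v_i}\big[\sum_a\Var(\mathrm{freq}_a(v_{i+1})\mid v_i)\big]\ge c(q)\eps^2$. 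Fix $v_i$, take the optimal matching $M_i$ into $A^q_{r_{i+1}}$, and partition the length-$(r_{i+1}\eps^2)$ blocks of $v_i$ according to whether their projection lies inside a single run $a^{r_{i+1}}$ (event $T=a$) or straddles a boundary (event $T=e$): as in the binary proof, $\Pr\{T=e\mid v_i\}=O(q\eps^2)$ since the projection of $v_i$ has length $\le q|v_i|$ and meets $\le q|v_i|/r_{i+1}$ boundaries, and counting matched pairs per run shows each $\Pr\{T=a\mid v_i\}$ is either negligible or $\Omega(1/\mathrm{poly}(q))$. Conditioning further on $T$ and using the law of total variance again, $\sum_a\Var(\mathrm{freq}_a(v_{i+1})\mid v_i)\ge\sum_a\Var_T\big(\E[\mathrm{freq}_a(v_{i+1})\mid v_i,T]\big)$, and $\E[\mathrm{freq}_a(v_{i+1})\mid v_i,T=b]$ is precisely the frequency of symbol $a$ in the union of blocks of $v_i$ projecting to run $b$. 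A $q$-ary analogue of \cref{lem:polarization} then supplies a symbol $a$ and two runs $b',b''$ with $\Pr\{T=b'\},\Pr\{T=b''\}=\Omega(1/\mathrm{poly}(q))$ and conditional-frequency gap $\Omega\big((\adv^{q,z}(v_i,A^q_{r_{i+1}})-\mathrm{poly}(q)\eps^2)/\mathrm{poly}(q)\big)$; feeding this into a $q$-valued version of \cref{lem:threeValuedVariance} and then Jensen's inequality with $\E[A_{i+1}]\ge\eps\beta_z/q$ gives $\Phi_{i+1}-\Phi_i=\Omega(\eps^2/\mathrm{poly}(q))$. (The improved $\eps^{-2}$, rather than $\eps^{-3}$, dependence over \cref{thm:LowRateListDecBinaryCodes} is exactly the gain from the probabilities $\Pr\{T=b\mid v_i\}$ having a lower bound not shrinking with $\eps$, unlike in the binary argument.) Iterating, $\Phi_k\ge k\cdot c(q)\eps^2$, which must be $\le 1$, so $k=O(q^5/\eps^2)$ --- the desired list-size bound.

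\emph{Main obstacle.} Every step above is a faithful transcription of the binary argument except the $q$-ary polarization lemma used in the previous paragraph, and this is where essentially all the new work lies. As in \cref{lem:polarization} one normalizes by (i) deleting all boundary-straddling blocks, (ii) regrouping the survivors so that, for each symbol $a$, all blocks projecting to an $a$-run are contiguous, and (iii) padding the projection to the form $1^t2^t\cdots q^t$; each operation perturbs $\adv^{q,z}$ by at most $O(\mathrm{poly}(q)\eps^2)$. Writing $\bar v_i=\bar v_i^{(1)}\cdots\bar v_i^{(q)}$ for the regrouped string, the matching has size at most $\sum_a\min\{t,\Count_a(\bar v_i^{(a)})\}$, so a positive $\adv^{q,z}$ becomes a single linear inequality among $t$, the block lengths $|\bar v_i^{(a)}|$, and the diagonal counts $\Count_a(\bar v_i^{(a)})$. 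For $q=2$ one simply reads off a bias gap between the two blocks; for general $q$ and general $z$ one must instead show this inequality is incompatible with all $q$ frequency vectors $\mathrm{freq}(\bar v_i^{(a)})$ lying within $o(\adv^{q,z})$ of a common distribution --- a $z$-dependent linear-programming/convexity estimate whose extremal (zero-advantage) configuration is a uniform distribution supported on $z$ symbols, replicated identically across $z$ equal-length blocks. Pushing this estimate through with only a $\mathrm{poly}(q)$ loss (and likewise certifying the $\Omega(1/\mathrm{poly}(q))$ run-probability bounds) is the delicate part; I expect it to absorb the bulk of the technical effort, with the martingale-of-variances skeleton being routine once it is in place.
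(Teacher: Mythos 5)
Your skeleton is faithful to the paper's: reduce to a single $z$, replace the bias by the frequency vector $F^p_i$, use the potential $\sum_p \Var(F^p_i)$, run the nested-sampling martingale with the $z$th-order advantage $\adv^{q,z}$, and conclude $k=O(q^5/\eps^2)$ from boundedness of the potential. Your first-moment claims ($\E[F^p_i]=F^p_{i-1}$, $\E[A_i]\geq \eps\beta_z/q$ via \cref{rmk:qAryAdvInterpretation} and \cref{lem:qAryLengthPositiveAdv}) and the $O(q\eps^2)$ bound on the boundary-straddling event match \cref{lem:qAry-process-variance-guarantee}'s setup. However, the proposal has a genuine gap at exactly the point you flag: the $q$-ary polarization statement is described but not proved, and in the paper it is not a routine transcription. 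It requires (i) identifying the optimal padding threshold $t=f^*_w|\bar v_i^w|$ with $w=\lfloor z/2\rfloor+1$ by a discrete derivative argument, yielding the bounds \eqref{eqn:EvenIBound}--\eqref{eqn:OddIBound}, and (ii) \cref{thm:non-positivity}, a non-trivial optimization over unit-sum vectors proved via two auxiliary extremal lemmas (\cref{lem:alg1,lem:alg2}) and a case analysis over the parity of $z$. You correctly guess the extremal configuration (uniform on $z$ or $z+1$ symbols), but a plan plus a correct guess of the extremizer is not a proof of the inequality.

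More substantively, the mechanism you sketch for converting polarization into a variance increment is not the right one and would likely fail to deliver the claimed list size. You propose to certify $\Pr\{T=b'\},\Pr\{T=b''\}=\Omega(1/\mathrm{poly}(q))$ for two runs (via a dichotomy ``negligible or $\Omega(1/\mathrm{poly}(q))$'') and then apply a multi-valued analogue of \cref{lem:threeValuedVariance}. The direct analogue of the binary counting argument only gives run probabilities $\Omega(\adv^{q,z})=\Omega(\eps/q)$, not $\Omega(1/\mathrm{poly}(q))$; your dichotomy is unsubstantiated, and the advantage can concentrate its ``excess'' on a run of small probability with a large frequency gap, in which case no $\eps$-free probability lower bound exists. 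With only an $\Omega(\eps)$ probability bound the increment degrades to $\Theta(\eps^3)$, reproducing the binary list size rather than $O(q^5/\eps^2)$. The paper avoids probability lower bounds entirely: from \cref{thm:non-positivity} applied to the global frequency vector it deduces $\sum_{j\geq w}|f^*_j-\bar f_j|\,p_j\geq \bigl(\adv^{q,z}_{M_i}-3q\eps^2\bigr)/(2z+1)$, picks a $j_0$ attaining at least this on a single term, and uses $p_{j_0}\geq p_{j_0}^2$ to get $(f^*_{j_0}-\bar f_{j_0})^2 p_{j_0}\geq\bigl(|f^*_{j_0}-\bar f_{j_0}|\,p_{j_0}\bigr)^2$, which feeds directly into $\Var_T$ with the probability $p_{j_0}$ absorbed into the square. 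That single observation, not any control of the run probabilities, is the actual source of the improved $\eps^{-2}$ dependence; your explanation of the improvement is therefore incorrect in mechanism even though the conclusion matches.
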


To prove this restated version, once again, we follow the steps taken for the proof of \cref{thm:LowRateListDecBinaryCodes} and assume for the sake of contradiction that there exists a string $v$ and $k = \Omega\left(\frac{q^5}{\eps^2}\right)$ members of ${C}^q_{n, \eps}$ like $A^q_{r_1, n}, A^q_{r_2, n}, \cdots, A^q_{r_k, n}$, so that each $A^q_{r_i, n}$ can be converted to $v$ with $I_i$ insertions and $D_i$ deletions where $I_i+(2z)D_i \leq (1-\eps)\frac{(2q-1)z-z^2}{q}\cdot n$. We define the indices in a way that $r_1 > r_2 > \cdots > r_k$. Given the definition of $C^q_{n,\eps}$, $r_i \geq \frac{r_{i+1}}{\eps^4}$. 

Given \cref{rmk:qAryAdvInterpretation} and \cref{lem:qAryLengthPositiveAdv}, an argument similar to the one presented in \cref{lem:advantage} shows that for all these codewords, 
$\adv^{q, z}(v, A^q_{r_i, n}) \geq \frac{\eps}{q}$. 

We define the following stochastic processes similar to the binary case.
We split the string $v$ into substrings of size $l_1 = r_1\eps^2$, pick one uniformly at random and denote it by $v_1$. We define random variable $A_1 = \adv^{q, z}(v_1, A^q_{r_1})$ and random variables $F^p_1$ for $p=1, 2, \cdots, q$ as the frequency of the occurrence of symbol $p$ in $v_1$. In other words,
\noSTOC{$$F^p_1 = \frac{\Count_p(v_1)}{|v_1|}.$$}
\STOConly{$F^p_1 = \frac{\Count_p(v_1)}{|v_1|}.$}
We continue this process for $j=2, 3, \cdots, k$ by splitting each $v_{j-1}$ into substrings of length $l_j = r_j\eps^2$, picking $v_j$ uniformly at random, and defining $A_j = \adv^{q, z}(v_j, A^q_{r_j})$ and 
$F^p_j = \frac{\Count_p(v_j)}{|v_j|}$ for all $p\in\{1, 2, \cdots, q\}$.
We then define the sequence of real numbers $f_1 ,f_2, \cdots, f_k$ as follows:
\noSTOC{$$f_i = \sum_{p=1}^q \Var(F^p_i).$$}
\STOConly{$f_i = \sum_{p=1}^q \Var(F^p_i).$}
This series of real numbers will play the role of $\Var(B_i)$ in the binary case. 
\noSTOC{\begin{lemma}
The following hold for $A_1, A_2, \cdots, A_k$ and $F^p_1, F^p_2, \cdots, F^p_k$ for all $p\in\{1, 2, \cdots, q\}$.
\begin{enumerate}
\item $\E[F^p_i] = F^p_{i-1}$
\item $\E[A_i] \geq \frac{\eps}{q}$
\end{enumerate}
\end{lemma}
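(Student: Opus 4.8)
The two claims are direct $q$-ary analogues of \cref{lem:process-guarantees}, and the plan is to mirror that proof essentially verbatim, tracking each symbol's frequency in place of the single bias statistic. First I would fix $i$ and note that $v_i$ is obtained from $v_{i-1}$ by partitioning $v_{i-1}$ into the set $U$ of consecutive substrings of length $l_i = r_i\eps^2$ and picking one uniformly at random. For the first claim, linearity of expectation gives
\[
\E[F^p_i] = \frac{1}{|U|}\sum_{\hat v\in U}\frac{\Count_p(\hat v)}{l_i} = \frac{\Count_p(v_{i-1})}{|U|\cdot l_i} = \frac{\Count_p(v_{i-1})}{|v_{i-1}|} = F^p_{i-1},
\]
using that the substrings in $U$ partition $v_{i-1}$ exactly (here I am assuming, as the paper implicitly does, that $l_i$ divides the relevant length, or that the negligible boundary effect is absorbed into the $o(\cdot)$ slack). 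This is purely bookkeeping and presents no obstacle.

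For the second claim, I would follow the averaging argument in \cref{lem:process-guarantees} almost word for word, with $\adv^{q,z}$ in place of $\adv$. Let $M_i$ be the matching between $v_{i-1}$ and $A^q_{r_i,n}$ (more precisely between $v_{i-1}$ and the substring of $A^q_{r_i,n}$ onto which it projects) achieving $\adv^{q,z}(v_{i-1}, A^q_{r_i,n})$. For each $\hat v\in U$, let $\hat v\to M_i$ be its projection and $M_i[\hat v]$ the induced sub-matching, and set $a(\hat v) = \frac{(2z+1)|M_i[\hat v]| - |\hat v| - \frac{z+z^2}{q}|\hat v\to M_i|}{|\hat v|}$. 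The key inequality is $a(\hat v)\le \adv^{q,z}(\hat v, \hat v\to M_i)\le \adv^{q,z}(\hat v, A^q_{r_i})$, which holds because $\adv^{q,z}$ over a fixed matching is at most the best matching, and the infinite advantage minimizes over all substrings of $A^q_{r_i}$. Then
\[
\E[A_i] = \frac{1}{|U|}\sum_{\hat v\in U}\adv^{q,z}(\hat v, A^q_{r_i}) \ge \frac{1}{|U|}\sum_{\hat v\in U} a(\hat v) = \frac{1}{|v_{i-1}|}\Bigl((2z+1)|M_i| - |v_{i-1}| - \tfrac{z+z^2}{q}|A^q_{r_i,n}|\Bigr),
\]
where the last equality uses $\sum_{\hat v}|M_i[\hat v]| = |M_i|$, $\sum_{\hat v}|\hat v| = |v_{i-1}|$, and $\sum_{\hat v}|\hat v\to M_i|$ equals the length of the projected substring of $A^q_{r_i,n}$; the right-hand side is exactly $\adv^{q,z}(v_{i-1}, A^q_{r_i,n})$. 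By the bootstrapping observation already recorded above the lemma (the $q$-ary analogue of \cref{lem:advantage}, which follows from \cref{rmk:qAryAdvInterpretation} together with the length bound of \cref{lem:qAryLengthPositiveAdv}), this is at least $\eps/q$, and since the scales $r_{i+1}, r_{i+2}, \ldots$ are all much smaller than $l_i$, passing from $v_{i-1}$ to the sub-sampled $v_i$ preserves the advantage with $A^q_{r_i}$ in expectation, giving $\E[A_i]\ge \eps/q$.

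\textbf{Main obstacle.} The genuinely delicate point — which in the binary proof is hidden in the clean identity $\sum_{\hat v}|\hat v\to M_i| = |v_{i-1}\to M_i|$ — is making sure that the projections $\hat v\to M_i$ of the pieces $\hat v\in U$ really do tile (a substring of) $A^q_{r_i,n}$ without overlap or gaps, so that the three sums collapse cleanly into $\adv^{q,z}(v_{i-1}, A^q_{r_i,n})$. This is true because $M_i$ is an order-preserving matching, so consecutive pieces of $v_{i-1}$ map to consecutive (possibly empty, possibly touching) intervals of $A^q_{r_i,n}$; I would state this explicitly as the one non-formulaic step. Everything else is routine and identical to \cref{lem:process-guarantees}, so I would present the proof compactly, emphasizing only the substitution of $\adv^{q,z}$ and the per-symbol frequency statistics, and refer back to the binary argument for the mechanical parts.
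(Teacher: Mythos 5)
Your proposal is correct and follows essentially the same route as the paper, which itself disposes of part (1) by the partition/averaging identity and of part (2) by pointing back to the binary argument of \cref{lem:process-guarantees} with $\adv^{q,z}$ substituted for $\adv$. The only cosmetic difference is that you average one sampling level at a time ($v_{i-1}\to v_i$) and then iterate, whereas the paper collapses the nested sampling into a single uniform choice of a length-$l_i$ substring of $v$ and compares directly against $\adv^{q,z}(v,A^q_{r_i,n})\ge\eps/q$; the tiling observation you flag is handled the same way in both (disjoint pieces map to disjoint, order-consistent intervals, and any gaps only help since the $-\frac{z+z^2}{q}|b|$ term is negative).
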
}
\STOConly{\begin{lemma}
The following hold for $A_1, \cdots, A_k$ and $F^p_1, \cdots, F^p_k$ for all $p\in\{1, 2, \cdots, q\}$: $(1) \E[F^p_i] = F^p_{i-1}, \textnormal{and } (2) \E[A_i] \geq \frac{\eps}{q}$.
\end{lemma}}
\begin{proof}
Since $v_i$ is a substring of $v_{i-1}$ chosen uniformly at random, the overall frequency of symbol $p$ is equal to the average frequency of its occurrence in each substrings. The second item can be derived as in \cref{lem:process-guarantees}.
\end{proof}

The next lemma mimics \cref{lem:process-variance-guarantee} for the binary case.

\begin{lemma}\label{lem:qAry-process-variance-guarantee}
For the sequence $f_1, f_2, \cdots, f_k$, we have that
\noSTOC{$$f_{i+1} \geq f_i + \Omega\left(\frac{\eps^2}{q^4}\right).$$}
\STOConly{$f_{i+1} \geq f_i + \Omega(\eps^2/q^4).$}
\end{lemma}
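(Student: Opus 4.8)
The plan is to follow the blueprint of the binary case (\Cref{lem:process-variance-guarantee}) essentially line-by-line, replacing the single bias statistic by the vector of symbol frequencies and summing variances over all $q$ symbols. First I would apply the law of total variance to each $\Var(F^p_{i+1})$, conditioning on the substring $v_i$ chosen at step $i$; since $\E[F^p_{i+1}\mid v_i] = F^p_i$, this gives $\Var(F^p_{i+1}) = \Var(F^p_i) + \E[\Var(F^p_{i+1}\mid v_i)]$, and summing over $p$ yields $f_{i+1} = f_i + \sum_{p=1}^q \E[\Var(F^p_{i+1}\mid v_i)]$. So it suffices to show $\sum_p \E[\Var(F^p_{i+1}\mid v_i)] = \Omega(\eps^2/q^4)$. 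As in the binary proof, I fix $v_i$ and partition the length-$l_{i+1}$ substrings of $v_i$ according to the type of their projection under the optimal matching $M_i$ into $A^q_{r_{i+1}}$: a substring is ``pure of type $j$'' (for $j\in\{1,\dots,q\}$) if its projection lies inside a single $j^{r_{i+1}}$ block, and ``mixed'' otherwise. Let $T\in\{1,\dots,q,e\}$ denote the type of $v_{i+1}$. A counting argument identical to the binary one shows $\Pr\{T=e\mid v_i\} \leq q\eps^2$ (the projection of $v_i$ spans at most $q|v_i|/r_{i+1}$ block-boundaries by the generalized length bound \Cref{lem:qAryLengthPositiveAdv}), and that for each $j$ with a nonzero contribution to the matching, $\Pr\{T=j\mid v_i\} \geq \Omega\big((\adv^{q,z}(v_i,A^q_{r_{i+1}}) - O(q\eps^2))/q\big)$, arguing via $|M_i^j|$ exactly as before.

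Next I would apply the law of total variance once more, conditioning on $T$: for each $p$, $\Var(F^p_{i+1}\mid v_i) \geq \Var_T(\E[F^p_{i+1}\mid v_i,T])$. The key structural step — the analogue of \Cref{lem:polarization} — is to show that a positive $i$th-order advantage forces the frequency vectors $(\E[F^p_{i+1}\mid v_i,T=j])_{p=1}^q$, over the $q$ pure types $j$, to be spread apart: specifically that there exist two types $j_1 \neq j_2$ and a symbol $p^*$ with $\big|\E[F^{p^*}_{i+1}\mid v_i,T=j_1] - \E[F^{p^*}_{i+1}\mid v_i,T=j_2]\big| \geq \Omega\big((\adv^{q,z}(v_i,A^q_{r_{i+1}}) - O(q\eps^2))/q\big)$. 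To prove this I would run the same three-step transformation on $(v_i, M_i, p)$ used in the binary polarization lemma: delete the mixed substrings (cost $O(q\eps^2|v_i|)$ in the advantage), reorder the pure substrings so they are grouped by type, and pad each of the $q$ runs in the projection up to a common length so the projected string becomes $(1^t 2^t \cdots q^t)$-shaped; each step is shown to decrease $\adv^{q,z}_{M_i}$ by at most $O(q\eps^2)$. On the transformed string, $\bar v_i = (\bar v_i^1,\dots,\bar v_i^q)$ with $\bar v_i^j$ mapped to the $j^t$ run, the optimal matching connects symbol $j$ in $\bar v_i^j$ to $j^t$, so an explicit inequality bounding $\adv^{q,z}_{\bar M_i}$ in terms of $\sum_j \min\{t, \Count_j(\bar v_i^j)\}$ holds; if all the frequency vectors were within $O(\text{advantage}/q)$ of each other in every coordinate, this inequality would be violated (the advantage would be forced $\leq O(q\eps^2)$), giving the desired gap. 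Finally, applying a $(q{+}1)$-valued analogue of \Cref{lem:threeValuedVariance} — variance is at least $\tfrac{\xi}{2}(a_{j_1}-a_{j_2})^2$ when two atoms each have mass $\geq\xi$ — to the symbol $p^*$ coordinate, then taking expectation over $v_i$ and pushing the expectation inside the (convex, truncated-at-zero) lower bound via Jensen and $\E[A_{i+1}] \geq \eps/q$, yields $\sum_p \E[\Var(F^p_{i+1}\mid v_i)] \geq \Omega\big(\tfrac{1}{q}\cdot(\tfrac{\eps}{q})^2\big) = \Omega(\eps^2/q^4)$, as claimed.

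The main obstacle I anticipate is the $q$-ary polarization step: unlike the binary case, where ``bias spread apart'' is a one-dimensional statement obtained by a short algebraic manipulation of the two quantities $\bias(\bar v_i^0)$ and $\bias(\bar v_i^1)$, here the optimal matching on the transformed string can route symbol-$j$ mass in $\bar v_i^{j'}$ to the $j^t$ run for $j\neq j'$ (matched symbols need not be ``on the diagonal''), so bounding $\adv^{q,z}_{\bar M_i}$ from above requires carefully accounting for which symbols in each block $\bar v_i^{j'}$ can be usefully matched. I would handle this by observing that the contribution of block $\bar v_i^{j'}$ to the matching is at most $\min\{t,\Count_{j'}(\bar v_i^{j'})\}$ plus, after padding, any leftover capacity — and since $t = \max_j \Count_j(\bar v_i^j)$ is chosen large, the binding constraint is $\Count_{j'}(\bar v_i^{j'})$ itself; then a convexity/averaging argument over $j'$ shows that the frequencies $\Count_{j'}(\bar v_i^{j'})/|\bar v_i^{j'}|$ cannot all be close to the global average $F^{\bullet}_i$ without making $\sum_{j'}\Count_{j'}(\bar v_i^{j'})$ too small relative to what a positive advantage demands. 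Getting the constants and the exact $\Omega(\eps^2/q^4)$ dependence through this chain — in particular tracking that the $O(q\eps^2)$ transformation losses and boundary-crossing losses stay negligible against the $\eps/q$ advantage floor — is the delicate bookkeeping, but it is of the same flavor as the binary argument already carried out in detail.
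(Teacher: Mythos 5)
Your high-level architecture---two applications of the law of total variance, the partition of length-$l_{i+1}$ substrings by projection type, the delete/reorder/pad transformation, and a polarization step followed by Jensen---matches the paper's. The genuine gap is in the quantitative form of the polarization step, and it is fatal to the stated bound: you transplant the binary mechanism of \cref{lem:threeValuedVariance} (find two types $j_1\neq j_2$, each with probability at least $\xi=\Omega(\adv/q)$, whose conditional frequency vectors differ by $\Omega(\adv/q)$ in some coordinate, and conclude variance $\geq \tfrac{\xi}{2}(\text{gap})^2$). That chain is \emph{cubic} in the advantage: with $\E[A_{i+1}]\geq\eps/q$ it yields an increment of order $(\eps/q)^3$ divided by a power of $q$, not $\Omega(\eps^2/q^4)$---exactly as the binary \cref{lem:process-variance-guarantee} only achieves $\eps^3/1200$; your closing arithmetic ``$\tfrac1q\cdot(\eps/q)^2$'' silently replaces the probability floor $\Omega(\adv/q)$ by $\Omega(1/q)$ and the gap $\Omega(\adv/q)$ by $\adv$ itself, neither of which you have. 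The paper's \cref{lem:qAry-polarization} is structured differently precisely to avoid this loss: it never lower-bounds the individual probabilities $\Pr\{T=j\mid v_i\}$. Instead it shows the weighted deviation of local from global frequencies satisfies $\sum_j |f^*_j-\bar f_j|\,p_j \geq (\adv^{q,z}_{M_i}-3q\eps^2)/(2z+1)$, picks a single $j_0$ attaining this on average, and uses $p_{j_0}\geq p_{j_0}^2$ to get $(f^*_{j_0}-\bar f_{j_0})^2 p_{j_0}\geq \bigl((\adv^{q,z}_{M_i}-3q\eps^2)/(2z+1)\bigr)^2$, which is one term of the variance computed against the unconditional mean $F^{j_0}_i$. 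This is quadratic in the advantage and is what delivers $\Omega(\eps^2/q^4)$ (and why the paper notes the $q$-ary theorem improves the binary list size from $1/\eps^3$ to $1/\eps^2$). Separately, the existence of \emph{two} types that simultaneously have non-negligible probability and are separated in a common coordinate is asserted but not established in your sketch, and is simply not needed on the paper's route.

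The second under-specification is the upper bound on $\adv^{q,z}_{\bar M_i}$ after the transformation. You correctly flag off-diagonal matching as the difficulty, but the paper's resolution is not a soft averaging argument: it is an explicit optimization over the padding length $t$, maximized at $t=f^*_w|\bar v_i^w|$ for $w=\lfloor z/2\rfloor+1$, followed by \cref{thm:non-positivity}---the statement that the resulting linear functional of the sorted products $f^*_jp_j$ is at most $1$ for \emph{every} unit-sum pair of vectors, hence in particular for the global frequency vector $\bar f$. Proving that proposition takes its own appendix with two extremal lemmas, and the choice of $w$ together with the even/odd case split in $z$ is where the $(q-1)$-piecewise structure of the feasibility region actually enters. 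Your sketch would need to supply this entire step before the deviation bound $\sum_j|f^*_j-\bar f_j|p_j=\Omega(\adv^{q,z}_{M_i})$, and hence any variance increment at all, follows.
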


Using \cref{lem:qAry-process-variance-guarantee}, \cref{thm:restated} can be simply proved as follows.

\begin{proof}[{\bf Proof of \cref{thm:restated}}]
Note that each $f_i$ is the summation of the variance of $q$ random variables that take values in $[0, 1]$. Therefore, their value cannot exceed $q$.
Since $f_{i+1} \geq f_i + \Omega(\eps^2/q^4)$, the total length of the series, $k$, may not exceed $O\left(\frac{q^5}{\eps^2}\right)$.
This implies that the list size is $O\left(\frac{q^5}{\eps^2}\right)$.
\end{proof}

We now present the proof of \cref{lem:qAry-process-variance-guarantee}.

\begin{proof}[{\bf Proof of \cref{lem:qAry-process-variance-guarantee}}]
To relate $f_i$ and $f_{i+1}$, we utilize the law of total variance as follows:
\begin{eqnarray}
\Var(F^p_{i+1})&=&\Var\left(\E[F^p_{i+1}|v_{i}]\right) + \E\left[\Var(F^p_{i+1}|v_{i})\right]\nonumber\allowdisplaybreaks\\
&=&\Var\left(F^p_{i}\right) + \E\left[\Var(F^p_{i+1}|v_{i})\right]\label{eqn:qArytotalVar1}
\end{eqnarray}
Equation \eqref{eqn:qArytotalVar1} comes from the fact that the average frequency of symbol $p$ in substrings of length $l_{i+1}$ of $v_i$ is equal to the frequency of $p$ in $v_i$. Having this, we see that it suffices to show that $\E\left[\Var(F^p_{i+1}|v_{i})\right] \geq \Omega\left(\eps^2/q^4\right)$. Similar to \cref{lem:process-variance-guarantee} we define $E_j$ for $j=1,2,\cdots,q$ and $E_e$ respectively as the event that the projection of $v_{i+1}$ falls inside a $j^{r_{i+1}}$ in $A_{r_{i+1}}$ or a string containing multiple symbols. We also define the random variable $T$ out of $\{e, 1, 2, \cdots, q\}$ that indicates which one of these events is realized. Once again, we use the law of total variance to bound $\E\left[\Var(F^p_{i+1}|v_i)\right]$.
\noSTOC{\begin{eqnarray}
\E\left[\Var(F^p_{i+1}|v_{i})\right] &=&\E_{v_i}\Bigg[\Var_{T}(\E\left[F^p_{i+1}|v_i, T\right]) + \E_{T}\left[\Var(F^p_{i+1}|v_i, T)\right]\big]\nonumber\\
&\geq& \E_{v_i}\big[\Var_{T}(\E\left[F^p_{i+1}|v_i, T\right])\Bigg]\label{eqn:qAryVarLoweboundStep2}
\end{eqnarray}}
\STOConly{\begin{align}
\E\left[\Var(F^p_{i+1}|v_{i})\right] &=\E_{v_i}\Big[\Var_{T}\left(\E\left[F^p_{i+1}|v_i, T\right]\right)\nonumber\\
 &+ \E_{T}\big[\Var(F^p_{i+1}|v_i, T)\big]\Big]\nonumber\\
&\geq \E_{v_i}\left[\Var_{T}\left(\E\left[F^p_{i+1}|v_i, T\right]\right)\right]\label{eqn:qAryVarLoweboundStep2}
\end{align}}
Combining \eqref{eqn:qArytotalVar1} and \eqref{eqn:qAryVarLoweboundStep2} gives
\noSTOC{\begin{eqnarray}
&&\Var(F^p_{i+1}) \geq \Var\left(F^p_{i}\right) + \E_{v_i}\left[\Var_{T}\left(\E\left[F^p_{i+1}|v_i, T\right]\right)\right]\nonumber\\
&\Rightarrow& \sum_{p=1}^q \Var(F^p_{i+1}) \geq \sum_{p=1}^q \Var\left(F^p_{i}\right) + \sum_{p=1}^q \E_{v_i}\left[\Var_{T}\left(\E\left[F^p_{i+1}|v_i, T\right]\right)\right]\nonumber\allowdisplaybreaks\\
&\Rightarrow& f_{i+1} \geq f_{i} + \sum_{p=1}^q \E_{v_i}\left[\Var_{T}\left(\E\left[F^p_{i+1}|v_i, T\right]\right)\right]\nonumber\\
&\Rightarrow& f_{i+1} \geq f_{i} + \E_{v_i}\left[\sum_{p=1}^q \Var_{T}\left(\E\left[F^p_{i+1}|v_i, T\right]\right)\right]\label{eqn:qAry-incremental-increase}
\end{eqnarray}}
\STOConly{
\begin{align}
&\Var(F^p_{i+1}) \geq \Var\left(F^p_{i}\right) + \E_{v_i}\left[\Var_{T}\left(\E\left[F^p_{i+1}|v_i, T\right]\right)\right]\nonumber\\
&\Rightarrow \sum_{p=1}^q \Var(F^p_{i+1}) \geq \sum_{p=1}^q \Var\left(F^p_{i}\right) + \sum_{p=1}^q \E_{v_i}\left[\Var_{T}\left(\E\left[F^p_{i+1}|v_i, T\right]\right)\right]\nonumber\allowdisplaybreaks\\
&\Rightarrow f_{i+1} \geq f_{i} + \E_{v_i}\left[\sum_{p=1}^q \Var_{T}\left(\E\left[F^p_{i+1}|v_i, T\right]\right)\right]\label{eqn:qAry-incremental-increase}
\end{align}}

Note that the term 
\noSTOC{$\Var_{T}\left(\E\left[F^p_{i+1}|v_i, T\right]\right)$ }
\STOConly{$\Var_{T}(\E[F^p_{i+1}|v_i, T])$ }
refers to the variance of a ($q+1$)-valued random variable that takes the value 
\noSTOC{$\E_{v_i}\left[F^p_{i+1}|v_i, T=t\right]$ }
\STOConly{$\E_{v_i}[F^p_{i+1}|v_i, T=t]$ }
with probability $\Pr\{T=t|v_i\}$ for $t\in\{e, 1, 2, \cdots, q\}$. Once again, we present a crucial lemma that bounds from below the sum of variances of frequencies with respect to $T$ assuming that the overall advantage is large enough.
\begin{lemma}\label{lem:qAry-polarization}
For any realization of $v_i$, the following holds true if $\adv^{q, z}(v_i, A_{r_{i+1}}) \geq 3q\eps^2$:
$$ \sum_{p=1}^q \Var_{T}\left(\E\left[F^p_{i+1}|v_i, T\right]\right) \geq \left(\frac{\adv^{q, z}(v_i, A_{r_{i+1}}) - 3q\eps^2}{2z+1}\right)^2$$
\end{lemma}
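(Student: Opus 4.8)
The plan is to follow the proof of \cref{lem:polarization} almost verbatim at the structural level, replacing the single bias statistic by the full frequency vector. Fix the realization of $v_i$ and a matching $M_i$ achieving $\adv^{q, z}(v_i, A^q_{r_{i+1}})$, and let $p = v_i \to M_i$ be its projection in $A^q_{r_{i+1}}$. First I would apply the same three-step transformation: (i) delete from $v_i$ (and from $M_i$) every length-$l_{i+1}$ substring in $U_e$, i.e.\ one whose projection straddles a boundary between two runs of $A^q_{r_{i+1}}$; (ii) reorder the length-$l_{i+1}$ substrings of $v_i$ so that all substrings projecting into $1$-runs come first, then all projecting into $2$-runs, and so on, dragging their projections along so that $p$ becomes a concatenation $1^{t_1}2^{t_2}\cdots q^{t_q}$; (iii) pad the shorter stretches so that $p$ has the regular form $1^{t}2^{t}\cdots q^{t}$ with $t=\max_j t_j$. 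Step (ii) changes nothing. Step (i) removes at most $|U_e|\,l_{i+1}$ symbols, and since each run of $A^q_{r_{i+1}}$ has length $r_{i+1}$ we have $|U_e|\le |p|/r_{i+1}$; combined with $|p|\le q|v_i|$ (\cref{lem:qAryLengthPositiveAdv}, applicable since $\adv^{q,z}(v_i,p)\ge \adv^{q,z}(v_i,A^q_{r_{i+1}})>0$ under the hypothesis) and $r_{i+1}\le \eps^2|v_i|$, this removes at most $q\eps^2|v_i|$ symbols, so the numerator $(2z+1)|M|-|v_i|-\tfrac{z+z^2}{q}|p|$ of $\adv^{q, z}_{M}$ drops by at most $2z\cdot q\eps^2|v_i|$. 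Step (iii) enlarges $|p|$ by at most $q\,r_{i+1}\le q\eps^2|v_i|$, costing at most $\tfrac{z+z^2}{q}\cdot q\eps^2|v_i|$ more in the numerator; the denominator only shrinks. Bookkeeping these terms (and being slightly less wasteful than I have been) yields a transformed triple $\bar v_i=(\bar v_i^1,\dots,\bar v_i^q)$, $\bar p=1^t\cdots q^t$, $\bar M_i$ with $\adv^{q, z}_{\bar M_i}\ge \adv^{q, z}(v_i,A^q_{r_{i+1}})-3q\eps^2=:\alpha'$.

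The key inequality is then immediate: the largest matching between $\bar v_i$ and $\bar p=1^t\cdots q^t$ pairs at most $\min\{t,\Count_j(\bar v_i^j)\}$ copies of $j$ inside the $j$-th stretch, so
\begin{equation*}
\adv^{q, z}_{\bar M_i} \;\le\; \frac{(2z+1)\sum_{j=1}^q \min\{t,\Count_j(\bar v_i^j)\} - |\bar v_i| - (z+z^2)\,t}{|\bar v_i|}.
\end{equation*}
Set $w_j=|\bar v_i^j|/|\bar v_i|$ (so $\sum_j w_j=1$) and $\phi_j^p=\Count_p(\bar v_i^j)/|\bar v_i^j|$, and observe $\phi_j^p=\E[F^p_{i+1}\mid v_i,T=j]$ while $w_j$ matches $\Pr\{T=j\mid v_i\}$ up to the $U_e$-renormalization. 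Optimizing the right-hand side over the parameter $t$ as in the binary case — this is where more care than the binary argument is needed, since for $z\ge 2$ the relevant derivative is positive only when at least $\lceil z(z+1)/(2z+1)\rceil$ of the $\min$-terms are unsaturated, so one must either verify this or split into cases on $z$ — rearranges into a lower bound on the ``diagonal mass''
\begin{equation*}
S := \sum_{j=1}^q w_j\,\phi_j^j \;\ge\; \frac{1+\alpha'+(z+z^2)\,\tau}{2z+1}, \qquad \text{where } \tau:=\max_j w_j\phi_j^j .
\end{equation*}

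The third and hardest step converts this into the claimed variance bound. Using the identity for the variance of a finitely-supported variable, together with Cauchy--Schwarz (valid since $\sum_j w_j=1$), one gets
\begin{align*}
\sum_{p=1}^q \Var_T\big(\E[F^p_{i+1}\mid v_i,T]\big) &= \sum_j w_j\|\phi_j-\bar\phi\|_2^2 \;\ge\; \sum_j w_j(\phi_j^j-\bar\phi^j)^2 \\
&\ge \Big(\sum_j w_j(\phi_j^j-\bar\phi^j)\Big)^2 = (S-\langle w,\bar\phi\rangle)^2 ,
\end{align*}
where $\bar\phi=\sum_j w_j\phi_j$ is the overall frequency vector. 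Hence it suffices to prove $\langle w,\bar\phi\rangle\le \tfrac{1+(z+z^2)\tau}{2z+1}$, which combined with the bound on $S$ gives exactly $(\alpha'/(2z+1))^2$. To prove this last estimate I would expand $\langle w,\bar\phi\rangle=\sum_k w_k\langle w,\phi_k\rangle$, split off the $j=k$ term in each $\langle w,\phi_k\rangle$, and use $w_k\phi_k^k\le\tau$ together with the simplex constraints $\sum_{j\ne k}\phi_k^j=1-\phi_k^k$ and $\sum_j w_j=1$; the resulting inequality reduces to an algebraic fact whose validity for all $q$ and all $1\le z\le q-1$ is governed by the sign of $z^2-(2q-1)z+q^2$ (its discriminant is $1-4q<0$), which is positive precisely because the vertices $\big(\tfrac{i(i-1)}{q},\tfrac{q-i}{q}\big)$ of $F_q$ lie strictly inside the ``naive'' region. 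I expect this algebraic reduction to be the main obstacle: a single block $j$ can carry almost all of the mass $w_j$, so the binary proof's device of extracting two conditional expectations with a large gap and paying a factor $\min_j\Pr\{T=j\}$ is too lossy here (it would give list size $\mathrm{poly}(q)/\eps^3$ instead of $O(q^5/\eps^2)$); the fix is to lower-bound the variance as a genuine second moment about the \emph{true} mean $\bar\phi$, using both simplex constraints in tandem, which is what the computation above does.

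Finally, since the transformation left each $\phi_j^p=\E[F^p_{i+1}\mid v_i,T=j]$ unchanged and altered the weights $\Pr\{T=j\mid v_i\}$ only through deletion of the at most $O(q\eps^2)$-fraction $U_e$ — an operation that can decrease $\sum_p\Var_T(\cdot)$ by at most a $1-O(q\eps^2)$ factor — the bound transfers back to the original quantities, with this last loss absorbed into the $3q\eps^2$ slack, completing the proof.
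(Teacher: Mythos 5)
Your Step 1 (the three-step transformation and the loss of at most $3q\eps^2$ in advantage) and your final variance-extraction idea do track the paper, but the middle of your argument has a genuine gap in two places. First, your inequality $S\ge\frac{1+\alpha'+(z+z^2)\tau}{2z+1}$ with $\tau=\max_j w_j\phi_j^j$ does not follow from the $t$-optimization you invoke: the right-hand side of the matching bound is increasing in $t$ only while at least $\lfloor z/2\rfloor+1$ of the $\min$-terms are unsaturated, so for $z\ge 2$ you cannot push $t$ up to $\max_j\Count_j(\bar v_i^j)$. The true maximizer is $t$ equal to the $w$-th largest count with $w=\lfloor z/2\rfloor+1$, which is exactly how the paper arrives at the bounds \eqref{eqn:EvenIBound}--\eqref{eqn:OddIBound}, with coefficient $(2z+1)w-(z+z^2)$ on the $w$-th largest product $f^*_wp_w$ rather than a term involving the maximum product. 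You flagged this issue but left it unresolved, and resolving it changes the inequality you feed into the next step.

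Second, and more seriously, the ``algebraic fact'' you reduce everything to, namely $\langle w,\bar\phi\rangle\le\frac{1+(z+z^2)\tau}{2z+1}$ from the simplex constraints alone, is false. Take $q=2$, $z=1$, $w=(\tfrac12,\tfrac12)$, $\phi_1=(0,1)$, $\phi_2=(1,0)$: then $\bar\phi=(\tfrac12,\tfrac12)$ and $\langle w,\bar\phi\rangle=\tfrac12$, while $\tau=0$ and the right-hand side is $\tfrac13$; note also $z^2-(2q-1)z+q^2=2>0$ here, so the sign criterion you propose cannot govern the inequality. The point is that off-diagonal mass can make $\langle w,\bar\phi\rangle$ large while all diagonal products $w_j\phi_j^j$ (hence $\tau$) vanish, so no purely simplex-based argument can work; any correct version must use the diagonal-heaviness forced by a large advantage. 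That is precisely the content of the paper's \cref{thm:non-positivity}: the Step-2 advantage upper bound becomes non-positive when the local frequencies $f^*_j$ are replaced by the global frequencies, and proving this is the technical core of the lemma (a constrained optimization over the simplex via \cref{lem:alg1,lem:alg2} and a piecewise analysis, exploiting the sorted products and the exact coefficient $(2z+1)w-(z+z^2)$). Your proposal in effect assumes this core step away; once it is available, subtracting the two bounds and extracting the deviation (either the paper's single-coordinate bound or your Jensen/Cauchy--Schwarz step, which is fine, as is the minor renormalization between $w_j$ and $\Pr\{T=j\mid v_i\}$) finishes the proof as in the paper's Step 4.
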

We defer the proof of \cref{lem:qAry-polarization} to \cref{sec:qAry-polarization}. Using Jensen inequality, the fact that $z\leq q$, and \cref{lem:qAry-polarization} along with \eqref{eqn:qAry-incremental-increase} give that
\noSTOC{$$f_{i+1} \geq f_i + \E_{v_i}\left[
\left(\frac{\adv^{q, z}(v_i, A_{r_{i+1}}) - 3q\eps^2}{2z+1}\right)^2
\right] \noSTOC{\geq f_i + 
\left(\frac{\eps/q - 3q\eps^2}{2q+1}\right)^2} = f_i + \Omega\left(\frac{\eps^2}{q^4}\right)
$$}
\STOConly{$f_{i+1} \geq f_i + \E_{v_i}\left[
\left(\frac{\adv^{q, z}(v_i, A_{r_{i+1}}) - 3q\eps^2}{2z+1}\right)^2
\right] \noSTOC{\geq f_i + 
\left(\frac{\eps/q - 3q\eps^2}{2q+1}\right)^2} = f_i + \Omega\left(\frac{\eps^2}{q^4}\right)
$}
for sufficiently small $\eps > 0$.
\end{proof}

\subsection{Proof of \cref{thm:qaryMain}}
To establish \cref{thm:qaryMain}, we closely follow the concatenation scheme presented in \cref{sec:concatenation}. 
In the following, we provide a high-level description of the proof skipping the details mentioned in \cref{sec:concatenation} and highlighting the necessary extra steps.

The construction of the concatenated code is exactly as in \cref{sec:concatenation} with the exception that the inner code is defined over an alphabet of size $q$. 
Note that if $(\gamma, \delta)\in (1-\eps) F_q$, then $(\gamma, \delta)$ lies underneath one of the lines in the set of lines represented by \eqref{eqn:line-segments}. In other words, there exists some $z\in\{1, 2, \cdots, q-1\}$ for which
\noSTOC{$$\gamma + (2z)\delta \leq (1-\eps)\left(\frac{(2q-1)z-z^2}{q}\right).$$}
\STOConly{$\gamma + (2z)\delta \leq (1-\eps)\left(\frac{(2q-1)z-z^2}{q}\right).$}
Similar to \cref{sec:concatenation}, we define the notion of {\em error count} for each block in the codewords of the concatenated code as 
\noSTOC{$$ (I+2z\cdot D) \cdot \frac{q}{(2q-1)z-z^2} $$}
\STOConly{$ (I+2z\cdot D) \cdot \frac{q}{(2q-1)z-z^2} $}
where $D$ and $I$ denote the number of deletions and insertions occurred in the block respectively. As in \cref{sec:concatenation} one can show that at least $\frac{3\eps}{4}\cdot n_\textout$ of the blocks contain no more than $\left(1-\frac{\eps}{4}\right) n_\textin$ error count. We denote the set of all such blocks by $S$. Once again, we partition $S$ into subsets $S_1, S_2, \cdots$ depending on the number of deletions occurred in the set. More precisely, we define $S_i \subseteq S$ as the set of blocks in $S$ that contain a number of deletions that is in the range $\left[n_{\text{in}}\cdot\frac{\eps}{16q}\cdot (i-1), n_{\text{in}}\cdot \frac{\eps}{16q}\cdot i\right)$ for $i=1, 2, \cdots, 16q/\eps$. 
Once again, the following hold true:
\noSTOC{\begin{enumerate}}
\STOConly{\begin{enumerate}[leftmargin=5mm]}
    \item We have that $S=\dot\bigcup_{i=1}^{16q/\eps} S_i$. By the Pigeonhole principle, for some $i^*\in\left[1, 16q/\eps\right]$, $|S_{i^*}|\ge \frac{3\eps^2}{64q}n_{\text{out}}$.
    \item Take some $i\in\{1, 2, \cdots, 16q/\eps\}$ and some block in $S_{i}$. Say $D$ deletions have occurred in that block. Then, the total number of insertions is at most 
    $(1-\eps/4)\frac{(2q-1)z-z^2}{q} n_\textin -2z D$.
    Therefore, the total length of the block is  
\STOConly{    
	$n_\textin - D (1-\eps/4)\frac{(2q-1)z-z^2}{q} n_\textin -2z D$
	\begin{eqnarray}
    &=& n_\textin\cdot\left[
    1+\left(1-\frac{\eps}{4}\right)\frac{(2q-1)z-z^2}{q}
    \right] - (2z+1)D\label{eq:block-length}
    \end{eqnarray}}
\noSTOC{    \begin{eqnarray}
    && n_\textin - D (1-\eps/4)\frac{(2q-1)z-z^2}{q} n_\textin -2z D\nonumber\\
    &=& n_\textin\cdot\left[
    1+\left(1-\frac{\eps}{4}\right)\frac{(2q-1)z-z^2}{q}
    \right] - (2z+1)D\label{eq:block-length}
    \end{eqnarray}}
    which is no more than 
    \begin{eqnarray}
    n_\textin\cdot\left[
    1+\left(1-\frac{\eps}{4}\right)\frac{(2q-1)z-z^2}{q}
     - \frac{\eps}{16q}(i - 1)(2z+1)
     \right]\label{eqn:block-length-upper-bound}
    \end{eqnarray}

Based on these observations, it is easy to verify that the decoding algorithm and analysis as presented in \cref{sec:concatenation} and \cref{alg:conctenatedDecoder} work for the $q$-ary case with the following minor modifications:
\noSTOC{\begin{enumerate}}
\STOConly{\begin{enumerate}[leftmargin=3mm]}
    \item Based on \eqref{eqn:block-length-upper-bound}, the parameter $w$ determining the length of the window should be 
    \begin{equation}
    w = \left\lfloor\frac{n_\textin\cdot\left[
    1+\left(1-\frac{\eps}{4}\right)\frac{(2q-1)z-z^2}{q}
     - \frac{\eps}{16q}(i - 1)(2z+1)
     \right]}{n_\textin\eps/16} \right\rfloor+1.\label{eqn:window-size}
    \end{equation}
    \item As in \eqref{eqn:eps-in-choice}, parameter $\eps_\textin$ has to be chosen such that the error count in decoding windows does not exceed $n_\textin(1-\eps_\textin)$. Note that the choice of shifting steps for the decoding window from \eqref{eqn:window-size} may add up to $\frac{n_\textin \eps}{16}$ additional insertions to the decoding window. Further, there is up to $n_\textin \frac{\eps}{16q}$ uncertainty in the total length of the block from \eqref{eq:block-length} since $D\in \left[n_{\text{in}}\cdot\frac{\eps}{16q}\cdot (i-1), n_{\text{in}}\cdot \frac{\eps}{16q}\cdot i\right)$. This can also add up to $n_\textin \frac{\eps}{16q}(2z+1) \leq \frac{\eps}{8}$ insertions. Therefore, we need
\noSTOC{\begin{eqnarray*}
    n_\textin(1-\eps/4)+n_\textin\left(\frac{\eps}{16} + \frac{\eps}{8}\right)\cdot \frac{q}{(2q-1)z-z^2} \leq n_\textin(1-\eps_\textin).
    \end{eqnarray*}}
\STOConly{$n_\textin(1-\eps/4)+n_\textin\left(\frac{\eps}{16} + \frac{\eps}{8}\right)\cdot \frac{q}{(2q-1)z-z^2} \leq n_\textin(1-\eps_\textin).$}
    Note that $\frac{q}{(2q-1)z-z^2} \leq \frac{q}{2q-2}\leq 1$. Hence, it suffuces that
    $1-\frac{\eps}{4}+\frac{\eps}{8}+\frac{\eps}{16} \leq 1-\eps_\textin$ or equivalently, $\eps_\textin\leq \frac{\eps}{16}$.

    \item Some modifications are necessary to the parameters of the outer code. Notably, for alphabet size $q$, $|S_{i^*}|\ge \frac{3\eps^2}{64q}n_{\text{out}}$ and the fraction of deletions can be as high as $1-\frac{1}{q}$. This requires $\delta_{\text{out}}=1-\frac{3\eps^2}{128q^2}$. 
    \item Finally, note the the value of $z$ is not know to the decoder. So the decoder has to run the algorithm with modifications mentioned above for all possible values of $z=1, 2, \cdots, q-1$ and the output the union of all lists produced.
\end{enumerate}
\end{enumerate}

\subsection{Proof of \cref{lem:qAry-polarization}}\label{sec:qAry-polarization}
\pushQED{\qed}
We break down this proof into four steps. In the first step, similar to \cref{lem:polarization}, we modify $v_i$ and $A_{r_{i+1}, n}$ into a simpler structure without significantly changing the advantage. In the second step, we provide an upper bound for the advantage in this modified version that depends on the local frequencies of symbols, more specifically, on what we refer to as $\E\left[F^j_{i+1}|v_i, T=j\right]$. In Step 3, we show that these upper-bounds would yield a non-positive value on the advantage if one replaces the local frequencies with the overall frequency of symbols in $v_i$, i.e., $F^j_i$. In the fourth and last step, we show that this means that the local frequencies have to significantly deviate from global ones to attain the advantage achieved by $\bar{M}_i$ (i.e., $\adv^{q, z}_{\bar{M}_i}$), so much that the lower-bound promised in the lemma's statement is achieved.

\paragraph{Step 1. Modifying $v_i$ and $A_{r_{i+1}, n}$ for the sake of simplicity:}
The proof starts with modifying $v_i$, $A_{r_{i+1}, n}$, and the advantage-yielding matching $M_i$ between them in a way that only slightly changes the value of advantage taking steps identical to the one in \cref{lem:polarization}.
Similar to \cref{lem:polarization}, we denote the projection of $v_i$ under $M_i$ by $g = v_i\rightarrow M_i$. (See \cref{fig:transformtion} for a depiction of the steps in binary case.)

\begin{enumerate}
\item \label{step:qAryModificationStepOne}First, we delete all substrings of $U_e$--i.e., substrings of length $l_{i+1}$ in $v_i$ whose projection does not entirely fall into some stretch of $j^{r_{i+1}}$--from $v_i$.
\item We reorder the substrings of length $l_{i+1}$ in $v_i$ by shifting all $U_j$ substrings together and the projections in $g$ to preserve the remainder of $M_i$ from step \ref{step:qAryModificationStepOne}.
\item At this point, string $g$ consists of a stretch of symbol $1$ followed by a stretch of symbol $2$, etc. If the length of all stretches are not equal, we add adequate symbols to each stretch to make $g$ have the form of $1^t 2^t \cdots q^t$.
\end{enumerate}

To track the changes in $\adv^{q, z}_{M_i}$ during this transformation, we track how $|M_i|$, $|v_i|$ and $|g|$ change throughout the three steps mentioned above.

In the first step, a total of up to 
$|U_e| l_{i+1}$
elements are removed from $v_i$ and $M_i$. Note that since the run length of $A_{r_{i+1}}$ is $r_{i+1}$, there can only be $\frac{|g|}{r_{i+1}}$ substrings in $U_e$. Therefore,
\noSTOC{$$|U_e| l_{i+1} \leq \frac{|g|l_{i+1}}{r_{i+1}} = |g|\eps^2 \leq 2\eps^2|v_i|.$$}\STOConly{$|U_e| l_{i+1} \leq \frac{|g|l_{i+1}}{r_{i+1}} = |g|\eps^2 \leq 2\eps^2|v_i|.$}

The second step preserves $|M_i|$, $|v_i|$ and $|g|$.

Finally, since $g$ is a substring of $A_{r_{i+1}}$, the third step increases $|g|$ only by up to $q r_{i+1}$. Note the run length of the $A_{r_{i+1}}$s and consequently $l_{i+1}$s are different by a multiplicative factor of at least $\frac{1}{\eps^4}$ by the definition of the code $\mathcal{C}$. Therefore,
$qr_{i+1} = \frac{ql_{i+1}}{\eps^2} = \frac{q l_{i+1}|v_i|}{\eps^2|v_i|} = \frac{q l_{i+1}|v_i|}{\eps^2 l_{i}}
\leq \eps^2q|v_i|$. 

Overall, the value of the 
$\adv^{q, z}_{M_i} = \frac{(2z+1)|M|-|v_i|-\frac{z+z^2}{q} \cdot |g|}{|v_i|}$
can be affected by a maximum of $2z\times2\eps^2|v_i| + q\eps^2|v_i| = (2z+q)\eps^2|v_i| \leq 3q \eps^2|v_i|$ decrease in the numerator and $\eps^2|v_i|$ decrease in the denominator. Therefore, the eventual advantage does not drop below $\adv^{q, z}_{M_i} - 3q\eps^2$.
Let us denote the transformed versions of $v_i$, $g$, and $M_i$ by $\bar{v}_i$, $\bar{g}$, and $\bar{M}_i$ respectively. We have shown that
\begin{equation}
\adv^{q, z}_{\bar{M}_i} \geq \adv^{q, z}_{M_i} - 3q\eps^2.\label{eqn:qAryAdvLowerBoundAfterTransformation}
\end{equation}

\paragraph{Step 2. Bounding Above $\adv^{q, z}_{\bar{M}_i}$ with $f^*$:}
Let $\bar{v}_i = (\bar{v}_i^1, \bar{v}_i^2, \cdots, \bar{v}_i^q)$ so that $\bar{v}_i^j$ corresponds to the part of $\bar{v}_i$ that is mapped to $j^t$ under $\bar{M}_i$. Further, let $f^*_j = \E\left[F^j_{i+1}|v_i, T=j\right]$ represent the frequency of the occurrence of symbol $j$ in $\bar{v}_i^j$ as a shorthand, i.e.,
\noSTOC{$$f^*_j = \frac{\Count_j(\bar{v}_i^j)}{|\bar{v}_i^j|}$$}
\STOConly{$f^*_j = \frac{\Count_j(\bar{v}_i^j)}{|\bar{v}_i^j|}$}
and $p_j$ be the relative length of $\bar{v}_i^j$, i.e., \noSTOC{$$p_j=\frac{|\bar{v}_i^j|}{|\bar{v}_i|}.$$}\STOConly{$p_j=\frac{|\bar{v}_i^j|}{|\bar{v}_i|}.$}
In this section, we compute an upperbound for $\adv^{q, z}_{\bar{M}_i}$ that depends on $f^*_j$s. For the sake of simplicity, from now on we assume, without loss of generality, that 
\noSTOC{$$\Count_1(\bar{v}_i^1) \geq  \Count_2(\bar{v}_i^2) \geq \cdots \geq \Count_q(\bar{v}_i^q)$$}
\STOConly{$\Count_1(\bar{v}_i^1) \geq  \Count_2(\bar{v}_i^2) \geq \cdots \geq \Count_q(\bar{v}_i^q)$}
or equivalently,
\noSTOC{$$f^*_1 p_1 \geq f^*_2 p_2 \geq \cdots \geq f^*_q p_q.$$}
\STOConly{$f^*_1 p_1 \geq f^*_2 p_2 \geq \cdots \geq f^*_q p_q.$}

Consider the matching between $\bar{v}_i$ and $\bar{p}$ that, for any $j\in\{1, 2, \cdots, q\}$ matches as many $j$s as possible from $j^t$ to $\bar{v}^j_i$. This matching clearly yields the largest possible advantage between the two that is an upperbound for the $\adv^{q, z}_{\bar{M_i}}$. Similar to the binary case, we find a $t$ that maximizes this advantage and use its advantage as an upper-bound for $\adv^{q, z}_{\bar{M_i}}$.

Let $c$ be so that 
$f^*_c |\bar{v}_i^c| > t \geq f^*_{c+1} |\bar{v}_i^{c+1}|$
. Then, increasing $t$ by one would increase the length of $\bar{p}$ by $q$ and increases the size of the matching by $c$. To see the effect of this increment on the advantage, note that the denominator does not change and the numerator changes by $c(2z+1) - \frac{z+z^2}{q}\cdot q$. This change in advantage is positive as long as
\noSTOC{\begin{eqnarray*}
&&c(2z+1) - (z+z^2) \geq 0\\
&\Leftrightarrow& c \geq \frac{z+z^2}{2z+1} = \frac{z}{2} + \left(\frac{1}{4} - \frac{1}{4(2z+1)}\right).
\end{eqnarray*}}
\STOConly{$c(2z+1) - (z+z^2) \geq 0
\Leftrightarrow c \geq \frac{z+z^2}{2z+1} = \frac{z}{2} + \left(\frac{1}{4} - \frac{1}{4(2z+1)}\right).$}
Note that the term $\frac{1}{4} - \frac{1}{4(2z+1)}$ is always between $\left[0, \frac{1}{4}\right]$. Hence, incrementing $t$ increases the advantage as long as $c \geq \lfloor\frac{z}{2}\rfloor + 1$. This means that the highest possible advantage is derived when $t = f^*_w |\bar{v}_i^w|$ for $w = \lfloor\frac{z}{2}\rfloor + 1$. With this value for $t$, the matching contains $f^*_j |\bar{v}_i^j|$ edges between $j^t$ and $|\bar{v}_i^j|$ for all $j> w$ and $t$ edges between $j^t$ and $|\bar{v}_i^j|$ for $j\leq w$. Therefore, the size of this matching is 
\noSTOC{$$t w + \sum_{j=w+1}^q f^*_j |\bar{v}_i^j|.$$}\STOConly{$t w + \sum_{j=w+1}^q f^*_j |\bar{v}_i^j|.$}
This yields the following advantage
\begin{eqnarray*}
&&\frac{(2z+1)
\left[t w + \sum_{j=w+1}^q f^*_j |\bar{v}_i^j|\right]
-|\bar{v}_i|-\frac{z+z^2}{q} \cdot qt}{|\bar{v}_i|}\\
\noSTOC{&=&\frac{(2z+1)
\left[f^*_w |\bar{v}_i^w|w + \sum_{j=w+1}^q f^*_j |\bar{v}_i^j|\right]
-|\bar{v}_i|-\frac{z+z^2}{q} \cdot qf^*_w |\bar{v}_i^w|}{|\bar{v}_i|}\\}
&=&(2z+1)
\left[f^*_w p_ww + \sum_{j=w+1}^q f^*_j p_j\right]
- 1 -(z+z^2) \cdot f^*_w p_w\allowdisplaybreaks\\
&=&
\left[(2z+1)w -(z+z^2)\right] \cdot f^*_w p_w + (2z+1)\sum_{j=w+1}^q f^*_j p_j
- 1 
\end{eqnarray*}
We remind that this is an upper-bound on the $\adv^{q, z}_{\bar{M}_i}$. Next, we plug in $w = \lfloor\frac{z}{2}\rfloor + 1$ into this bound. Note that
\noSTOC{$$(2z+1)w -(z+z^2) = z(2w - z) + w - z = 
\left\{\begin{matrix}
    \frac{3z+2}{2} & \textnormal{If $z$ is even} \\
    \frac{z+1}{2} & \textnormal{If $z$ is odd}
  \end{matrix}\right.$$}
\STOConly{$(2z+1)w -(z+z^2) = z(2w - z) + w - z$ which is equal to $\frac{3z+2}{2}$ if $z$ is even and $\frac{z+1}{2}$ if $z$ is odd.}

Therefore, we have the following set of upper-bounds on the advantage 
\noSTOC{\begin{eqnarray}
&\adv^{q, z}_{\bar{M}_i} \leq \frac{3z+2}{2} \cdot f^*_w p_w + (2z+1)\sum_{j=w+1}^q f^*_j p_j - 1 & \textnormal{If $z$ is even} \label{eqn:EvenIBound}\\
&\adv^{q, z}_{\bar{M}_i} \leq \frac{z+1}{2} \cdot f^*_w p_w + (2z+1)\sum_{j=w+1}^q f^*_j p_j - 1 & \textnormal{If $z$ is odd} \label{eqn:OddIBound}
\end{eqnarray}}
\STOConly{\begin{eqnarray}
\adv^{q, z}_{\bar{M}_i} \leq \frac{3z+2}{2} \cdot f^*_w p_w + (2z+1)\sum_{j=w+1}^q f^*_j p_j - 1, \textnormal{If $z$ is even} \label{eqn:EvenIBound}\allowdisplaybreaks\\
\adv^{q, z}_{\bar{M}_i} \leq \frac{z+1}{2} \cdot f^*_w p_w + (2z+1)\sum_{j=w+1}^q f^*_j p_j - 1, \textnormal{If $z$ is odd} \label{eqn:OddIBound}
\end{eqnarray}}

\paragraph{Step 3. Proving Non-positivity of the Bound from Step 3 for Unit Sum Vectors:}
In this step, we show that the bounds \eqref{eqn:EvenIBound} and \eqref{eqn:OddIBound} on advantage that were presented in Step 2 are necessarily non-positive for any vector $(f^*_1, \cdots, f^*_q)$ with unit sum including the vector of overall frequencies $\bar{f} = (\bar{f}_1, \cdots, \bar{f}_q)$ where $\bar{f}_j = \frac{\Count_j(\bar{v}_i)}{|\bar{v}_i|}=F^j_i$. In Step 4, we use this fact to show that $f^*$ needs to deviate noticeably from $\bar{f}$ which gives that the variance of frequencies with respect to $T$ is large enough, thus finishing the proof.

\begin{proposition}\label{thm:non-positivity}
Let $(p_1, \cdots, p_q)$ and $(f^*_1, \cdots, f^*_q)$ be two positive real vectors with unit sum that satisfy 
\noSTOC{$$f^*_1 p_1 \geq f^*_2 p_2 \geq \cdots \geq f^*_q p_q.$$}
\STOConly{$f^*_1 p_1 \geq f^*_2 p_2 \geq \cdots \geq f^*_q p_q.$}
Then, for all integers $1\leq z<q$, the following hold for $w = \lfloor\frac{z}{2}\rfloor + 1$:
\begin{enumerate}
    \item If $z$ is even, 
    \noSTOC{$$\frac{3z+2}{2} \cdot f^*_w p_w + (2z+1)\sum_{j=w+1}^q f^*_j p_j \leq 1.$$}\STOConly{$\frac{3z+2}{2} \cdot f^*_w p_w + (2z+1)\sum_{j=w+1}^q f^*_j p_j \leq 1.$}
    \item If $z$ is odd, 
    \noSTOC{$$\frac{z+1}{2} \cdot f^*_w p_w + (2z+1)\sum_{j=w+1}^q f^*_j p_j \leq 1.$$}\STOConly{$\frac{z+1}{2} \cdot f^*_w p_w + (2z+1)\sum_{j=w+1}^q f^*_j p_j \leq 1.$}
\end{enumerate}
\end{proposition}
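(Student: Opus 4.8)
The plan is to pass to a relaxed, ``decoupled'' optimization problem. Write $a_j := f^*_j p_j$; by AM--GM, $\sqrt{a_j}\le \tfrac12(f^*_j + p_j)$, so $\sum_{j=1}^q \sqrt{a_j}\le \tfrac12\bigl(\sum_j f^*_j + \sum_j p_j\bigr) = 1$. Setting $x_j := \sqrt{a_j}$, the hypotheses give $x_1\ge x_2\ge\cdots\ge x_q\ge 0$ and $\sum_j x_j\le 1$, i.e.\ $x$ lies in the polytope $P := \{x\in\R^q : x_1\ge\cdots\ge x_q\ge 0,\ \sum_j x_j\le 1\}$, and the left-hand side of either part of the proposition equals $\Phi(x) := C\,x_w^2 + (2z+1)\sum_{j=w+1}^q x_j^2$, where $C = \tfrac{3z+2}{2}$ if $z$ is even and $C = \tfrac{z+1}{2}$ if $z$ is odd, and $w = \lfloor z/2\rfloor+1$. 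So it suffices to prove $\Phi(x)\le 1$ for every $x\in P$.

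Next I would exploit convexity. The function $\Phi$ is a nonnegative combination of the convex maps $x\mapsto x_j^2$, hence convex on $\R^q$; since $P$ is compact and convex, $\Phi$ attains its maximum over $P$ at a vertex of $P$. A short computation identifies the vertices: the origin, and for each $\ell\in\{1,\dots,q\}$ the point $v^{(\ell)} := (\underbrace{1/\ell,\dots,1/\ell}_{\ell},0,\dots,0)$. (Any vertex activates $q$ of the $q+1$ defining inequalities; if the budget constraint $\sum_j x_j\le1$ is slack one is forced to the origin, and otherwise exactly one of the ``gap'' constraints $x_i\ge x_{i+1}$, $x_q\ge0$ is left inactive, which pins down some $v^{(\ell)}$.) Evaluating: $\Phi(0)=0$; $\Phi(v^{(\ell)})=0$ when $\ell<w$ (then $x_w=0$); and $\Phi(v^{(\ell)})=\dfrac{C + (2z+1)(\ell-w)}{\ell^2}$ when $\ell\ge w$. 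Hence it remains to show $C + (2z+1)(\ell-w)\le\ell^2$ for every integer $\ell$ with $w\le\ell\le q$.

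Finally, the remaining inequality reduces to an algebraic identity plus an integrality observation. Write $\psi(\ell):=\ell^2 - (2z+1)\ell + \bigl[(2z+1)w - C\bigr]$, so I want $\psi(\ell)\ge0$. The one computation needed is the identity $(2z+1)w - C = z^2+z$, which I would verify separately in the two parity cases using $w=\lfloor z/2\rfloor+1$ (the floor is exactly what makes the even case $C=\tfrac{3z+2}{2}$, $w=\tfrac z2+1$ and the odd case $C=\tfrac{z+1}{2}$, $w=\tfrac{z+1}{2}$ collapse to the same value $z^2+z$). Then $\psi(\ell) = \ell^2 - (2z+1)\ell + z^2+z = (\ell - z)(\ell - z - 1)$, which for integer $\ell$ is a product of two consecutive integers, hence $\ge 0$. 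This gives $\Phi(v^{(\ell)})\le1$ for all relevant $\ell$, so $\Phi\le1$ on $P$, which is exactly the claim; and equality is forced only at the ``uniform'' vertices, e.g.\ $v^{(q)}$ (corresponding to $p_j=f^*_j=1/q$).

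The main obstacle is conceptual rather than computational: recognizing that the right move is to relax via AM--GM to the single constraint $\sum_j\sqrt{a_j}\le1$, which decouples the problem from the joint structure of $p$ and $f^*$, and then seeing that the extremal configurations are precisely the polytope vertices, whose ``support size'' $\ell$ is an integer — it is this integrality that makes $(\ell-z)(\ell-z-1)\ge0$ automatic, mirroring how $(q-z)(q-z-1)\ge0$ (since $z<q$) underlies the feasibility region itself. I would also flag that the sorting hypothesis $a_1\ge\cdots\ge a_q$ is essential: without it $\Phi$ can exceed $1$, since the ordering is what forces the ``wasted'' mass onto the excluded top coordinates $1,\dots,w-1$.
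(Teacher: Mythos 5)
Your proof is correct, and it takes a genuinely different and substantially shorter route than the paper's. The paper proves \cref{thm:non-positivity} by first arguing that any maximizer equalizes $f^*_1p_1=\cdots=f^*_wp_w$, then invoking two auxiliary optimization lemmas (\cref{lem:alg1,lem:alg2}) to reduce to a two-parameter problem in $\alpha=f^*_wp_w$ and $\beta=\sum_{j\le w}p_j$, and finally analyzing a piecewise function of $\alpha$ whose convexity on each piece forces the maximum onto the breakpoints $\alpha=1/(w+u+1)^2$, where an explicit computation and an optimization over $u$ yield the value $1$. Your AM--GM relaxation $\sum_j\sqrt{f^*_jp_j}\le\tfrac12\bigl(\sum_jf^*_j+\sum_jp_j\bigr)=1$ replaces all of that machinery with a single convexity-plus-vertex-enumeration argument on the polytope $\{x_1\ge\cdots\ge x_q\ge0,\ \sum_jx_j\le1\}$; the identity $(2z+1)w-C=z^2+z$ (which checks out in both parity cases) and the factorization $\ell^2-(2z+1)\ell+z^2+z=(\ell-z)(\ell-z-1)$ then close the argument via integrality of the vertex support size $\ell$. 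I verified the vertex enumeration, the evaluation $\Phi(v^{(\ell)})=\bigl(C+(2z+1)(\ell-w)\bigr)/\ell^2$, and the algebra; all are sound, and the relaxation is one-directional in exactly the way the bound requires. Notably, your vertices $v^{(\ell)}$ are precisely the paper's breakpoint configurations (uniform $f^*=p$ supported on $\ell$ symbols), so both arguments locate the same extremal points --- equality at $\ell\in\{z,z+1\}$ matching the paper's closing remark --- but yours explains \emph{why} they are extremal (convexity pushes to vertices; integrality of $\ell$ makes the quadratic a product of consecutive integers) rather than discovering them through calculus. The one thing the paper's longer route buys is the reusable general-purpose \cref{lem:alg2}; your route buys brevity and a cleaner conceptual picture, and could reasonably replace the appendix proof.
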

\noSTOC{We defer the proof of \cref{thm:non-positivity} to \cref{sec:proof-of-non-positivity}.}\STOConly{The proof of \cref{thm:non-positivity} can be found in the extended version of this article.}

\paragraph{Step 4. Large Deviation of $f^*$s from $\bar{f}$s and Large Variance:}
Here we finish the proof assuming $z$ is odd. The even case can be proved in the same way. Note that \cref{thm:non-positivity} gives that for the overall frequency vector $\bar{f}$ which has a unit sum,
\begin{equation}
\frac{z+1}{2} \cdot \bar{f}_w p_w + (2z+1)\sum_{j=w+1}^q \bar{f}_j p_j - 1 \leq 0.\label{eqn:uniformAdvantage}
\end{equation}

However, \eqref{eqn:qAryAdvLowerBoundAfterTransformation} and \eqref{eqn:OddIBound} imply that for local frequency vector $f^*$
\begin{equation}
\frac{z+1}{2} \cdot f^*_w p_w + (2z+1)\sum_{j=w+1}^q f^*_j p_j - 1 \geq \adv^{q, z}_{M_i} - 3q\eps^2.\label{eqn:nonUniformAdvantage}
\end{equation}

Subtracting \eqref{eqn:uniformAdvantage} from \eqref{eqn:nonUniformAdvantage} gives that
\noSTOC{\begin{eqnarray*}
&&\frac{z+1}{2} \cdot p_w (f^*_w - \bar{f}_w) + (2z+1)\sum_{j=w+1}^q (f^*_j - \bar{f}_j) p_j \geq \adv^{q, z}_{M_i} - 3q\eps^2.\\
&\Rightarrow&\frac{z+1}{2} \cdot p_w |f^*_w - \bar{f}_w| + (2z+1)\sum_{j=w+1}^q |f^*_j - \bar{f}_j| p_j \geq \adv^{q, z}_{M_i} - 3q\eps^2.\\
&\Rightarrow&(2z+1)\sum_{j=w}^q |f^*_j - \bar{f}_j| p_j \geq \adv^{q, z}_{M_i} - 3q\eps^2.\allowdisplaybreaks\\
&\Rightarrow&\sum_{j=w}^q |f^*_j - \bar{f}_j| p_j \geq \frac{\adv^{q, z}_{M_i} - 3q\eps^2}{2z+1}.
\end{eqnarray*}}
\STOConly{\begin{align*}
&\frac{z+1}{2} \cdot p_w (f^*_w - \bar{f}_w) + (2z+1)\sum_{j=w+1}^q (f^*_j - \bar{f}_j) p_j \geq \adv^{q, z}_{M_i} - 3q\eps^2.\\
&\Rightarrow\frac{z+1}{2} \cdot p_w |f^*_w - \bar{f}_w| + (2z+1)\sum_{j=w+1}^q |f^*_j - \bar{f}_j| p_j \geq \adv^{q, z}_{M_i} - 3q\eps^2.\\
&\Rightarrow(2z+1)\sum_{j=w}^q |f^*_j - \bar{f}_j| p_j \geq \adv^{q, z}_{M_i} - 3q\eps^2.\allowdisplaybreaks\\
&\Rightarrow\sum_{j=w}^q |f^*_j - \bar{f}_j| p_j \geq \frac{\adv^{q, z}_{M_i} - 3q\eps^2}{2z+1}.
\end{align*}}
This means that there exists some $j_0$ for which 
\noSTOC{$$|f^*_{j_0} - \bar{f}_{j_0}| p_{j_0} \geq \frac{\adv^{q, z}_{M_i} - 3q\eps^2}{2z+1}
\Rightarrow
(f^*_{j_0} - \bar{f}_{j_0})^2 p_{j_0} \geq 
(f^*_{j_0} - \bar{f}_{j_0})^2 p_{j_0}^2 \geq 
\left(\frac{\adv^{q, z}_{M_i} - 3q\eps^2}{2z+1}\right)^2
.$$}
\STOConly{$|f^*_{j_0} - \bar{f}_{j_0}| p_{j_0} \geq \frac{\adv^{q, z}_{M_i} - 3q\eps^2}{2z+1}
\Rightarrow
(f^*_{j_0} - \bar{f}_{j_0})^2 p_{j_0} \geq 
(f^*_{j_0} - \bar{f}_{j_0})^2 p_{j_0}^2 \geq 
\left(\frac{\adv^{q, z}_{M_i} - 3q\eps^2}{2z+1}\right)^2
.$}

Note that 
\noSTOC{\begin{eqnarray*}
\sum_{p=1}^q \Var_{T}\left(\E\left[F^p_{i+1}|v_i, T\right]\right) &=& 
\sum_{p=1}^q \sum_{j=1}^q \left(\E\left[F^p_{i+1}|v_i, T=j\right] - F^p_i\right)^2\Pr\{T=j|v_i\}\\
&\geq&\left(\E\left[F^{j_0}_{i+1}|v_i, T=j_0\right] - F^{j_0}_i\right)^2\Pr\{T=j_0|v_i\}\\
&=&(f^*_{j_0} - \bar{f}_{j_0})^2 p_{j_0} \geq \left(\frac{\adv^{q, z}_{M_i} - 3q\eps^2}{2z+1}\right)^2.
\end{eqnarray*}}
\STOConly{\begin{align*}
&\sum_{p=1}^q \Var_{T}\left(\E\left[F^p_{i+1}|v_i, T\right]\right)\\ 
&\qquad= \sum_{p=1}^q \sum_{j=1}^q \left(\E\left[F^p_{i+1}|v_i, T=j\right] - F^p_i\right)^2\Pr\{T=j|v_i\}\allowdisplaybreaks\\
&\qquad\geq\left(\E\left[F^{j_0}_{i+1}|v_i, T=j_0\right] - F^{j_0}_i\right)^2\Pr\{T=j_0|v_i\}\allowdisplaybreaks\\
&\qquad=(f^*_{j_0} - \bar{f}_{j_0})^2 p_{j_0} \geq \left(\frac{\adv^{q, z}_{M_i} - 3q\eps^2}{2z+1}\right)^2.
\end{align*}}
\qedhere
\popQED


\noSTOC{
\shortOnly{
\newpage
\appendix
\section{Missing Proofs}
}
\subsection{Proof of \cref{thm:non-positivity}}\label{sec:proof-of-non-positivity}
\pushQED{\qed}
To prove \cref{thm:non-positivity} we provide several observations that simplify the form of the solution that yields the maximum value by reducing the number of important free variables.

\begin{observation}\label{obs:solution-format}
Any solution that maximizes the left-hand-side satisfies
$$f^*_1p_1 = f^*_2p_2=  \cdots = f^*_wp_w.$$
\end{observation}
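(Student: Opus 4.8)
The plan is to run an exchange/perturbation argument at a maximizer. First I would note that the quantity being maximized is a continuous function on the compact set
$$\Bigl\{(f^*,p):\ f^*_j,p_j\ge 0,\ \textstyle\sum_j f^*_j=\sum_j p_j=1,\ f^*_1p_1\ge\cdots\ge f^*_qp_q\Bigr\},$$
so a maximizer exists, and the case $w=1$ (i.e.\ $z=1$) is vacuous so I may assume $w\ge 2$. The key structural feature is that both objectives in \cref{thm:non-positivity} (the even-$z$ one, with coefficient $\tfrac{3z+2}{2}$ on the $w$-th term, and the odd-$z$ one, with coefficient $\tfrac{z+1}{2}$) depend only on $f^*_wp_w$, through a \emph{strictly positive} coefficient, and on $\sum_{j>w}f^*_jp_j$; in particular they do not involve coordinates $1,\dots,w-1$ at all. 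A short preliminary step records that at any maximizer $f^*_wp_w>0$: otherwise monotonicity forces every term of the objective to vanish, yet the uniform choice $f^*=p=(1/q,\dots,1/q)$ already gives a strictly positive value. Hence $f^*_jp_j\ge f^*_wp_w>0$, so $f^*_j,p_j>0$ for every $j\le w$, and every perturbation I use below (which only touches coordinates $\le w$) stays strictly feasible in the coordinates it disturbs.

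For the main argument, suppose a maximizer fails $f^*_1p_1=\cdots=f^*_wp_w$. Since $f^*_1p_1\ge\cdots\ge f^*_wp_w$ is nonincreasing and nonconstant, there is a largest index $a\in\{1,\dots,w-1\}$ with $f^*_ap_a>f^*_{a+1}p_{a+1}$, and then $f^*_{a+1}p_{a+1}=\cdots=f^*_wp_w$. I would apply the move $T_\Delta$ that replaces $(f^*_a,f^*_{a+1})$ by $(f^*_a-\Delta,\,f^*_{a+1}+\Delta)$ and leaves everything else fixed; this preserves both unit sums. For small $\Delta>0$ it preserves monotonicity: $f^*_ap_a$ only decreases (so it stays $\le f^*_{a-1}p_{a-1}$), $f^*_{a+1}p_{a+1}$ only increases (so it stays $\ge f^*_{a+2}p_{a+2}$), and the strict gap between them at $\Delta=0$ persists. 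If $a+1=w$, then $T_\Delta$ strictly increases $f^*_wp_w$ and hence the objective, contradicting maximality. If $a+1<w$, then $T_\Delta$ touches no coordinate appearing in the objective, so the objective is unchanged; choosing $\Delta$ so that the two products become equal yields a new maximizer whose largest strict-inequality index among the first $w$ products is $a+1$ instead of $a$ (the new common value lies strictly between the old $f^*_{a+1}p_{a+1}$ and $f^*_ap_a$, so it still dominates $f^*_{a+2}p_{a+2}$, and it is $>0$, keeping all entries positive). Iterating at most $w-1$ times pushes the strict-inequality index up to $w-1$, at which point the first case applies and produces the contradiction.

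I expect the only real work to be the bookkeeping around the monotonicity chain: verifying that $T_\Delta$, and in the iteration the specific equalizing choice of $\Delta$, never disturbs any inequality $f^*_jp_j\ge f^*_{j+1}p_{j+1}$ outside the two touched coordinates, and that the iteration strictly increases $a$ so it terminates. Everything else — existence of the maximizer, positivity of the first $w$ products, and positivity of the coefficient on the $w$-th term in both the even and odd objectives — is immediate, so there is no delicate computation to grind through.
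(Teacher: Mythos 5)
Your proof is correct and takes essentially the same route as the paper: both are exchange arguments at a maximizer built on the identical perturbation (shift a small amount of mass from $f^*_j$ to $f^*_{j+1}$, which either strictly increases the $f^*_wp_w$ term or leaves the objective unchanged while propagating the strict inequality toward index $w$), with the paper iterating bottom-up from $j=w-1$ and you iterating top-down from the last strict inequality. Your preliminary observations — existence of a maximizer and $f^*_wp_w>0$, hence $p_w>0$, which is needed for the perturbation at index $w-1$ to actually increase the objective — patch a degenerate case the paper's terser argument glosses over.
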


We start with $f^*_{w-1}p_{w-1} = f^*_wp_w$. Assume by contradiction that $f^*_{w-1}p_{w-1} > f^*_wp_w$. Then, there exists a small positive value $\epsilon$ for which decreasing $f^*_{w-1}$ by $\epsilon$ and increasing $f^*_w$ by $\epsilon$ would preserve $f^*_{w-1}p_{w-1} \geq f^*_wp_w$ but increase the overall value of the expression. This contradicts the fact that the solution maximizes the left-hand-side value. Similarly, if 
$f^*_{w-2}p_{w-2} >f^*_{w-1}p_{w-1} = f^*_wp_w$, same idea executed on $f^*_{w-2}p_{w-2}$ and $f^*_{w-1}p_{w-1}$ turns the solution into one for which 
$f^*_{w-2}p_{w-2} \geq f^*_{w-1}p_{w-1} > f^*_wp_w$ which is, again, contradictory to the fact that the solution maximizes the left-hand-size. Continuing this argument gives \cref{obs:solution-format}.

We next present the two following lemmas that we will prove later in \cref{sec:proof-of-aux-lemmas}.

\begin{lemma}\label{lem:alg1}
Let $f_1, \cdots, f_q$ and $p_1, \cdots, p_q$ be positive numbers for which 
$\sum_{i=1}^q f_i = F, \quad \sum_{i=1}^q p_i = P$
and 
$f_1 p_1 \geq f_2 p_2\geq \cdots \geq f_q p_q$. Then 
$$f_q p_q \leq \frac{FP}{q^2}$$
and equality is attained only at $f_i = \frac{F}{q}$ and $p_i = \frac{P}{q}$ for all $i\in\{1, 2, \cdots, q\}$. 
\end{lemma}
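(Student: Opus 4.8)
The plan is to sandwich the quantity $\sum_{i=1}^q \sqrt{f_i p_i}$ between two elementary bounds: a lower bound coming from the monotonicity hypothesis and an upper bound coming from Cauchy--Schwarz. Since $f_1 p_1 \ge \cdots \ge f_q p_q$, the product $f_q p_q$ is the minimum of the $f_i p_i$, so each term satisfies $\sqrt{f_i p_i} \ge \sqrt{f_q p_q}$, and summing over $i$ gives
$$\sum_{i=1}^q \sqrt{f_i p_i} \;\ge\; q\sqrt{f_q p_q}.$$

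Next I would apply the Cauchy--Schwarz inequality to the vectors $(\sqrt{f_1},\dots,\sqrt{f_q})$ and $(\sqrt{p_1},\dots,\sqrt{p_q})$, obtaining
$$\sum_{i=1}^q \sqrt{f_i p_i} \;\le\; \sqrt{\sum_{i=1}^q f_i}\cdot\sqrt{\sum_{i=1}^q p_i} \;=\; \sqrt{FP}.$$
Chaining the two displays yields $q\sqrt{f_q p_q} \le \sqrt{FP}$, and squaring gives the claimed bound $f_q p_q \le FP/q^2$.

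For the equality characterization I would argue that both inequalities above must be tight simultaneously. Tightness of the first forces all the products $f_i p_i$ to be equal; tightness of Cauchy--Schwarz forces $(\sqrt{f_i})_i$ and $(\sqrt{p_i})_i$ to be proportional, i.e. $f_i = c\,p_i$ for some constant $c > 0$ and all $i$. Substituting the latter into the former shows $c\,p_i^2$ is constant in $i$, hence all $p_i$ are equal, hence $p_i = P/q$ and therefore $f_i = F/q$ for every $i$. Conversely this configuration obviously attains equality.

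There is no serious obstacle here; the only point requiring a little care is the equality discussion — specifically, observing that the chained inequality is tight only if \emph{both} constituent inequalities are, and checking that equal products together with proportionality pin down the configuration uniquely (this step uses positivity of the $p_i$ and $f_i$ to rule out degenerate solutions).
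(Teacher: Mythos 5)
Your proof is correct, and it takes a genuinely different route from the paper's. The paper proves the lemma by induction on $q$: it uses the induction hypothesis to argue that feasibility of the remaining $q-1$ pairs forces $f_q p_q \leq \frac{(F-f_q)(P-p_q)}{(q-1)^2}$, and then maximizes $f_q p_q$ under this constraint by solving for the critical point of a rational function of $p_q$ (a calculus computation that also identifies the equality configuration $f_q = F/q$, $p_q = P/q$). Your argument instead sandwiches $\sum_i \sqrt{f_i p_i}$ between $q\sqrt{f_q p_q}$ (using that $f_q p_q$ is the minimum product) and $\sqrt{FP}$ (Cauchy--Schwarz applied to $(\sqrt{f_i})_i$ and $(\sqrt{p_i})_i$), which yields the bound in two lines with no induction and no differentiation; the equality analysis is also cleaner, since tightness of both links in the chain immediately forces all products equal and the two vectors proportional, and positivity then pins down the uniform configuration. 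The only thing the paper's inductive formulation buys is structural uniformity with the companion Lemma~\ref{lem:alg2}, which is proved by a similar induction; your proof is self-contained and arguably preferable as a standalone argument.
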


\begin{lemma}\label{lem:alg2}
Let $f_1, \cdots, f_q$ and $p_1, \cdots, p_q$ be positive variables with constraints 
$\sum_{i=1}^q f_i = F$,  $\sum_{i=1}^q p_i = P$, 
$f_1 p_1 \geq f_2 p_2\geq \cdots \geq f_q p_q$, and $f_1p_1 \leq m$ for some constant $m$. Then, the largest possible value for 
$\sum_{i=1}^q f_i p_i$ is:
$$f_{\max}(F, P, m)=\left\{
\begin{array}{l l}
FP & \textnormal{if $FP \leq m$} \\
um+(\sqrt{FP}-u\sqrt{m})^2
& \textnormal{if $\frac{FP}{(u+1)^2} \leq m < \frac{FP}{u^2}$ for $u=1, 2, \cdots, q-1$}\\
mq & \textnormal{if $m < \frac{FP}{q^2}$}
\end{array}
\right. $$
\end{lemma}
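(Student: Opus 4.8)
The plan is to trade the pair of variable vectors $(f_i)$ and $(p_i)$ for the single vector of products $x_i:=f_ip_i$. Sorting so that $x_1\ge x_2\ge\cdots\ge x_q$, the hypothesis $f_1p_1\le m$ becomes $x_i\le m$ for every $i$, and Cauchy--Schwarz gives
\begin{equation*}
\sum_{i=1}^q\sqrt{x_i}=\sum_{i=1}^q\sqrt{f_ip_i}\le\sqrt{\textstyle\sum_i f_i}\cdot\sqrt{\textstyle\sum_i p_i}=\sqrt{FP}.
\end{equation*}
Conversely, any nonnegative vector $x_1\ge\cdots\ge x_q$ with $x_1\le m$ and $\sum_i\sqrt{x_i}\le\sqrt{FP}$ comes, at least in the limit, from an admissible pair: when $\sum_i\sqrt{x_i}=\sqrt{FP}$ one takes $f_i=\sqrt{F/P}\,\sqrt{x_i}$ and $p_i=\sqrt{P/F}\,\sqrt{x_i}$; when the inequality is strict one fixes $\sum_i f_i=F$ and invokes the intermediate value theorem, using that $\sum_i x_i/f_i$ sweeps the interval $[(\sum_i\sqrt{x_i})^2/F,\infty)$ as $f$ ranges over the open simplex. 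The ordering constraint costs nothing, since we are free to index the chosen $x_i$ in decreasing order. Hence $f_{\max}(F,P,m)$ is exactly the optimum of the scalar problem ``maximize $\sum_i x_i$ subject to $0\le x_i\le m$ and $\sum_i\sqrt{x_i}\le\sqrt{FP}$''.

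I would then solve this scalar problem by a water-filling argument. Substituting $y_i=\sqrt{x_i}\in[0,\sqrt m]$ turns it into maximizing the convex function $\sum_i y_i^2$ over the polytope $\{0\le y_i\le\sqrt m,\ \sum_i y_i\le\sqrt{FP}\}$, and a convex function on a polytope is maximized at a vertex. An exchange argument identifies the relevant vertices: if two coordinates $y_i,y_j$ both lay strictly inside $(0,\sqrt m)$, moving mass from the smaller to the larger keeps $\sum_\ell y_\ell$ fixed while strictly increasing $y_i^2+y_j^2$, so at an optimum at most one coordinate is interior; and $\sum_i y_i=\sqrt{FP}$ unless already every $y_i=\sqrt m$. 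Therefore the optimizer places $u:=\min\{q,\lfloor\sqrt{FP}/\sqrt m\rfloor\}$ coordinates at $\sqrt m$, one at the leftover $\sqrt{FP}-u\sqrt m$ when $u<q$, and $0$ elsewhere, so $\sum_i x_i=um+(\sqrt{FP}-u\sqrt m)^2$ if $u<q$ and $qm$ if $u=q$. Unwinding $u=\lfloor\sqrt{FP/m}\rfloor$ produces exactly the ranges $\tfrac{FP}{(u+1)^2}\le m<\tfrac{FP}{u^2}$ for $u=1,\dots,q-1$ (with $u=0$ giving $FP\le m$ and value $FP$), and $u=q$ corresponds to $m<\tfrac{FP}{q^2}$, which reproduces the piecewise formula. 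The companion \cref{lem:alg1} slots in naturally here: the bound ``smallest product $\le FP/q^2$'' is essentially the statement that the budget constraint $\sum_i\sqrt{x_i}\le\sqrt{FP}$ becomes slack precisely when $m\le FP/q^2$, which is why the value flattens to $qm$ on that range.

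Finally I would record attainability: in the $qm$ regime all $x_i=m>0$ and the explicit construction yields strictly positive $(f_i,p_i)$, whereas in the other regimes the extremal $x$-vector has zero coordinates, so strictly positive pairs only approach the claimed value; the statement should therefore be read as a supremum, which is all that is needed downstream, since \cref{thm:non-positivity} only invokes the upper bound $\sum_i f_ip_i\le f_{\max}(F,P,m)$.

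I expect the main obstacle to be the first step --- cleanly establishing the equivalence with the scalar problem, i.e.\ that feasibility of $(f_i,p_i)$ reduces exactly to ``$x_i\le m$ and $\sum_i\sqrt{x_i}\le\sqrt{FP}$'' with positivity and the ordering imposing nothing further. Once that is in place, both \cref{lem:alg1} and \cref{lem:alg2} fall out of the same convex-maximization-at-a-vertex principle, and the remaining work is the routine floor-function bookkeeping matching $\lfloor\sqrt{FP/m}\rfloor$ to the stated intervals.
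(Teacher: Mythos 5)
Your proof is correct, but it takes a genuinely different route from the paper's. The paper argues by induction on $q$: it first shows that an optimal solution must have $f_1p_1=\min(m,FP)$, then notes that the value of the residual problem on the remaining $q-1$ coordinates depends only on the product $(F-f_1)(P-p_1)$, chooses $f_1=\sqrt{mF/P}$ and $p_1=\sqrt{mP/F}$ to maximize that product (giving $(\sqrt{FP}-\sqrt{m})^2$), and unwinds the recursion, with an explicit quadratic-formula base case at $q=2$. You instead pass to the products $x_i=f_ip_i$, show via Cauchy--Schwarz that, for the purpose of the supremum, admissibility reduces exactly to $x_i\le m$ and $\sum_i\sqrt{x_i}\le\sqrt{FP}$, and then maximize the convex function $\sum_i y_i^2$ with $y_i=\sqrt{x_i}$ over a box-plus-budget polytope by an extreme-point/exchange argument. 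Your reduction buys a uniform treatment of \cref{lem:alg1} and \cref{lem:alg2} from a single principle, makes manifest the fact (asserted but only noted in passing in the paper) that the answer depends on $F$ and $P$ only through $FP$, and replaces the inductive computation with routine water-filling; the paper's induction is more elementary but needs the separate greedy claim about the first coordinate. The two caveats you flag are real and handled appropriately: the realizability direction (scaling when $\sum_i\sqrt{x_i}=\sqrt{FP}$, an intermediate-value argument otherwise) is the one step that genuinely needs proof and you supply it, and the extremal configurations have zero coordinates, so under the lemma's stated positivity the value is a supremum rather than an attained maximum --- a defect shared by the paper's own construction and harmless downstream, where only the upper bound is used.
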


We claim that if one fixes the two quantities $f^*_wp_w = \alpha$ and $\sum_{j=1}^w p_j = \beta$, then using observation 1 and \cref{lem:alg1,lem:alg2}, the maximum value of the two terms in the statement of the theorem can be written in terms of $\alpha$ and $\beta$. Note that with $f^*_wp_w=\alpha$, both expressions are maximized when $\sum_{j=w+1}^q f^*_j p_j$ is maximized and according to \cref{lem:alg2}, that happens when 
$(\sum_{i=w+1}^q f^*_i)(\sum_{i=w+1}^q p_i) = (\sum_{i=w+1}^q f^*_i)(1-\beta)$ is maximized or equivalently $\sum_{i=1}^w f^*_i$ is as small as possible.

Now, note that for $j \leq w$ all $f^*_j p_j$'s are larger than or equal to $\alpha$. Then according to \cref{lem:alg1}, 
$\frac{(\sum_{i=1}^w f^*_i)\times \beta}{w^2} \geq \alpha \Rightarrow \sum_{i=1}^w f^*_i \geq \frac{\alpha w^2}{\beta}$.

All in all, the above-mentioned observations and lemmas boil down the two parts of theorem statement to the following:

For any $\alpha, \beta \in [0, 1]$ where $\frac{\alpha w^2}{\beta} \leq 1$:
\begin{enumerate}
    \item If $z$ is even,
    $$\frac{3z+2}{2}\alpha + (2z+1)f_{\max}\left(1-\frac{\alpha w^2}{\beta}, 1-\beta, \alpha\right) \leq 1 $$

    \item If $z$ is odd,
    $$\frac{z+1}{2}\alpha + (2z+1)f_{\max}\left(1-\frac{\alpha w^2}{\beta}, 1-\beta, \alpha\right) \leq 1 $$
\end{enumerate}
    
Note that to maximize $f_{\max}$ term for a given $\alpha$, one needs to maximize $\left(1-\frac{\alpha w^2}{\beta}\right)\left(1-\beta\right)$. This is attained with the following choice of $\beta = \sqrt{\alpha}w$. With this choice of $\beta$ we have

\begin{eqnarray*}
f_{\max}\left(1-\sqrt{\alpha}w, 1-\sqrt{\alpha}w, \alpha\right)
&=&\left\{
\begin{array}{l l}
(1-\sqrt{\alpha}w)^2 & \textnormal{if $(1-\sqrt{\alpha}w)^2 \leq \alpha$} \\ \\
u\alpha+(1-w\sqrt{\alpha}-u\sqrt{\alpha})^2
& \textnormal{if $\frac{(1-\sqrt{\alpha}w)^2}{(u+1)^2} \leq \alpha < \frac{(1-\sqrt{\alpha}w)^2}{u^2}$}\\
& \textnormal{  for $1\leq u\leq q-w$}\\
\alpha q & \textnormal{if $\alpha < \frac{(1-\sqrt{\alpha}w)^2}{(q-w)^2}$}
\end{array}
\right.\\
&=&\left\{
\begin{array}{l l}
(1-\sqrt{\alpha}w)^2 & \textnormal{if $\alpha\in\left[\frac{1}{(w+1)^2}, \frac{1}{w^2}\right]$} \\ \\
u\alpha+(1-(w+u)\sqrt{\alpha})^2 & \textnormal{if $\alpha\in\left[\frac{1}{(w+u+1)^2}, \frac{1}{(w+u)^2}\right)$}\\
 & \textnormal{  for $1\leq u\leq q-w$}\\
\alpha q & \textnormal{if $\alpha < \frac{1}{q^2}$}
\end{array}
\right.
\end{eqnarray*}

Note that we require that $\beta \leq 1 \Rightarrow  \alpha \leq \frac{1}{w^2}$. Therefore in the second line the regions for $\alpha$ are truncated at $\frac{1}{w^2}$.

As the next step, we plug in the above description for $f_{\max}$ into each of the two terms and derive a piece-wise characterization of them based on $\alpha$. 

\begin{enumerate}
    \item If $z$ is even,

    \begin{eqnarray*}
    LHS=\left\{
\begin{array}{l l}
\frac{3z+2}{2}\alpha + (2z+1)(1-\sqrt{\alpha}w)^2 & \textnormal{if $\alpha\in\left[\frac{1}{(w+1)^2}, \frac{1}{w^2}\right]$} \\ \\
\frac{3z+2}{2}\alpha + (2z+1)\left[u\alpha+(1-(u+w)\sqrt{\alpha})^2\right] & \textnormal{if $\alpha\in\left[\frac{1}{(w+u+1)^2}, \frac{1}{(w+u)^2}\right)$}\\
&\textnormal{ for $u=1, 2, \cdots, q-w$}\\
\frac{3z+2}{2}\alpha + (2z+1) \alpha q & \textnormal{if $\alpha < \frac{1}{q^2}$}
\end{array}
\right.
    \end{eqnarray*}
        
    Note that this function is continuous. The derivative in $\alpha < \frac{1}{q^2}$ region is positive meaning that the function is increasing in that region.

For the region $\alpha\in\left[\frac{1}{(w+1)^2}, \frac{1}{w^2}\right]$, 



$$\frac{\partial^2}{\partial \alpha^2}\left[\frac{3z+2}{2}\alpha + (2z+1)(1-\sqrt{\alpha}w)^2\right] = (z/2+1)(z+1/2)\alpha^{-3/2} > 0$$

Therefore, the function is concave in this region; giving that the maximum value in this region is obtained either at $\frac{1}{(w+1)^2}$ or $\frac{1}{w^2}$. Note that we can easily exclude $\frac{1}{w^2}$ as LHS function has a value of zero there.


We now analyze the derivative for the regions of form
$\alpha\in\left[\frac{1}{(w+u+1)^2}, \frac{1}{(w+u)^2}\right]$

\begin{eqnarray*}
 \frac{\partial}{\partial \alpha}\left[\frac{3z+2}{2}\alpha + (2z+1)\left(u\alpha+(1-(u+w)\sqrt{\alpha})^2\right)\right]\\ =
\frac{3z+2+(4z+2)((u+w)^2+u)}{2}-\frac{(u+w)(2z+1)}{\sqrt{\alpha}}
\end{eqnarray*}
and hence, 
$$ \frac{\partial^2}{\partial \alpha^2}\left[\frac{3z+2}{2}\alpha + (2z+1)\left(u\alpha+(1-(u+w)\sqrt{\alpha})^2\right)\right] =
(u+w)(2z+1) \alpha^{-3/2}
$$
and is always positive. Giving that within each region of form 
$\alpha\in\left[\frac{1}{(w+u+1)^2}, \frac{1}{(w+u)^2}\right]$
the expression is concave and attains no local maximum. The above observations along with the fact that this piece-wise function is continuous, gives that the global maximum is necessarily of the form
$\alpha = \frac{1}{(w+u+1)^2}$ for some $u=0, 1, 2, \cdots, q-w$. Note that at such point the value of LHS is 
\begin{eqnarray*}
LHS(u) &=& \frac{3z+2}{2}\alpha + (2z+1)\left[
u\alpha + (1-(w+u)\sqrt{\alpha})^2\right] \Bigg|_{\alpha = \frac{1}{(w+u+1)^2}}\\
&=& \frac{3z+2}{2(w+u+1)^2} + (2z+1)\frac{u+1}{(w+u+1)^2}\\
&=& \frac{3z+2+2(2z+1)(u+1)}{2(w+u+1)^2}
= \frac{7z+4+2(2z+1)u}{2(w+u+1)^2}
\end{eqnarray*}

To find the optimum $u$, we take derivative with respect to $u$. 
\begin{eqnarray*}
&&\frac{\partial}{\partial u} LHS(u) = 0\\
&\Leftrightarrow& 2(2z+1)2(w+u+1)^2 - (7z+4+2(2z+1)u)4(w+u+1) = 0\\
&\Leftrightarrow& (2z+1)(w+u+1) - (7z+4+2(2z+1)u) = 0\\
&\Leftrightarrow& -u(2z+1) + (2z+1)(z/2+2) - (7z+4) = 0\\
&\Leftrightarrow& u = \frac{(2z+1)(z/2+2) - (7z+4)}{2z+1}
= \frac{z^2 + \frac{9}{2} z + 2 - 7z - 4}{2z+1}
\\
&\Leftrightarrow& u = \frac{(2z+1)(z/2+2) - (7z+4)}{2z+1}
= \frac{z^2 - \frac{5}{2} z -2}{2z+1}=\frac{z}{2}-\frac{3z+2}{2z+1}
\end{eqnarray*}
Note that the term $\frac{3z+2}{2z+1}$ is always between 1 and 2. Hence, the maximum is achieved either at $u=\frac{z}{2}-1$ or $u=\frac{z}{2}-2$. We simply compute $LHS(u)$ for both of these values to obtain the maximum.

$$LHS\left(\frac{z}{2}-1\right) = \frac{7z+4+2(2z+1)(z/2-1)}{2(z+1)^2}
= \frac{2z^2+4z+2}{2(z+1)^2} = 1$$
and
$$LHS\left(\frac{z}{2}-2\right) = \frac{7z+4+2(2z+1)(z/2-2)}{2z^2} = 
\frac{2z^2}{2z^2} = 1$$
meaning that, indeed, the maximum achievable value for even $z$ is 1. This finishes the proof for even $z$s. The maximum value 1 can be achieved by $f^*_1=\cdots = f^*_m = \frac{1}{m} = p_1=\cdots = p_m$ and all other values equal to zero for $m = z$ or $z+1$.

    \item If $z$ is odd,
    \begin{eqnarray*}
    LHS=\left\{
\begin{array}{l l}
\frac{z+1}{2}\alpha + (2z+1)(1-\sqrt{\alpha}w)^2 & \textnormal{if $\alpha\in\left[\frac{1}{(w+1)^2}, \frac{1}{w^2}\right]$} \\ \\
\frac{z+1}{2}\alpha + (2z+1)\left[u\alpha+(1-(u+w)\sqrt{\alpha})^2\right] & \textnormal{if $\alpha\in\left[\frac{1}{(w+u+1)^2}, \frac{1}{(w+u)^2}\right)$}\\
& \textnormal{  for $u=1, 2, \cdots, q-w$}\\
\frac{z+1}{2}\alpha + (2z+1) \alpha q & \textnormal{if $\alpha < \frac{1}{q^2}$}
\end{array}
\right.
    \end{eqnarray*}

    Note that this function is continuous. The derivative in $\alpha < \frac{1}{q^2}$ region is positive meaning that the function is increasing in that region.

    Similar to the even $z$ case, for regions $\alpha\in\left[\frac{1}{(w+1)^2}, \frac{1}{w^2}\right]$ and $\alpha\in\left[\frac{1}{(w+u+1)^2}, \frac{1}{(w+u)^2}\right]$, the second derivative is positive.

$$\frac{\partial^2}{\partial \alpha^2}\left[\frac{z+1}{2}\alpha + (2z+1)(1-\sqrt{\alpha}w)^2\right] = \frac{(z+1)(z+1/2)}{2}\alpha^{-3/2} > 0$$

$$ \frac{\partial^2}{\partial \alpha^2}\left[\frac{z+1}{2}\alpha + (2z+1)\left(u\alpha+(1-(u+w)\sqrt{\alpha})^2\right)\right] =
(u+w)(2z+1) \alpha^{-3/2}
$$

Meaning that, once again, the global maximum is attained at a point necessarily of the form
$\alpha = \frac{1}{(w+u+1)^2}$ for some $u=0, 1, 2, \cdots, q-w$. Note that at such point the value of LHS is 

\begin{eqnarray*}
LHS(u) &=& \frac{z+1}{2}\alpha + (2z+1)\left[
u\alpha + (1-(w+u)\sqrt{\alpha})^2\right] \Bigg|_{\alpha = \frac{1}{(w+u+1)^2}}\\
&=& \frac{z+1+2(2z+1)(u+1)}{2(w+u+1)^2}
= \frac{5z+3+2(2z+1)u}{2(w+u+1)^2}
\end{eqnarray*}
\end{enumerate}

To find the optimum $u$, we take derivative with respect to $u$. 
\begin{eqnarray*}
&&\frac{\partial}{\partial u} LHS(u) = 0\\
&\Leftrightarrow& 2(2z+1)2(w+u+1)^2 - (5z+3+2(2z+1)u)4(w+u+1) = 0\\
&\Leftrightarrow& (2z+1)(w+u+1) - (5z+3+2(2z+1)u) = 0\\
&\Leftrightarrow& -u(2z+1) + (2z+1)\frac{z+3}{2} - (5z+3) = 0\\
&\Leftrightarrow& u = \frac{(2z+1)(z+3)/2 - (5z+3)}{2z+1}
= \frac{z^2 + \frac{7}{2} z + 3/2 - 5z - 3}{2z+1}
\\
&\Leftrightarrow& u = \frac{z^2 - \frac{3}{2} z -\frac{3}{2}}{2z+1}=\frac{z-1}{2}-\frac{z+1}{2z+1}
\end{eqnarray*}

Note that the term $\frac{z+1}{2z+1}$ is always between 0 and 1. Hence, the maximum is achieved either at $u=\frac{z-1}{2}$ or $u=\frac{z-3}{2}$. We simply compute $LHS(u)$ for both of these values to obtain the maximum.

$$LHS\left(\frac{z-1}{2}\right) = \frac{5z+3+2(2z+1)(z-1)/2}{2(z+1)^2}
= \frac{2z^2+4z+2}{2(z+1)^2} = \frac{z^2+2z+1}{z^2+2z+1} = 1$$

and

$$LHS\left(\frac{z-3}{2}\right) = \frac{5z+3+2(2z+1)(z-3)/2}{2z^2}
= \frac{2z^2}{2z^2} = 1$$

meaning that, indeed, the maximum achievable value for odd $z$ is 1. This finishes the proof. The maximum value 1 in the case of odd $z$ can be achieved by setting $f^*_1=\cdots = f^*_m = \frac{1}{m} = p_1=\cdots = p_m$ and all other values equal to zero for $m = z$ or $z+1$.
\popQED{}

\subsubsection{Proof of Auxiliary \cref{lem:alg1,lem:alg2}}\label{sec:proof-of-aux-lemmas}
\begin{proof}[{\bf Proof of \cref{lem:alg1}}]
We prove this by induction on $q$. For the base case of $q=1$ correctness is trivial. For any $q>1$, we want to find the $f_q$ and $p_q$ that maximize $f_qp_q$ and for which an appropriate $f_1, \cdots, f_{q-1}$ and $p_1, \cdots, p_{q-1}$ exists. Note that 
$\sum_{i=1}^{q-1} f_i = 1-f_q$
and
$\sum_{i=1}^{q-1} p_i = 1-p_q$. 
Therefore, by the induction hypothesis, the largest possible amount that $f_{q-1}p_{q-1}$ can take would be
$\frac{(F-f_q)(P-p_q)}{(q-1)^2}$. This gives that a pair $(f_q, p_q)$ are feasible in equations described in the lemma's statement if and only if
$f_qp_q \leq \frac{(F-f_q)(P-p_q)}{(q-1)^2}$.

Note that
\begin{eqnarray*}
&&f_qp_q \leq \frac{(F-f_q)(P-p_q)}{(q-1)^2}\\
&\Rightarrow& f_q\left(p_q + \frac{P-p_q}{(q-1)^2}\right) \leq
\frac{F(P-p_q)}{(q-1)^2}\\
&\Rightarrow& f_q \leq
\frac{F(P-p_q)}{p_q(q-1)^2 + P-p_q}\\
&\Rightarrow& f_qp_q \leq
\frac{F(P-p_q)p_q}{p_q(q^2-2q) + P}
\end{eqnarray*}

We know determine the maximum value of the right hand side over the choice of $p_q$ by setting the derivative to zero.
\begin{eqnarray*}
&&\frac
{(FP-2Fp_q)(p_q(q^2-2q) + P) - F(P-p_q)p_q(q^2-2q)}
{(p_q(q^2-2q) + P)^2} = 0\\
&\Rightarrow&
P^2
-2P p_q 
- (q^2-2q) p_q^2 = 0\\
&\Rightarrow&
p_q = \frac{-2P \pm \sqrt{4P^2+4P^2(q^2-2q)}}{2(q^2-2q)}
= \frac{-P \pm \sqrt{P^2(q^2-2q+1)}}{q^2-2q}\\
&&
= \frac{-P \pm P(q-1)}{q^2-2q}
\end{eqnarray*}

The only positive solution is 
$p_q = \frac{P}{q}$ that yields $f_qp_q = \frac{FP}{q^2}$ with $f_q = \frac{F}{q}$. Note that by the induction hypothesis, this is obtained only when
$p_i = \frac{P-p_q}{q-1}=\frac{P}{q}$ and $f_i = \frac{F-f_q}{q-1}=\frac{F}{q}$ for all $i=1, 2, \cdots, q-1$.
\end{proof}

\begin{proof}[{\bf Proof of \cref{lem:alg2}}]
We start with the simple observation that in any optimal solution in which $f_1 p_1 = \min(m, FP)$. Assume for the sake of contradiction that this is not the case. Let $j$ be the smallest integer such that $f_j p_j > 0$. Clearly, either $f_1 > f_j$ or $p_1 > p_j$. Without loss of generality assume that the former holds. Then, it is easy to verify that there exists a small enough $\eps > 0$ such that reducing $f_j$ by $\eps$ and increasing $f_1$ by $\eps$ yields a strictly larger solution and contradicts the optimality assumption.

Having this observation, we prove the lemma by induction over $q$. As the basis of the induction, take the case where $q=2$. If $m \geq PQ$, then using the above-mentioned observation, setting $f_1 = F$, $p_1 = P$, and the rest of the variables to zero yields the optimal solution. Otherwise, the observation rules that $f_1$ and $p_1$ must be chosen such that $f_1 p_1 = m$. A straight forward calculation shows that with the following choice of $f_1$ and $p_1$, $f_1 p_1 = f_2 p_2 = m$ that is trivially an optimal solution.
$$ f_1 = \frac{FP+\sqrt{F^2P^2-4mFP}}{2P}, \quad p_1 = \frac{FP-\sqrt{F^2P^2-4mFP}}{2F}$$ 
$$ f_2 = \frac{FP-\sqrt{F^2P^2-4mFP}}{2P}, \quad p_2 = \frac{FP+\sqrt{F^2P^2-4mFP}}{2F}$$ 

For the induction step, assume that the lemma holds for $q-1$. Once again we use the observation to determine $f_1$ and $p_1$ first. If $FP \leq m$, setting $f_1=F$, $p_1 = P$, and all other values to zero gives the optimal solution. Otherwise, we have to choose $f_1$ and $p_1$ such that $f_1 p_1 = m$. We can use the induction hypothesis for $q'=q-1$ to set the rest of the variables with parameters $m' = m$, $F' = F-f_1$, and $P' = P- p_1$. Note that $f_{\max}$ is actually a function of $FP$ and not $F$ and $P$. Therefore, in the optimal solution $f_1$ and $p_1$ are chosen such that $f_1p_1=m$ and $(F-f_1)(P-p_1) = FP + m -f_1P -p_1F$ is maximized, or equivalently, $f_1P + p_1F$ is minimized. Note that $f_1P + p_1F = f_1P + \frac{mF}{f_1}$. Hence one has to choose $f_1 = \sqrt{\frac{mF}{P}}$ and $p_1 = \sqrt{\frac{mP}{F}}$.

With this choice for $f_1$ and $p_1$, 
$F'P' = FP + m - 2\sqrt{mFP} = (\sqrt{FP}-\sqrt{m})^2$. Note that if 
$m < \frac{FP}{u^2} \Leftrightarrow u^2 < \frac{FP}{m} \Leftrightarrow u < \frac{\sqrt{FP}}{\sqrt{m}} = \frac{\sqrt{F'P'}}{\sqrt{m}}+1 \Leftrightarrow m < \frac{F'P'}{(u-1)^2}$.

Hence, if $\frac{FP}{(u+1)^2} \leq m < \frac{FP}{u^2}$ for some $u=2,\cdots,q-2$, then $\frac{F'P'}{u^2} \leq m < \frac{F'P'}{(u-1)^2}$ and 
$f_{\max}(F, P, m) = m + (u-1)m + (\sqrt{F'P'}-(u-1)\sqrt{m})^2 = um + (\sqrt{FP} - m)^2$.

If $\frac{FP}{4} \leq m < FP$, $f_{\max} = f_1p_1 + f_2p_2 = m + (F-f_1)(P-p_1) = m + (\sqrt{FP}-\sqrt{m})^2$.

Finally, if $m<\frac{FP}{q^2}\Leftrightarrow m<\frac{(\sqrt{F'P'}+\sqrt{m})^2}{q^2} \Leftrightarrow m < \frac{F'P'}{(q-1)^2}$ and, therefore, $f_{\max} = m + m(q-1) = mq$.
\end{proof}
}

\newpage

\bibliographystyle{plain}
\bibliography{bibliography}

\end{document}